\documentclass[ss,noshowframe,preprint]{imsart}

\RequirePackage{amsthm,amsmath,amsfonts,amssymb}
\RequirePackage[authoryear]{natbib}
\RequirePackage[colorlinks,citecolor=blue,urlcolor=blue]{hyperref}
\RequirePackage{graphicx}
\usepackage{tikz}
\usetikzlibrary{shapes.geometric, arrows.meta, positioning, fit, backgrounds, decorations.pathreplacing}
\usepackage{mathtools}
\usepackage{array}
\usepackage{etoolbox}
\usepackage{dsfont}
\usepackage{booktabs}
\usepackage{algorithm}
\usepackage{algpseudocode}
\usepackage{cleveref}

\startlocaldefs

\theoremstyle{plain}

\newtheorem{theorem}{Theorem}[section]
\newtheorem{lemma}[theorem]{Lemma}
\newtheorem{proposition}[theorem]{Proposition}

\crefname{lemma}{lemma}{lemmas}
\crefname{proposition}{proposition}{propositions}

\theoremstyle{definition}
\newtheorem{definition}{Definition}
\newtheorem{example}{Example}
\newcommand{\exampleendsymbol}{\hfill\scalebox{1.3}{$\diamond$}}
\AtEndEnvironment{example}{%
  \exampleendsymbol
  \vspace{0.5\baselineskip}
}

\theoremstyle{remark}

\newtheorem{remark}{Remark}

\newcommand{\inlinedef}[1]{\textit{#1}}

\DeclareMathOperator*{\argmin}{argmin}
\newcommand{\Exp}[1]{\exp\mathopen{}\left\{#1\right\}\mathclose{}}
\newcommand*{\norm}[1]{\left\lVert#1\right\rVert}

\newcommand{\E}{\mathbb{E}}
\newcommand{\R}{\mathbb{R}}
\newcommand{\Var}{\mathrm{Var}}
\newcommand{\Cov}{\mathrm{Cov}}
\newcommand{\law}{\mathrm{law}}
\newcommand{\given}{\mid}
\newcommand{\iid}{\mathrm{iid}}
\newcommand{\simiid}{\overset{\iid}{\sim}}
\newcommand{\indicator}{\mathds{1}}
\newcommand{\Def}{\coloneqq}

\newcommand{\Gaussian}{\mathcal{N}}
\newcommand{\lognormal}{\mathrm{Lognormal}}
\newcommand{\GP}{\mathcal{GP}}
\newcommand{\GaussianQuantile}{q}
\newcommand{\BigO}{\mathcal{O}}
\newcommand{\cst}{\varsigma}
\let\oldd\d\renewcommand\d{\relax\ifmmode\mathrm{d}\else\oldd\fi}

\newcommand{\Par}{\theta}
\newcommand{\ParTrue}{\Par_{\star}}
\newcommand{\ParBatch}{\boldsymbol{\Par}}
\newcommand{\parSpace}{\Theta}
\newcommand{\dimPar}{D}
 
\newcommand{\obs}{y_o}
\newcommand{\noise}{\epsilon} 
\newcommand{\covNoise}{\Sigma} 
\newcommand{\obsSpace}{\mathbb{Y}}
\newcommand{\dimObs}{P}

\newcommand{\fwd}{\mathcal{G}}
\newcommand{\lik}{\mathsf{L}}
\newcommand{\priorDens}{\pi_0}
\newcommand{\normCst}{Z}
\newcommand{\postDens}{\pi}
\newcommand{\jointDens}{\widetilde{\postDens}}

\newcommand{\simObsProcess}{\varphi}
\newcommand{\simObsDist}{\eta}
\newcommand{\simObsSpace}{\Phi}
\newcommand{\trainData}{\mathcal{D}}
\newcommand{\Ndesign}{N}
\newcommand{\Em}[2][\Ndesign]{{#2}_{#1}}
\newcommand{\simObsPredProcess}{\Em{\simObsProcess}}
\newcommand{\simObsPredDist}{\Em{\eta}}
\newcommand{\target}{\mathsf{f}}
\newcommand{\targetEm}[1][\Ndesign]{f_{#1}}
\newcommand{\targetTraj}{f}
\newcommand{\targetRange}{\mathbb{F}}
\newcommand{\emMean}[1][\Ndesign]{\mu_{#1}}
\newcommand{\kerMat}{K}
\newcommand{\emSD}[1][\Ndesign]{s_{#1}}
\newcommand{\emVar}[1][\Ndesign]{s^2_{#1}}
\newcommand{\emKer}[1][\Ndesign]{k_{#1}}
\newcommand{\emDist}[1][\Ndesign]{\nu_{#1}}
\newcommand{\emE}{\E_{\emDist}}
\newcommand{\quantileProb}{\alpha}
\newcommand{\emQ}{\mathbb{Q}^\quantileProb}
\newcommand{\jointEm}[1][\Ndesign]{\Em[#1]{\jointDens}}
\newcommand{\postEm}{\Em{\postDens}}
\newcommand{\likEm}{\Em{\lik}}
\newcommand{\normCstEm}{\Em{\normCst}}
\newcommand{\idxBasis}{r}
\newcommand{\dimBasis}{R}
\newcommand{\basisVec}{\psi}

\newcommand{\ep}{\mathrm{ep}}
\newcommand{\eup}{\mathrm{eup}}
\newcommand{\postApproxMean}{\postEm^{\mathrm{mean}}}
\newcommand{\postApproxEP}{\postEm^{\ep}}
\newcommand{\postApproxEUP}{\postEm^{\eup}}
\newcommand{\postApproxQuantile}{\postEm^{\quantileProb}}
\newcommand{\jointApproxMean}{\jointEm^{\mathrm{mean}}}
\newcommand{\jointApproxEUP}{\jointEm^{\eup}}

\newcommand{\Nbatch}{B}
\newcommand{\batchResponse}{\boldsymbol{\simObsProcess}}
\newcommand{\designResponse}{\boldsymbol{\simObsProcess}_{\Ndesign}}
\newcommand{\Naugment}{\Ndesign + \Nbatch}
\newcommand{\designIdx}{n}
\newcommand{\design}{\boldsymbol{\Par}_{\Ndesign}}
\newcommand{\acq}{\mathcal{A}}
\newcommand{\ParBatchAug}{\ParBatch_{\Naugment}}
\newcommand{\ParBatchOpt}{\boldsymbol{\ParBatch}_{\star}}
\newcommand{\surmetric}{\mathcal{H}}
\newcommand{\CovComb}[1][\Ndesign]{C_{#1}}
\newcommand{\weightdens}{\rho}
\newcommand{\varInflation}{\tau}

\newcommand{\loss}{\mathcal{L}}
\newcommand{\qDens}{q}
\newcommand{\qDensOpt}{q_\star}
\newcommand{\qFunc}{\widetilde{q}}
\newcommand{\qFuncOpt}{\widetilde{q}_\star}
\newcommand{\qSpace}{\mathcal{Q}}
\newcommand{\qFuncSpace}{\widetilde{\mathcal{Q}}}
\newcommand{\risk}{\mathcal{R}}
\newcommand{\bayesrisk}[1][\Ndesign]{\risk_{#1}}

\newcommand{\sampleIndex}{k}
\newcommand{\NSample}{K}

\newcommand{\accProbMH}{\alpha}

\newcommand{\approxPost}{\widehat{\postDens}}
\newcommand{\approxLik}{\widehat{\lik}}
\newcommand{\numMCSamp}{M}
\newcommand{\mcSampIdx}{m}
\newcommand{\latent}{z}
\newcommand{\latentDens}{p_z}
\newcommand{\predObs}{y}
\newcommand{\smoothingKernel}{\kappa}
\newcommand{\summaryStat}{\mathcal{S}}
\newcommand{\summaryDim}{S}
\newcommand{\bandwidth}{\epsilon}
\newcommand{\SLMean}{m}
\newcommand{\SLCov}{C}
\newcommand{\finiteDimPar}{\phi}
\newcommand{\condDensEst}[1][\finiteDimPar]{q_{#1}}

\newcommand{\Time}{t}
\newcommand{\state}{x}
\newcommand{\odeRHS}{F}
\newcommand{\timeStart}{\Time_0}
\newcommand{\timeEnd}{\Time_1}
\newcommand{\stateIC}{\state_{\circ}}
\newcommand{\solutionOp}{\mathcal{M}}
\newcommand{\obsOp}{\mathcal{H}}

\newcommand{\dimState}{S}
\newcommand{\stateIndex}{s}

\newcommand{\NTimeStep}{K}
\newcommand{\indexState}[2][\stateIndex]{{#2}^{({#1})}}
\newcommand{\indexTime}[2][\timeIndex]{{#2}_{#1}}

\endlocaldefs

\begin{document}
\begin{frontmatter}
\title{Surrogate-Based Bayesian Inference: Uncertainty Quantification and Active Learning}
\runtitle{Surrogate-Based Bayesian Inference}

\begin{aug}

\author[CDS]{\fnms{Andrew Gerard}~\snm{Roberts}\ead[label=e1]{arober@bu.edu}},
\author[EE]{\fnms{Michael C.}~\snm{Dietze}\ead[label=e2]{dietze@bu.edu}}, \\
\and
\author[CDS,MS]{\fnms{Jonathan H.}~\snm{Huggins}\ead[label=e3]{huggins@bu.edu}}

\address[CDS]{Faculty of Computing and Data Sciences,
Boston University\printead[presep={,\ }]{e1}}

\address[EE]{Department of Earth and Environment, 
Boston University\printead[presep={,\ }]{e2}}

\address[MS]{Department of Mathematics and Statistics, 
Boston University\printead[presep={,\ }]{e3}}
\end{aug}

\begin{abstract}
Surrogate models---also called emulators---are widely used to facilitate Bayesian 
inference in settings where computational costs preclude the use of 
standard posterior inference algorithms. 
Their deployment is now standard practice across many
scientific domains.
However, integrating surrogates in statistical analyses
introduces unique challenges that complicate established
Bayesian workflow principles. 
While significant progress has been 
made in addressing these issues,
the relevant developments are scattered across several distinct research communities, with different emphases and perspective.
We present a unifying review 
that synthesizes the literature into a coherent framework, aiming to 
benefit both practitioners and methods developers. We place particular emphasis 
on propagating surrogate uncertainty and sequentially refining emulators via 
active learning, two key components of a robust surrogate-based Bayesian workflow.
\end{abstract}

\begin{keyword}[class=MSC]
\kwd[Primary ]{62F15}
\kwd{65C60}
\kwd[; secondary ]{65C05}
\end{keyword}

\begin{keyword}
\kwd{surrogate}
\kwd{emulator}
\kwd{metamodel}
\kwd{inverse problem}
\kwd{uncertainty quantification}
\kwd{uncertainty propagation}
\kwd{active learning}
\kwd{sequential design}
\kwd{Bayesian inference}
\kwd{Gaussian process}
\kwd{likelihood-free inference}
\kwd{simulation-based inference}
\kwd{probabilistic modeling}
\kwd{Markov chain Monte Carlo}
\end{keyword}

\end{frontmatter}

\tableofcontents

\section{Introduction} \label{sec:intro}
In fields ranging from climate science to astrophysics, the scientific process increasingly relies on complex, 
computationally expensive computer simulations \citep{ESM_modeling_2pt0,ESMReview,VarmaBlackHole2019,npeAstrophysics}. 
Many such computer models contain parameters with values that cannot be specified
a priori or directly estimated experimentally. Therefore, a critical task in many scientific workflows 
is to infer parameter values that are consistent with observed data. Bayesian inference offers 
a principled framework for parameter estimation, providing mechanisms to 
quantify uncertainty and balance prior information with observed data \citep{gelmanBDA}.
However, posterior approximation algorithms such as Markov chain Monte Carlo (MCMC; \citet{mcmcHandbook})
and Approximate Bayesian Computation (ABC; \citet{ABCPrimer})
require running the simulator tens or hundreds of thousands of times.
When a single simulation takes minutes or hours, this cost renders these standard approaches intractable.
To overcome this barrier, a rich literature has developed around \textit{surrogate modeling} (i.e., \textit{emulation}). 
The core idea is to replace the expensive simulator with a computationally-cheap statistical approximation, trained
on a sparse set of simulator runs \citep{KOH,gramacy2020surrogates}. The emulator is then used in place 
of the expensive model, enabling the application of otherwise intractable inference algorithms.

While the premise is straightforward, deploying surrogates within a Bayesian framework presents 
distinct methodological and conceptual challenges.
In broader statistical practice, 
established principles of \textit{Bayesian workflow} help practitioners navigate the 
complexities of iterative model building 
\citep{BayesianWorkflow,AmortizedBayesianWorkflow,VisBayesianWorkflow}. 
However, the incorporation of surrogates introduces complexities that 
extend beyond standard considerations. To bridge this gap, we consolidate a 
wide body of research in order to lay out the principles
underlying a robust workflow for surrogate-based Bayesian inference.

Foremost among the complexities faced in this setting 
is the sheer modeling challenge of 
fitting an accurate emulator with sparse training data. 
The quantities typically targeted by emulators, such as
simulator outputs or the log-likelihood surface, tend to be high-dimensional or nonstationary, 
which poses a challenge for standard predictive models \citep{nonstationaryGP}. This difficulty is compounded 
by the fact that the posterior distribution often concentrates in a small, unknown subset 
of the prior support \citep{Li_2014,murrayNPE,PCEBIP}. 
The sparse set of simulator training data may completely miss this ``needle in a haystack.'' An emulator
that is accurate in a global, prior-averaged sense may nonetheless fail to provide a reasonable
approximation in the high-density region \citep{SinsbeckNowak}. Even if these issues are adequately addressed,
the surrogate will invariably be an imperfect representation of the true model. Posterior estimates derived 
using the surrogate will inherit these imperfections, potentially leading to biased and overconfident 
inference \citep{BurknerSurrogate,RobertsUncProp}.

Despite remaining challenges, significant progress has been made in mitigating these concerns.
A driving feature of these advancements is the use of emulators equipped with a notion of predictive 
uncertainty; i.e., models that ``know what they don't know'' \citep{epistemicNN}. The use 
of uncertainty-aware surrogates offers several benefits and opportunities. Emulator uncertainty
can be propagated through the Bayesian workflow, enabling well-calibrated posterior inference that 
properly accounts for the additional source of uncertainty \citep{BilionisBayesSurrogates}.
Acknowledging the full scope of uncertainty is often crucial for downstream decision-making
and scientific inquiry \citep{DecisionMakingUncertainty}. 
Moreover, probabilistic surrogates can be used to inform 
the efficient allocation of computational resources. As noted above, passively sampling
simulator evaluation locations often yields training data that is poorly 
suited to posterior estimation \citep{Kandasamy2017}. When constrained by a limited computational budget, 
it is therefore crucial to carefully design the simulation schedule with respect to 
the explicit goal of approximating the posterior \citep{wang2018adaptive,Surer2023sequential}.
An uncertainty-aware emulator facilitates the identification of parameter locations with 
the greatest potential to improve the posterior estimate. This idea forms the
basis of \textit{active learning} algorithms that sequentially allocate simulator runs
optimized for posterior refinement \citep{KandasamyActiveLearning2015,VehtariParallelGP,weightedIVAR}.

\subsection{Overview and Paper Organization}
The topics highlighted above are highly interdisciplinary, spanning the statistical, 
applied mathematics, engineering, and machine learning literatures. 
In addition, the use of surrogates in practical applications has become 
standard across many scientific domains. By synthesizing these widespread 
developments, we aim to provide a unified treatment that benefits both practitioners and 
methods developers alike. We begin by providing a high-level overview of the paper 
structure.

To make the setting 
concrete, consider the task of inferring parameters $\Par$ using observations 
$\obs$ under a Bayesian model $p(\Par, \predObs) = \priorDens(\Par)p(\predObs \given \Par)$.
We assume that characterizing the posterior density 
$\postDens(\Par) = p(\Par \given \obs)$ requires evaluations of an expensive 
function $\target(\Par)$, which typically implies running a computer simulation.
The simulator may be deterministic---in which case we directly observe the value
of $\target(\Par)$ when the function is queried---or stochastic, such that we 
only observe $\simObsProcess(\Par)$, a noisy and potentially indirect observation
of $\target(\Par)$. \Cref{sec:background} motivates this setup, describing the
challenges imposed by expensive models in both the deterministic and noisy
(i.e., simulation-based/likelihood-free) settings.
Since $\target$ is assumed to be the computational bottleneck,
this review focuses on methods that make efficient use of simulator runs
by training a predictive model on a sparse set of evaluations
$\{\Par_\designIdx, \simObsProcess_\designIdx\}_{\designIdx=1}^{\Ndesign}$,
where $\simObsProcess_\designIdx = \simObsProcess(\Par_\designIdx)$.
\Cref{fig:workflow} summarizes the
high-level workflow that guides the remainder of this article,
which we now outline.

\begin{figure}
\centering
\begin{tikzpicture}[
    font=\sffamily,
    base/.style={draw, thick, rounded corners, align=center, fill=white,
                 minimum height=3.5em, minimum width=4.5cm},
    small_base/.style={base, minimum width=2.1cm, font=\small},
    expensive/.style={base, fill=gray!15, line width=1.5pt},
    outer_box/.style={draw, dashed, inner sep=10pt, inner ysep=18pt, rounded corners},
    outer_label/.style={font=\bfseries\sffamily, anchor=north west,
                        xshift=5pt, yshift=-5pt},
    arrow/.style={-Stealth, thick}
]


\node[small_base] (define_map)
  {Define Target \\ $\Par \mapsto \target(\Par)$};

\node[small_base, right=0.3cm of define_map] (init_design)
  {Prior Design \\ $\{\Par_\designIdx\} \sim \priorDens$};

\node[expensive, below=0.8cm of define_map, xshift=1.2cm] (init_sim)
  {Initial Simulations \\
   $\trainData \gets \{\Par_\designIdx, \target(\Par_\designIdx)\}$};

\begin{scope}[on background layer]
  \node[outer_box, fill=green!5,
        fit=(define_map) (init_design) (init_sim)] (stage1) {};
\end{scope}
\node[outer_label] at (stage1.north west) {Initial Design};


\node[base, below=1.0cm of stage1.south] (fit)
  {Fit Surrogate \\ $\widehat{\target} \approx \target$};

\begin{scope}[on background layer]
  \node[outer_box, fill=violet!5, fit=(fit)] (stage_fit) {};
\end{scope}
\node[outer_label] at (stage_fit.north west) {Surrogate Model};


\node[base, below=1.0cm of stage_fit.south] (post)
  {Approximate Posterior \\
   $\approxPost \approx \postDens$};

\begin{scope}[on background layer]
  \node[outer_box, fill=orange!5, fit=(post)] (stage3) {};
\end{scope}
\node[outer_label] at (stage3.north west) {Uncertainty Propagation};



\node[base, right=1.0cm of init_design] (choose)
  {Choose New Inputs \\ $\{\Par_i^\star\} \gets \argmin \acq(\{\Par_i\})$};

\node[expensive, below=0.8cm of choose] (new_sim)
  {New Simulations \\
   $\trainData_{\star} \gets
   \{\Par_\designIdx^{\star}, \target(\Par_\designIdx^{\star})\}$};

\node[base, below=0.8cm of new_sim] (update)
  {Augment Design \\ $\trainData \gets \mathcal{D} \cup \mathcal{D}_{\star}$};


\begin{scope}[on background layer]
  \node[outer_box, fill=blue!5,
        fit=(choose) (new_sim) (update)] (stage2) {};
\end{scope}
\node[outer_label] at (stage2.north west) {Active Learning};


\draw[arrow] (define_map.south) -- (define_map.south |- init_sim.north);
\draw[arrow] (init_design.south) -- (init_design.south |- init_sim.north);

\draw[arrow] (init_sim.south) -- (fit.north);

\draw[arrow] ([yshift=5pt]fit.east) -- ++(0.5,0) |- (choose.west);

\draw[arrow] (choose.south) -- (new_sim.north);
\draw[arrow] (new_sim.south) -- (update.north);

\draw[arrow] (update.west) -- ++(-0.25,0) |- ([yshift=-5pt]fit.east);

\draw[arrow] (fit.south) -- (stage3.north);

\draw[arrow, dashed] (stage3.east) -| (stage2.south);

\end{tikzpicture}
\caption{(Modular Surrogate-Based Bayesian Workflow) For simplicity, the diagram shows the
case where noiseless target evaluations $\target(\Par)$ are directly available, but the
workflow also applies to the case where queries yield noisy or indirect information
(see \Cref{sec:reg-emulators}). The broad workflow stages consist
of (1) \textbf{Initial design}: generate an initial set of training data by 
running the simulator and evaluating the function targeted for emulation; 
(2) \textbf{Surrogate training}: fit a probabilistic surrogate to the training set; 
(3) \textbf{Uncertainty propagation}: construct an uncertainty-aware posterior
approximation using the emulator. The final component, \textbf{active learning},
describes the iterative process of augmenting the training data
and updating the emulator. The current approximate posterior is sometimes used
in informing the selection of new query points (dashed arrow). 
The gray, shaded boxes highlight the steps that require calls to the expensive simulator.}
\label{fig:workflow}
\end{figure}
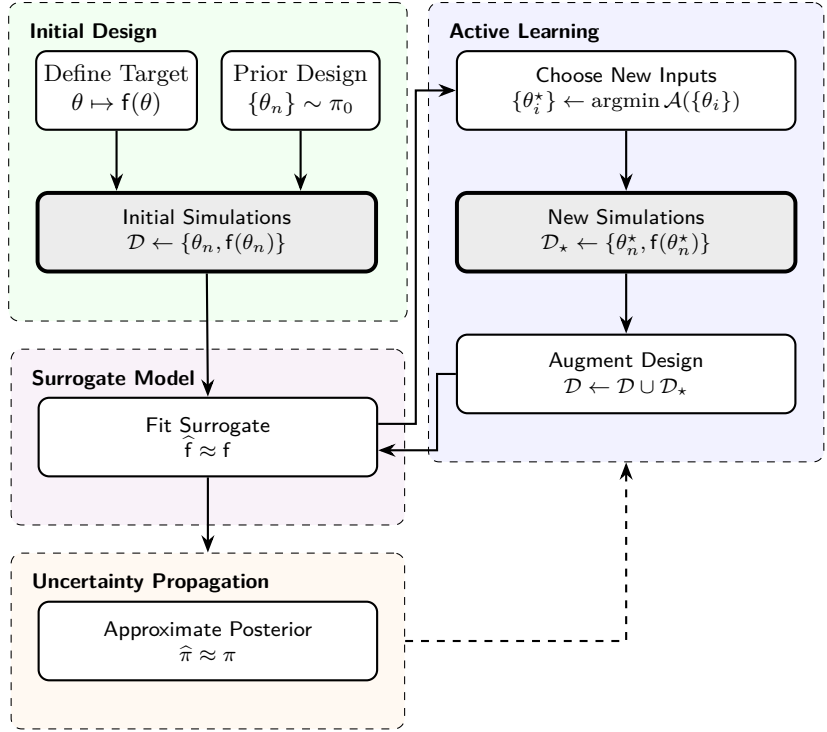

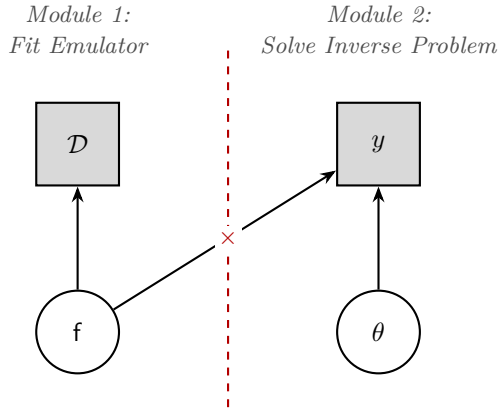
\begin{figure}
\centering
\begin{tikzpicture}[
    param/.style={circle, draw, thick, minimum size=1.1cm, inner sep=0pt},
    data/.style={rectangle, draw, thick, fill=gray!30, minimum size=1.1cm, inner sep=0pt},
    arrow/.style={-{Stealth[length=6pt]}, thick},
    cutarrow/.style={-{Stealth[length=6pt]}, thick},
    every node/.style={font=\normalsize},
  ]

  \node[param] (f) at (0, 0) {$\target$};
  \node[param] (theta) at (4, 0) {$\Par$};

  \node[data] (DN) at (0, 2.5) {$\trainData$};
  \node[data] (y) at (4, 2.5) {$\predObs$};

  \draw[arrow] (f) -- (DN);

  \draw[arrow] (theta) -- (y);

  \draw[cutarrow] (f) -- (y);

  \draw[dashed, thick, red!70!black]
    (2, -1) -- (2, 3.8);

  \node[fill=white, inner sep=1.5pt, font=\small\bfseries, text=red!70!black]
    at (2, 1.25) {$\times$};

  \node[anchor=south, font=\small\itshape, text=black!70, align=center]
    at (0, 3.6) {Module 1:\\[1pt] Fit Emulator};

  \node[anchor=south, font=\small\itshape, text=black!70, align=center]
    at (4, 3.6) {Module 2:\\[1pt] Solve Inverse Problem};

\end{tikzpicture}
\caption{Graphical representation of a joint Bayesian model, 
with a ``cut'' (red dashed line) that prevents feedback from $\predObs$ to $\target$.
Shaded squares are observed variables  
(data $\predObs$ and simulator queries $\trainData$), while circles 
are unobserved variables (parameters $\Par$ and target map $\target$). The joint
Bayesian approach constructs a probability distribution over all quantities, then
computes $p(\Par, \target \given \predObs, \trainData)$. The cut model 
severs feedback so that $\predObs$ does not affect inference for $\target$.}
\label{fig:cut}
\end{figure}

\paragraph{Emulator Target and Initial Design.}
When introducing an emulator within an existing Bayesian model, one of the first 
decisions is choosing which quantity to emulate. 
\Cref{sec:reg-emulators,sec:practical-em-target} discuss
the choice of \inlinedef{emulator target} in depth.
In general, we consider emulators that seek to learn a target map of the form 
$\Par \mapsto \target(\Par)$. This encompasses popular strategies such as 
\inlinedef{forward model emulation} (predicting the outputs of an underlying 
computer model), \inlinedef{log-density emulation} (directly learning
the log-likelihood or log-posterior surface), and 
\inlinedef{conditional likelihood estimation} (fitting a parameterized likelihood).
Once the target is selected, the initial emulator training set 
$\trainData \Def \{\Par_\designIdx, \simObsProcess_\designIdx\}_{\designIdx=1}^{\Ndesign}$
is typically constructed by sampling input points from the prior
$\{\Par_\designIdx\}_{\designIdx=1}^{\Ndesign} \sim \priorDens$ 
(potentially with a space-filling adjustment to prevent clumping) and then querying
the target at the sampled points. We follow the statistical design of experiments 
literature by referring to the input set $\{\Par_\designIdx\}_{\designIdx=1}^{\Ndesign}$ 
as the \inlinedef{design} \citep{initDesignReview,gramacy2020surrogates}.
The first stage (green box) in \Cref{fig:workflow} highlights the choice of 
target and the initial design construction.

\paragraph{Fitting the Emulator.}
The second stage (pink box) entails fitting a predictive model to the design data $\trainData$.
A variety of different model classes are commonly used for this task, including
Gaussian processes, polynomial chaos expansions, and neural networks 
\citep{StuartTeck2,BayesianPCE2,BayesOptNN}. To provide a unifying viewpoint, 
we avoid focusing on a particular model. Instead, we adopt a generic probabilistic 
perspective that encompasses any regression or interpolation model that,
in addition to point estimates, provides some notion of predictive uncertainty.
This is encoded by a probability distribution $p(\target \given \trainData)$
that summarizes the uncertainty in the target due to limited training 
data $\trainData$. \Cref{sec:prob-emulators} provides a general perspective
on probabilistic emulators, while \Cref{sec:model-classes,sec:finite-vs-inf}
highlight practical considerations for particular model classes.

\paragraph{Uncertainty Propagation.}
The trained surrogate is then deployed to estimate the posterior 
distribution $\postDens$ (yellow box). This can represent either the 
final output of the inference workflow or an intermediate approximation
used to guide the selection of additional design points. 
Our framework is rooted in a \textit{modular} workflow, in which 
the emulator predictive distribution $p(\target \given \trainData)$ and the 
parameter posterior $p(\Par \given \obs)$ are learned in two distinct
steps \citep{BurknerSurrogate}. Modular 
Bayesian workflows have been shown to improve robustness to 
misspecification in the likelihood $p(\obs \given \Par)$, in addition 
to offering computational advantages \citep{modularization,moduleModels}.
Given that the two inference stages are not 
coupled through an overarching probability model, uncertainty propagation 
is not immediately provided by standard Bayesian conditioning and marginalization.
In fact, several different uncertainty-aware posterior estimates are routinely
used in the literature, which approximate $p(\Par \given \obs)$ while accounting
for the uncertainty in $p(\target \given \trainData)$ \citep{BurknerSurrogate,RobertsUncProp}.
\Cref{sec:post-approx} reviews these approaches and their tradeoffs.

\begin{remark}
Our use of the word ``modular'' refers to the two-step computational workflow,
similar to its use in \citet{BurknerSurrogate,BurknerTwoStep}. This implies
that the emulator predictive distribution $p(\target \given \trainData)$
is obtained without referencing $\obs$, and then the uncertainty is
propagated when performing inference for $\Par$. This should not be 
confused with its usage in literature on cutting feedback in 
probabilistic models, in which case ``modular'' implies 
that information never
flows from $\obs$ to $\target$ during inference \citep{PlummerCut}. 
These usages are \textit{not} equivalent. A two-step computational workflow can still 
introduce feedback from $\obs$ to $\target$ during the second stage,
depending on how uncertainty is propagated from the first stage.
A graphical representation of cutting feedback is shown in \Cref{fig:cut}.
\end{remark}

\paragraph{Active Learning.} 
The final component of the workflow (purple box) consists of an iterative 
procedure for refining the surrogate. This represents a sequential algorithm
for constructing the design, in which the current surrogate is used to 
identify the next batch of simulator inputs. The dashed arrow in 
\Cref{fig:workflow} emphasizes that an intermediate posterior approximation
can be used to inform the design selection. In our context, we adopt a 
broad use of the term \inlinedef{active learning} to refer to algorithms 
that select new design points with respect to the specific goal of 
posterior estimation. This includes strategies that
optimize goal-oriented acquisition functions (analogous to Bayesian 
optimization; \Cref{sec:design-optimization}), iterative sampling
of approximate posteriors (\Cref{sec:surrogate-guided-sampling}),
tempering strategies (\Cref{sec:tempering}), and MCMC-guided
design (\Cref{sec:mcmc-design}).

\section{Bayesian Inference with Expensive Models} \label{sec:background}
This paper concerns Bayesian inference problems in which evaluation of the 
likelihood (or more generally the unnormalized posterior density) 
incurs significant computational expense. This situation is exceedingly common 
in large-scale scientific and engineering applications, where the goal is 
to infer latent parameters related to observations through a complex computer model. 
When the computer model is stochastic, the challenge is compounded: computational
costs remain prohibitive, and the density itself is often mathematically intractable 
to evaluate. While many simulation-based (or likelihood-free) methods address this 
intractability by constructing an approximate likelihood, they do not avoid the 
computational burden. Evaluating the likelihood approximation still relies on 
repeated forward simulations of the underlying model, implying that the core 
problem of expensive density evaluations remains.
This section introduces the challenges faced in both the deterministic and 
stochastic settings, motivating the emulator-based methods considered in the 
remainder of the paper.

\subsection{The Bayesian Inference Setting}
Consider the task of inferring unknown parameters 
$\Par \in \parSpace \subseteq \R^{\dimPar}$ using noisy observations 
$\obs \in \obsSpace \subseteq \R^{\dimObs}$. The Bayesian approach to this
problem is to define a joint probability distribution 
$p(\Par, \predObs) = \priorDens(\Par)p(\predObs \given \Par)$ over 
the parameters and data. The prior $\priorDens$ is a probability 
distribution over $\parSpace$ and the conditional $p(\predObs \given \Par)$
defines the likelihood 
\begin{equation}
\lik(\Par) \Def p(\obs \given \Par),
\end{equation}
a function of $\Par$ dependent on the fixed data realization $\obs$.
The central task of Bayesian inference is to characterize the posterior distribution
\begin{align}
&\postDens(\Par) \Def p(\Par \given \obs) = 
\frac{1}{\normCst} \priorDens(\Par) \lik(\Par),
&&\normCst \Def \int_{\parSpace} \priorDens(\Par) \lik(\Par) \d\Par,
\end{align}
which encodes the uncertainty in the parameters after observing the data. 
The quantities $\lik$, $\postDens$, and $\normCst$ all depend on $\obs$,
but we suppress this in the notation as the data realization is assumed
constant throughout most of this paper.
Typically, the integral defining $\normCst$ is 
intractable, so posterior inference algorithms
rely on pointwise evaluations of the \textit{unnormalized} posterior density $\jointDens(\Par) \Def \priorDens(\Par)\lik(\Par)$. A standard approach is to sample the posterior using a Markov chain 
Monte Carlo (MCMC) algorithm, often requiring $10^5 - 10^7$ serial evaluations of $\jointDens(\Par)$. 
While not a problem when $\jointDens(\Par)$ is cheap to compute, this sequential computation 
renders MCMC infeasible when the cost of a density evaluation is high.

\subsection{Deterministic Bayesian Inverse Problems} \label{sec:bip}
The inference challenge posed by expensive posterior densities commonly arises in the Bayesian approach to 
inverse problems \citep{Stuart_BIP}. In this setting, the observation $\obs$ typically corresponds to an 
observable quantity associated with a complex physical system. The goal is to recover latent parameters of interest
$\Par$ that gave rise to the observed data. The bulk of the modeling effort typically consists of constructing a
mechanistic model $\fwd: \parSpace \to \obsSpace$ encoding domain knowledge, which describes the 
process by which parameters produce observed quantities. We refer to $\fwd$ as the \textit{forward model}.
The task of solving the inverse problem entails
inverting the relationship to identify parameter values consistent with a particular observation. This parameter recovery 
is commonly cast as a problem of Bayesian inference with a Gaussian noise model
\begin{equation}
\predObs = \fwd(\Par) + \noise, \qquad
\noise \sim \Gaussian(0, \covNoise), \label{eq:inv_prob_Gaussian} 
\end{equation}
implying the likelihood
\begin{equation}
\lik(\Par) = \det(2\pi\covNoise)^{-\dimObs/2} \Exp{-\frac{1}{2} \norm{\obs - \fwd(\Par)}^{2}_{\covNoise}}.
\label{eq:Gaussian-likelihood}
\end{equation}
We utilize the weighted norm notation $\norm{x}^{2}_{A} \Def x^\top A^{-1}x$, where $A$ is a positive definite matrix.
Computing $\lik(\Par)$ (and thus $\jointDens(\Par)$) requires the forward model evaluation $\fwd(\Par)$,
which may involve running an expensive computer simulation (e.g., a numerical differential equation solver). Hence, in Bayesian
inverse problems the posterior density $\jointDens(\Par)$ is often a computationally expensive, and potentially black box 
(i.e., difficult or impossible to differentiate), function.

\begin{example}[Parameter Estimation for ODEs]
\label{ex:ode}
In modeling the Earth system, complex dynamical models are used to simulate trajectories
of various physical processes, such as temperature trends and carbon fluxes
 \citep{ESM_modeling_2pt0,paramLSM,idealizedGCM,FATES_CES,CLMBayesianCalibration,FerEmulation}.
These models often feature empirical parameters with unknown values that must be estimated from data.
As a simple illustrative example, we consider the problem of parameter estimation for an ordinary 
differential equation (ODE). While this example is simplified for clarity, the setup is structurally similar to many 
inverse problems faced in practice, such as parameter estimation for land surface models \citep{paramLSM}. 
Consider a parameter-dependent initial value problem 
\begin{align}
\frac{\d}{\d\Time} \state(\Time, \Par) &= \odeRHS(\state(\Time, \Par), \Par), &&x(\timeStart, \Par) = \stateIC, \label{ode_ivp}
\end{align}
describing the time evolution of $\dimState$ state variables 
$\state(\Time, \Par) \Def \left\{\indexState{\state}(\Time, \Par)\right\}_{\stateIndex=1}^{\dimState}$
with dynamics depending on $\Par \in \parSpace$. As our focus will be on estimating these parameters 
from observations, we consider the parameter-to-state map
\begin{align}
\Par &\mapsto \left\{\state(\Time, \Par) :  \Time \in [\timeStart, \timeEnd] \right\}.
\end{align}
In practice, the solution is typically approximated via a numerical discretization of the form 
\begin{align}
\solutionOp: \Par &\mapsto \left[\indexTime[0]{\state}(\Par), \dots, \indexTime[\NTimeStep]{\state}(\Par) \right]^\top. \label{eq:ode-solution-op}
\end{align}
Here, $\solutionOp: \parSpace \to \R^{\NTimeStep \times \dimState}$ represents the map induced by a numerical solver, 
and $\indexTime[0]{\state}(\Par), \dots, \indexTime[\NTimeStep]{\state}(\Par)$ are approximations of the state 
values $\state(\Time, \Par)$ at a finite set of time points in $[\timeStart, \timeEnd]$. 
Going forward, we focus on the discrete-time operator $\solutionOp$, neglecting discretization error
for simplicity. Finally, suppose we have observed data $\obs \in \obsSpace$ that we model as a
noise-corrupted function of the state trajectory. This is formalized by the definition of an observation operator 
$\obsOp: \R^{\NTimeStep \times \dimState} \to \obsSpace$ mapping from the state trajectory to a 
$\dimObs$-dimensional observable quantity. Assuming the data generating process
\begin{align}
&\predObs = (\obsOp \circ \solutionOp)(\ParTrue) + \noise, 
&&\noise \sim \Gaussian(0, \covNoise) \label{ode_inv_prob} 
\end{align}
for some ``true'' parameter value $\ParTrue \in \parSpace$, we see that this problem is of the form in 
\Cref{eq:inv_prob_Gaussian} with forward model $\fwd \Def \obsOp \circ \solutionOp$.
In this case, the computational
cost of evaluating $\postDens(\Par)$ can be high, owing to the dependence on the numerical solver $\solutionOp(\Par)$.
\end{example}

\subsection{Stochastic Bayesian Inverse Problems} \label{sec:sbi}
In many applications, the forward model may be inherently noisy \citep{stochasticComputerModels}. 
In this case, multiple evaluations
of $\fwd(\Par)$ at the same input may yield different outputs, and the simulator can 
be viewed as defining a family of conditional distributions $p(\fwd(\Par) \mid \Par)$ 
indexed by $\Par$. Let $\latent$ denote the latent variables that make the simulator 
noisy, such that $\fwd(\Par) = \fwd(\Par; \latent)$ is deterministic once a value for
$\latent$ is fixed. The simulator thus implicitly defines the data-generating process by
\begin{equation}
p(\predObs \given \Par) = 
\int p(\predObs \given \Par, \latent) \latentDens(\latent \given \Par) \d\latent.
\label{eq:marg-lik}
\end{equation}
Since this integral is typically intractable, evaluations of the likelihood 
$\lik(\Par) = p(\obs \given \Par)$ are unavailable. We apply the broad label 
\inlinedef{simulation-based inference (SBI)} to methods that
circumvent this difficulty by relying only on samples from the 
simulator, rather than likelihood evaluations \citep{frontierSBI,HartigStochasticReview}.
The term \textit{likelihood-free inference} is often used synonymously. 
In particular, we consider SBI to include purely black box methods 
which operate only on samples from $p(\predObs \given \Par)$,
as well as those that utilize samples from $\latentDens(\latent \given \Par)$
and exploit some tractable structure in $p(\predObs \given \Par, \latent)$.
Such approaches vary widely, with their applicability depending on the 
simulator cost and computational budget, as well as the presence of 
structure in \Cref{eq:marg-lik} that can be exploited. 
We describe several categories of methods below, each of which can be understood
as a means of constructing a tractable likelihood approximation 
$\approxLik(\Par) \approx p(\obs \given \Par)$
using information from simulator runs. The resulting (approximate) unnormalized
posterior density is thus subject to even more severe computational limitations
as compared to $\jointDens(\Par)$ in the deterministic setting, due to the fact
that computing $\approxLik(\Par)$ typically requires multiple replicate
simulations at the same input $\Par$.
Our goal here is not to provide a comprehensive survey of SBI, but rather 
to highlight classes of methods that are amenable to emulation, in the 
sense made precise in \Cref{sec:surrogates}.

\subsubsection{Pseudo-Marginal Methods} \label{sec:pseudo-marg}
In the case that $\lik(\Par)$ is intractable but the density
$p(\predObs \given \Par, \latent)$ can be evaluated, then an unbiased 
likelihood estimate is given by
\begin{align}
\overline{\lik}^\numMCSamp(\Par)
\Def \frac{1}{\numMCSamp} \sum_{\mcSampIdx=1}^{\numMCSamp} p(\obs \given \Par, \latent_\mcSampIdx),
&&\latent_\mcSampIdx \simiid \latentDens(\cdot \given \Par),
\end{align}
such that $\E_{\latent}[\overline{\lik}^\numMCSamp(\Par)] = \lik(\Par)$,
where the expectation is respect to the simulated $\{z_\mcSampIdx\}$.
When $\latentDens$ cannot be directly sampled, importance sampling and 
sequential Monte Carlo methods are commonly employed \citep{ParticleMCMC}.
It turns out that this is sufficient
to construct MCMC algorithms that exactly target the 
true posterior $\postDens$ \citep{pseudoMarginalMCMC}. However, 
the performance of these \inlinedef{pseudo-marginal} methods
degrades when the variance of the likelihood estimate is high,
which is often the case when the latent space is high-dimensional 
or the proposal distribution for $\latent$ is poor \citep{corrPM}.
The practical application of pseudo-marginal
samplers often requires many simulator calls at each value
of $\Par$ in order to control the estimator variance, which
is intractable for very expensive computer models. 
\inlinedef{Monte Carlo within Metropolis (MCwM)}, a closely related
algorithm, can exhibit better mixing with smaller values of $\numMCSamp$,
but implicitly targets an approximation 
to the true posterior \citep{noisyMCMC,stabilityNoisyMH}.

\subsubsection{Classical Likelihood-Free Methods} \label{sec:classical-lfi}
We next describe \textit{Approximate Bayesian Computation} and
\textit{Synthetic Likelihood}, two classical statistical frameworks for conducting SBI.
These approaches assume only the ability to generate samples 
$\predObs \sim p(\predObs \given \Par)$ from the generative 
model defined in \Cref{eq:marg-lik}.

\paragraph{Approximate Bayesian Computation.} 
Approximate Bayesian Computation (ABC) refers to a broad class of methods that 
target approximate posteriors of the form 
$\approxPost(\Par) \propto \priorDens(\Par) \approxLik(\Par)$, where
\begin{equation}
\approxLik(\Par) =
\int \smoothingKernel_{\bandwidth}(\summaryStat(\predObs), \summaryStat(\obs)) p(\predObs \given \Par) \d\predObs.
\label{eq:abc}
\end{equation}
The likelihood is thus approximated non-parametrically by specifying some smoothing 
kernel $\smoothingKernel_{\bandwidth}$ with bandwidth $\bandwidth$ and a choice of summary 
statistics $\summaryStat: \obsSpace \to \R^{\summaryDim}$ \citep{ABCPrimer,ABCApproxLik}.
This replaces the intractable likelihood with a smoothed version that 
weights simulated data $\predObs$ by its ``closeness'' to $\obs$. For example, if 
$\smoothingKernel_{\bandwidth}(s, s^\prime) = \indicator(\norm{s - s^\prime} < \epsilon)$,
then $\approxPost(\Par) \propto \priorDens(\Par) p(\norm{\summaryStat(\predObs) - \summaryStat(\obs)} < \epsilon \given \Par)$.
Since the integral in \Cref{eq:abc} is typically intractable, the likelihood approximation is replaced
with the noisy estimate
\begin{align}
&\overline{\lik}^\numMCSamp(\Par) \Def 
\frac{1}{\numMCSamp} \sum_{\mcSampIdx=1}^{\numMCSamp}
\smoothingKernel_{\bandwidth}(\summaryStat(\predObs_\mcSampIdx)), \summaryStat(\obs)),
&&\predObs_\mcSampIdx \simiid p(\predObs \given \Par).
\end{align}
As in the pseudo-marginal setting, $\numMCSamp$ simulator runs are required to evaluate
the likelihood approximation at a point $\Par$. Since the estimator is unbiased 
with respect to the likelihood in \Cref{eq:abc}, a pseudo-marginal algorithm can be 
used to target $\approxPost$ (which is itself an approximation of $\postDens$).

\paragraph{Synthetic Likelihood.}
As an alternative to the non-parametric ABC approximation, synthetic likelihood (SL)
methods instead opt for the parametric model
\begin{equation}
\approxLik(\Par) =
\Gaussian(\summaryStat(\obs) \given \SLMean(\Par), \SLCov(\Par)),
\label{eq:SL}
\end{equation}
resulting from a Gaussian assumption for the conditional 
$p(\summaryStat(\predObs) \given \Par)$ \citep{WoodSL,FrazierSL,PriceSL}.
Similar to ABC, the approximate likelihood is replaced with a noisy estimate 
\begin{align}
&\overline{\lik}^\numMCSamp(\Par) \Def 
\Gaussian(\summaryStat(\obs) \given \SLMean_\numMCSamp(\Par), \SLCov_\numMCSamp(\Par)),
&&\predObs_\mcSampIdx \simiid p(\predObs \given \Par),
\label{eq:SL-noisy}
\end{align}
where $\SLMean_{\numMCSamp}(\Par)$ and $\SLCov_{\numMCSamp}(\Par)$ denote the sample mean 
and covariance of $\{\summaryStat(\predObs_\mcSampIdx)\}_{\mcSampIdx=1}^{\numMCSamp}$.
Pseudo-marginal and MCwM schemes can again be employed for inference.
Unlike ABC, the Monte Carlo likelihood estimator is biased since
$\E[\Gaussian(\summaryStat(\obs) \given \SLMean_\numMCSamp(\Par), \SLCov_\numMCSamp(\Par))] \neq 
\Gaussian(\summaryStat(\obs) \given \SLMean(\Par), \SLCov(\Par))$. 
This implies that the target distribution of a pseudo-marginal sampler is not 
equal to $\approxPost$ and in general depends on $\numMCSamp$ \citep{PriceSL}.

\subsubsection{Neural Density Estimation} \label{sec:cond-dens-est}
The success of modern machine learning methods has inspired the use of 
more flexible conditional density estimators than those used in 
the classical likelihood-free approaches \citep{frontierSBI}.
Various methods have been proposed to approximate the family of conditional
densities $\{p(\predObs \given \Par)\}_{\Par \in \parSpace}$ by fitting 
a predictive model that maps 
\begin{equation}
\Par \mapsto \condDensEst(\predObs \given \Par) \approx p(\predObs \given \Par).
\end{equation}
The model $\condDensEst$ is typically taken to be a neural network, with 
parameters $\finiteDimPar$ estimated by minimizing the loss
\begin{align}
&\loss(\finiteDimPar) = 
-\sum_{\mcSampIdx=1}^{\numMCSamp} \log \condDensEst(\predObs_\mcSampIdx \given \Par_\mcSampIdx),
&&\predObs_\mcSampIdx \simiid p(\predObs \given \Par_\mcSampIdx),
\end{align}
for a set of chosen points $\{\Par_\mcSampIdx\}_{\mcSampIdx=1}^{\numMCSamp}$.
Once the model is trained, $\condDensEst(\predObs \given \Par)$ can be evaluated at 
any pair $(\Par, \predObs)$ without running additional simulations. Thus, unlike
the previous methods, evaluations of the approximate likelihood at new $\Par$ do
not incur additional simulator cost \citep{NeuralAmortizedReview}. 
Inference on the induced posterior approximation
$\approxPost(\Par) \propto \priorDens(\Par) \condDensEst(\obs \given \Par)$ is 
then carried out by standard methods such as MCMC. Similar workflows 
are used by related methods such as neural ratio estimation \citep{CranmerNRE}.
We note that the MCMC step can be avoided by directly approximating the reverse conditional
$\predObs \mapsto \condDensEst(\Par \given \predObs) \approx p(\Par \given \predObs)$, 
but such methods fall outside the scope of this paper \citep{murrayNPE}. 
A limitation of all neural density
estimation strategies is that they typically require a large number of simulations
to generate sufficient training data, limiting their applicability to expensive
models \citep{BurknerAmortizedSurrogate}.

\section{Emulators in Bayesian Inference} \label{sec:surrogates}
Given the inference bottleneck imposed by computationally expensive models, 
many methods have been proposed to approximate the posterior using only a small set of 
simulator calls \citep{KOH,dimRedPolyChaos,Higdon_2015,randMAP}. 
Many such approaches consist of replacing a computationally-limiting component of the 
model with a cheap \inlinedef{surrogate}, thus enabling 
the use of standard inference schemes on the resulting approximate 
model \citep{gramacy2020surrogates}. In this section, we introduce the 
particular sub-class of surrogate models that form the focus of this article.

\subsection{Regression-Based Emulators} \label{sec:reg-emulators}
While there is a vast literature on surrogates designed for specific model structures, 
we instead focus on the broadly applicable strategy of learning regression-based 
emulators from black-box model evaluations. Even within this class of surrogates, 
the regression modeling setup can vary widely, with a key decision being the choice 
of response variable to approximate with the regression model 
\citep{HigdonBasis,Surer2023sequential,scalarization}.

\begin{definition}[Emulator Target]
We define the \inlinedef{emulator target} as the deterministic
map $\target: \parSpace \to \targetRange$ that the surrogate seeks to learn, where
$\targetRange$ is the \inlinedef{target space} in which the emulated quantity lives.
\end{definition}

When relevant, we will augment the notation from \Cref{sec:background} to emphasize dependence
on the target; e.g., $\postDens(\Par; \target)$, $\lik(\Par; \target)$, $\normCst(\target)$, etc. 
We make the assumption throughout that $\jointDens(\Par; \target)$ depends on $\Par$ and $\target$ 
only as a function of $\target(\Par)$.
The target map should be chosen to alleviate the computational bottleneck; hence,
the density $\jointDens(\Par; \target)$ is cheap to compute once the expensive computation 
$\target(\Par)$ is performed.
We consider regression methods that approximate the target using training data 
$\{\Par_\designIdx, \simObsProcess_\designIdx\}_{\designIdx=1}^{\Ndesign}$
generated by evaluating $\target$ at a set of \inlinedef{design points} 
$\design \Def \{\Par_{\designIdx}\}_{\designIdx=1}^{\Ndesign}$.
The training responses $\simObsProcess_\designIdx$ may constitute 
noisy or indirect observations of $\target(\Par_\designIdx)$.

\begin{definition}[Simulator Observation Process]
\label{def:sim-obs}
The \inlinedef{simulator observation process} is a stochastic process that, given an
input $\Par \in \parSpace$, produces an observation $\simObsProcess(\Par) \in \simObsSpace$
according to
\begin{equation}
\simObsProcess(\Par) \sim \simObsDist(\cdot \given \target(\Par)),
\end{equation}
where $\simObsDist(\cdot \given \target(\Par))$ is a probability measure on $\simObsSpace$. 
The simulator noise is not necessarily independent 
across $\Par$, so in general $\simObsProcess(\ParBatch) \sim \simObsDist(\cdot \given \target(\ParBatch))$ 
gives the joint distribution over the responses at a set of inputs $\ParBatch \subset \parSpace$.
The training responses thus take the form
$\designResponse \Def 
\{\simObsProcess_{\designIdx}\}_{\designIdx=1}^{\Ndesign} \sim 
\simObsDist(\cdot \given \target(\design))$.
\end{definition}

In many cases considered throughout this survey, the target space $\targetRange$ and simulator
observation space $\simObsSpace$ coincide; i.e., $\simObsProcess_\designIdx$ is a direct 
noisy observation of $\target(\Par_\designIdx)$. However, there are notable exceptions
in which the simulator noise process is more complicated (e.g., \Cref{ex:cond-dens-target} below).
In the special case that these spaces coincide and 
the simulator observations are deterministic, we have 
$\eta(\cdot \given \target(\Par)) = \delta_{\target(\Par)}(\cdot)$ and 
$\simObsProcess_\designIdx = \target(\Par_\designIdx)$. The following examples identify target 
quantities that can be emulated to alleviate the computational burden in the problems described
in \Cref{sec:background}.

\begin{example}[Forward Model Target]
\label{ex:fwd-target}
Consider an inverse problem of the form $\predObs = \fwd(\Par) + \epsilon$, as in 
\Cref{eq:inv_prob_Gaussian}. In this case, we might set $\target \Def \fwd$, targeting 
the map implied by the forward model. If the forward model is deterministic, then 
$\simObsProcess_\designIdx = \fwd(\Par_\designIdx)$, in which case the goal of 
the emulator is to interpolate the points 
$\{\Par_\designIdx, \fwd(\Par_\designIdx)\}_{\designIdx=1}^{\Ndesign}$. Note that 
$\obsSpace = \targetRange = \simObsSpace$; the observation space, target space, 
and simulator observation space coincide.
\end{example}

\begin{example}[Log-Likelihood Target]
Consider the same setup as \Cref{ex:fwd-target}. Instead of targeting the underlying
forward model, another possibility is to target the (log) likelihood directly: 
$\target(\cdot) \Def \log \lik(\cdot)$. Since the simulator is deterministic, then
$\simObsProcess_\designIdx = \log \lik(\Par_\designIdx)$, so that the emulator
seeks to interpolate the training data 
$\{\Par_\designIdx, \log \lik(\Par_\designIdx)\}_{\designIdx=1}^{\Ndesign}$.
In this case, $\targetRange = \simObsSpace = \R$.
\end{example}

\begin{example}[Noisy Forward Model Target]
\label{ex:fwd-noisy-target}
Again consider an inverse problem $\predObs = \fwd(\Par) + \epsilon$, 
where now $\fwd(\Par) \sim \Gaussian(\SLMean(\Par), \SLCov(\Par))$. In this
stochastic setting, one possible target is 
$\target(\cdot) \Def \SLMean(\cdot) = \E[\fwd(\cdot)]$. The simulator 
generates noisy observations via 
$\simObsProcess(\Par) = \SLMean(\Par) + \noise(\Par)$, where
$\noise(\Par) \sim \Gaussian(0, \SLCov(\Par))$ 
(i.e., $\simObsDist(\cdot \given \target(\Par)) = \Gaussian(\target(\Par), \SLCov(\Par))$).
Therefore, the training observations are given by 
$\simObsProcess_\designIdx \sim  \Gaussian(\target(\Par_\designIdx), \SLCov(\Par_\designIdx))$
and the emulator seeks to smooth over the noisy training data
$\{\Par_\designIdx, \SLMean(\Par_\designIdx) + \noise(\Par_\designIdx)\}_{\designIdx=1}^{\Ndesign}$.
As in \Cref{ex:fwd-target}, we have $\obsSpace = \targetRange = \simObsSpace$.
The computer experiments literature has explored many generalizations of this setup,
including targeting other moments (e.g., $\target(\cdot) \Def [\SLMean(\cdot), \SLCov(\cdot)]$),
or quantiles of the distribution of $\fwd(\Par)$ \citep{Binois_2018,FadikarAgentBased}.
\end{example}

\begin{example}[ABC/SL Noisy Log-Likelihood Target]
\label{ex:llik-noisy-lfi-target}
Consider the SBI setting (\Cref{sec:classical-lfi}) with an approximate likelihood 
$\approxLik(\Par) \approx p(\obs \given \Par)$ constructed using either
ABC or SL. Both methods make use of noisy estimates 
$\overline{\lik}^{\numMCSamp}(\Par) \approx \approxLik(\Par)$ constructed from samples
$\{\predObs_\mcSampIdx\}_{\mcSampIdx=1}^{\numMCSamp} \sim p(\predObs \given \Par)$.
To eliminate the requirement for simulator runs at every evaluation point $\Par$, 
one can instead fit an emulator targeting 
$\target(\cdot) \Def \log \approxLik(\cdot)$ to sparse training data
$\{\Par_\designIdx, \simObsProcess_\designIdx\}_{\designIdx=1}^{\Ndesign}$.
The training response $\simObsProcess_\designIdx$ is generated by 
sampling $\{\predObs_n^{(\mcSampIdx)}\}_{\mcSampIdx=1}^{\numMCSamp} \simiid p(\predObs \given \Par_\designIdx)$
and then computing the log-likelihood estimate 
$\simObsProcess_\designIdx \Def \log \overline{\lik}^{\numMCSamp}(\Par_\designIdx)$
using $\{\predObs_n^{(\mcSampIdx)}\}_{\mcSampIdx=1}^{\numMCSamp}$ \citep{WilkinsonABCGP,VehtariParallelGP}.
Here we have $\targetRange = \simObsSpace = \R$.
Note that we maintain the convention that $\Ndesign$ is the number of design
inputs $\Par_\designIdx$. The number of simulator runs is $\Ndesign\numMCSamp$
due to replicate runs at the same inputs. One can also consider varying the 
number of replicate simulations based on the particular input \citep{Binois_2018}.
\end{example}

\begin{example}[Pseudo-Marginal Noisy Log-Likelihood Target]
Consider the pseudo-marginal setting from \Cref{sec:pseudo-marg}, in which a 
noisy, unbiased likelihood estimator is given by
$\overline{\lik}^{\numMCSamp}(\Par) = 
\numMCSamp^{-1} \sum_{\mcSampIdx=1}^{\numMCSamp} p(\obs \given \latent_\mcSampIdx, \Par)$,
where $\latent_\mcSampIdx \simiid \latentDens(\cdot \given \Par)$.
As in \Cref{ex:llik-noisy-lfi-target}, the need to run the simulator at each new $\Par$
can be addressed by fitting an emulator targeting
$\target(\cdot) \Def \log \lik(\Par)$ \citep{DrovandiPMGP}. However, in this case
the target is the \textit{exact}
log-likelihood, since unbiased likelihood estimates are available
in the pseudo-marginal setting.
A training observation at input $\Par_\designIdx$ is constructed by sampling 
$\{\latent_\designIdx^{(\mcSampIdx)}\}_{\mcSampIdx=1}^{\numMCSamp} \simiid \latentDens(\cdot \given \Par_\designIdx)$ and 
computing the log-likelihood estimate
$\simObsProcess_\designIdx \Def \log \overline{\lik}^{\numMCSamp}(\Par_\designIdx)$
using $\{\latent_\designIdx^{(\mcSampIdx)}\}_{\mcSampIdx=1}^{\numMCSamp}$.
Again we have $\targetRange = \simObsSpace = \R$.
\end{example}

\begin{example}[Conditional Density Target]
Consider the approach to conditional density estimation described in 
\Cref{sec:cond-dens-est}, in which training pairs 
$\{\Par_\designIdx, \predObs_\designIdx\}_{\designIdx=1}^{\Ndesign}$
generated via $\predObs_\designIdx \simiid p(\predObs \given \Par_\designIdx)$
are used to optimize the parameters $\finiteDimPar$ of a flexible conditional
density estimator $\condDensEst(\predObs \given \Par) \approx p(\predObs \given \Par)$.
In this case, the emulator directly targets the conditional density 
$\target(\Par) \Def p(\cdot \given \Par)$, implying that $\targetRange$ is a space
of probability densities over $\obsSpace$. In practice, $\predObs$ is often 
replaced by hand-crafted or learned summary statistics $\summaryStat(\predObs)$,
but we suppress this in the notation for brevity \citep{bayesflow_2020_original}.
The simulator observation process is given by 
$\simObsProcess(\Par) = \predObs \sim \simObsDist(\cdot \given \target(\Par))$, where
$\simObsDist(\cdot \given p(\cdot \given \Par)) = p(\cdot \given \Par)$. That is, the
target $\target(\Par)$ is a probability distribution, and the simulator observation
$\simObsProcess(\Par)$ is a sample from that distribution. Here, $\simObsSpace = \obsSpace$,
while $\targetRange$ is a space of densities over $\obsSpace$. 
This problem is thus structurally distinct from the previous examples, 
in that the training observations $\simObsProcess_\designIdx = \predObs_\designIdx$ 
provide only indirect information about the target $\target$.
\label{ex:cond-dens-target}
\end{example}

The preceding examples describe several modeling setups that yield training data 
$\trainData_{\Ndesign} \Def \{\Par_\designIdx, \simObsProcess_\designIdx\}_{\designIdx=1}^{\Ndesign}$.
We define a \inlinedef{regression-based emulator} as a predictive model $\targetEm$ (e.g., a 
regressor or interpolator) fit to training data of this form.
If $\targetEm$ is deterministic (i.e., $\targetEm(\Par)$ is a point prediction of $\target(\Par)$),
then the induced density approximation $\jointDens(\Par; \targetEm)$ can be fed to standard
inference algorithms such as MCMC, provided that emulator predictions can be computed
relatively cheaply (here let $\jointDens$ be the exact density or a target 
approximation as in the ABC/SL case). 
However, as noted in the introduction, this raises the concern that 
the posterior estimate can be simultaneously biased and overconfident.
Two complementary strategies to mitigate this concern
are to (1) reduce the error by generating more training data and (2) propagate the surrogate uncertainty.
As both are facilitated by emulators equipped with a notion of predictive 
uncertainty, we henceforth focus on \inlinedef{probabilistic emulators}---models that 
provide predictions in the form of probability distributions.

\begin{remark}
The terms ``emulator'' and ``surrogate'' are widely used in the literature in varying contexts,
often synonymously. In this review, we reserve the term \inlinedef{emulator} for the 
predictive model $\targetEm$ that directly approximates the target $\target$. We use
\inlinedef{surrogate} more broadly for any downstream quantity that depends on $\targetEm$.
For example, $\postDens(\Par; \targetEm)$ is the surrogate density for $\postDens(\Par; \target)$
that is induced by the underlying emulator $\targetEm$.
\end{remark}

\subsection{Probabilistic Emulators} \label{sec:prob-emulators}
In general, we allow $\targetEm$ to be any model that provides a predictive distribution, such
that $\targetEm(\Par)$ is a random quantity summarizing uncertainty in the true target
value $\target(\Par)$. To encompass both parametric and nonparametric models (e.g., Gaussian processes),
we adopt the general perspective of $\targetEm$ as a random function.
Conceptually, we think of the randomness in $\targetEm$ as quantifying \inlinedef{epistemic uncertainty}, 
which in principle could be reduced, were it computationally feasible to evaluate 
$\target$ at any given input \citep{epistemicAleatoric}. 

\begin{definition}[Probabilistic Emulator]
Given training data $\trainData_{\Ndesign}$ generated as described in 
\Cref{def:sim-obs}, a \inlinedef{probabilistic emulator} is a random function 
$\targetEm: \parSpace \to \targetRange$ whose distribution $\emDist$ depends
on $\trainData_{\Ndesign}$. For any finite collection 
$\ParBatch = \{\Par_b\}_{b=1}^{B} \subset \parSpace$, 
we write $\emDist(\ParBatch)$ for the
marginal distribution of 
$\targetEm(\ParBatch) = [\targetEm(\Par_1), \dots, \targetEm(\Par_B)]$
and $\emDist(\Par)$ for the marginal at a single point.
\end{definition}

The predictive distribution $\emDist$ represents the uncertainty about the deterministic target
$\target$, not the variability in the simulator observation process $\simObsProcess$.
Any emulator satisfying our definition must therefore separate epistemic uncertainty 
from observation noise. This poses no limitation for standard surrogate
constructions (e.g., a regression model with additive noise). 
In addition to 
$\emDist$, we will find it useful to consider the predictive distribution 
$p(\simObsProcess(\Par) \given \trainData_{\Ndesign})$ over the simulator 
observation process.

\begin{definition}[Predictive Simulator Observation Process]
\label{def:pred-sim-obs-process}
Given the prior simulator observation process 
$\simObsProcess(\Par) \sim \simObsDist(\cdot \given \target(\Par))$
and emulator $\targetEm \sim \emDist$, let 
$\simObsPredProcess(\Par)$ denote the predictive process over
simulator observations, with distribution
\begin{equation}
\simObsPredDist(\cdot \given \Par) \Def 
\int \simObsDist(\cdot \given \targetTraj(\Par)) \emDist(\d\targetTraj).
\end{equation}
The quantity $\simObsPredDist$ represents a convolution of the epistemic
uncertainty $\emDist$ with the simulator noise $\simObsDist$.
\end{definition}

While in general $\targetEm$ and $\simObsPredProcess$ may not even be 
defined over the same space, the following example demonstrates a 
special case where $\targetRange = \simObsSpace$ and the two 
processes are closely related.

\begin{example}
\label{ex:pred-process-comparison}
Suppose the simulator generates observations via the process 
\begin{align}
&\simObsProcess(\Par) = \target(\Par) + \noise(\Par),
&&\noise(\Par) \sim \Gaussian(0, \SLCov(\Par)).
\end{align}
If the emulator produces Gaussian predictions 
$\targetEm(\Par) \sim \Gaussian(\emMean(\Par), \emVar(\Par))$
(independent of $\noise(\Par)$), then 
the predictive simulator observation process is given by
\begin{equation}
\simObsPredProcess(\Par) \sim \Gaussian(\emMean(\Par), \emVar(\Par) + \SLCov(\Par)).
\end{equation}
In this case, $\targetEm$ and $\simObsPredProcess$ have the same expectation, and 
differ only in that the covariance of the latter accounts for the sampling variability
$\noise(\Par)$ of the observation process. This special case arises in 
standard Gaussian process modeling setups (see \Cref{sec:gp}).
\end{example}

The following notation is used throughout the remainder of the paper.
Let $\emE$ denote expectation with respect to $\emDist$.
Supposing for the moment that $\targetRange = \R$, define the pointwise 
predictive mean $\emMean(\Par) \Def \emE[\targetEm(\Par)]$ and variance 
$\emVar(\Par) \Def \Var_{\emDist}[\targetEm(\Par)]$. 
While some emulators provide only pointwise predictions,
many also encode correlational structure across input values. We thus denote 
the covariance 
$\emKer(\Par, \Par^\prime) \Def \Cov_{\emDist}[\targetEm(\Par), \targetEm(\Par^\prime)]$, 
noting that $\emKer(\Par, \Par) = \emVar(\Par)$. For a batch $\ParBatch$ of $\Nbatch$ inputs,
$\emMean(\ParBatch)$ and $\emVar(\ParBatch)$ are the mean vector and covariance matrix
of $\targetEm(\ParBatch)$, respectively. Similarly, given another set
$\ParBatch^\prime$ of $\Nbatch^\prime$ inputs, we write $\emKer(\ParBatch, \ParBatch^\prime)$ to denote
the $\Nbatch \times \Nbatch^\prime$ cross covariance. In the case that $\targetRange = \R^{S}$
for some finite integer $S$, we interpret $\emMean(\Par)$ and $\emVar(\Par)$ as the predictive mean vector and 
covariance matrix over the different output dimensions. To avoid complicating 
notation, we do not extend the vectorized input notation $\emMean(\ParBatch)$ to 
the multi-output setting. We make clarifying remarks when necessary to avoid ambiguity.
We use analogous notation for expectations with respect to
$\simObsPredDist$.

The randomness in the emulator induces randomness in 
downstream quantities such as $\normCst(\targetEm)$, 
$\jointDens(\cdot; \targetEm)$,  and $\postDens(\cdot; \targetEm)$. 
When discussing such induced random variables we make use of the 
shorthands $\normCstEm$, $\jointEm$, and $\postEm$ when explicit 
reference to the underlying emulator is not necessary. For a 
deterministic surrogate, the induced approximation to the posterior
distribution is immediate. In the probabilistic setting, it is not 
immediately clear how to perform approximate inference using 
the random posterior $\postEm$, a topic discussed in depth 
in \Cref{sec:post-approx}.

\subsection{Common Model Classes} \label{sec:model-classes}
In this section, we highlight several popular models that fall under our definition of 
\textit{probabilistic surrogate}.

\subsubsection{Gaussian Processes} \label{sec:gp}
Gaussian processes (GPs) are widely used as surrogate models, with extensive applications in 
response surface modeling for computer experiments 
\citep{design_analysis_computer_experiments,SanterCompExp}, 
black-box optimization \citep{reviewBayesOpt}, reliability 
analysis \citep{contourEstimation,cole2021entropybased}, and parameter 
calibration \citep{KOH,computerModelCalibrationReview}. For in-depth treatments, we refer to
\citet{gramacy2020surrogates,gpML,StuartTeck2}.

A typical GP-based workflow consists of specifying a prior $\targetEm[0] \sim \GP(\emMean[0], \emKer[0])$ 
over the target function $\target$, defined by a prior mean function $\emMean[0](\cdot)$ and 
covariance function (i.e., kernel) $\emKer[0](\cdot, \cdot)$. The 
defining property of a GP is its Gaussian finite-dimensional marginal distributions
$\targetEm[0](\ParBatch) \sim \Gaussian(\emMean[0](\ParBatch), \emKer[0](\ParBatch, \ParBatch))$.
Commonly, the hyperparameters defining the mean and kernel are optimized, though Bayesian 
treatments are also possible \citep{fullyBayesianGPs}. With fixed hyperparameters
and Gaussian observation noise 
$\simObsProcess(\ParBatch) \given \target \sim \Gaussian(\target(\ParBatch), \sigma^2 I)$,
the emulator is constructed by closed-form conditioning
$\targetEm \Def \targetEm[0] \given [\simObsProcess(\design) = \designResponse] \sim \GP(\emMean, \emKer)$,
with the conditional mean and kernel given by
\begin{align}
\emMean(\ParBatch) 
&= \emMean[0](\ParBatch) + \emKer[0](\ParBatch, \design) \kerMat^{-1}[\designResponse - \emMean[0](\design)] \label{eq:kriging-equations} \\
\emKer(\ParBatch, \ParBatch) 
&= \emKer[0](\ParBatch, \ParBatch) - \emKer[0](\ParBatch, \design) \kerMat^{-1} \emKer[0](\design, \ParBatch),
\nonumber
\end{align}
where $\kerMat \Def \emKer[0](\design, \design) + \sigma^2 I$. The predictive observation process is likewise
given by the conditional 
$\simObsPredProcess \Def \simObsProcess \given [\simObsProcess(\design) = \designResponse]$. The result
is again a GP of the form in \Cref{eq:kriging-equations}, with the one modification that the 
predictive covariance $\emKer(\ParBatch, \ParBatch) + \sigma^2 I$ accounts for the 
observation noise. The relationship between $\targetEm$ and $\simObsPredProcess$ follows that given
in \Cref{ex:pred-process-comparison}.

Thus, in their most basic form GP surrogates are Gaussian predictors 
$\targetEm(\ParBatch) \sim \Gaussian(\emMean(\ParBatch), \emVar(\ParBatch))$, 
where $\emVar(\ParBatch) = \emKer(\ParBatch, \ParBatch)$. Extensions of the GP methodology
can produce more flexible emulators with non-Gaussian predictive distributions. 
A non-Gaussian noise model breaks the closed-form conditional 
and requires numerical methods for inference (e.g., MCMC). Other extensions include 
fully Bayesian GPs \citep{fullyBayesianGPs} and deep GPs \citep{deepGPVecchia,deepGPAL},
which typically yield predictions in the form of (infinite) mixtures of Gaussians.

\subsubsection{Polynomials}
Surrogates that consist of linear combinations of polynomial basis functions are 
commonly used in the engineering and applied mathematics literatures. 
Polynomial chaos expansions (PCEs; \citet[Chapter 9]{UQpredCompSci}) are a particular example whereby 
a polynomial basis is used to approximate a random variable $\target(\Par)$ as a function of
a random input $\Par \sim \priorDens$. In particular, the expansion is of the form
\begin{equation}
\targetEm[\dimBasis](\Par) = \sum_{\idxBasis=1}^{\dimBasis} c_{\idxBasis} \basisVec_{\idxBasis}(\Par),
\end{equation}
where $\basisVec_{1}, \dots, \basisVec_{\dimBasis}$ are orthogonal polynomials with respect 
to $\priorDens$. Once constructed, a PCE is often used to 
approximate moments of $\target(\Par)$. More relevant to our context,
the map $\targetEm[\dimBasis](\Par)$ can also be used as a surrogate for $\target(\Par)$.
With fixed coefficients $c_{\idxBasis}$, this map is deterministic and thus PCEs do not fall
within our definition of probabilistic surrogates. We nonetheless highlight them here
due to their popularity, the fact that they can be converted into random surrogates by 
considering Bayesian treatments of the coefficients 
\citep{BayesianPCE1,BayesianPCE2,BurknerSurrogate},
and their use in conjunction with probabilistic surrogates (e.g., as the mean function
of a GP; \citet{PCEGPWind,PCEGP2,SinsbeckNowak}). PCE surrogates have been
employed to accelerate Bayesian inversion in various applications 
\citep{dimRedPolyChaos,BurknerSurrogate,PCEBIP}.

\subsubsection{Neural Networks} \label{sec:neural-networks}
In regimes where the input dimension $\dimPar$ and computational 
budget $\Ndesign$ are large, neural networks are well-suited to 
serve as surrogate models. Bayesian treatment of neural 
network parameters provides probabilistic predictions, 
but in practice significant approximations are required for 
inference \citep{BNNSurvey,BayesOptNN}. Practical strategies
include Laplace approximations \citep{MacKayLaplaceNN}, variational 
inference \citep{VIforNNs}, Monte Carlo dropout
\citep{BayesianDropout}, and partially Bayesian networks
\citep{partialBNN,BayesLastLayer}. 
Another popular method is
\inlinedef{deep ensembles}, which summarize uncertainty 
via an ensemble of neural networks with randomized 
initializations \citep{deepEnsembles,Lueckmann2019}.
Alternative approaches, such as \inlinedef{epinets}, produce
predictive distributions without ensembling by modifying
the architecture of a single neural network
\citep{epistemicNN,BayesOptEpistemicNN}. 
We refer to \citet{UQDLReview,DLUQ,BNNSurvey} for thorough surveys of 
uncertainty quantification in deep learning.

\subsection{Finite vs. Infinite Dimensional Emulators} \label{sec:finite-vs-inf}
Since $\emDist$ is a probability distribution over 
\inlinedef{functions}, $\targetTraj \sim \emDist$ is a trajectory (i.e., sample path)
of $\targetEm$. While we adopt this functional perspective for generality, the 
randomness in $\targetEm$ may stem from a finite set of parameters.
To clarify this distinction, we call 
$\targetEm$ \inlinedef{finite-dimensional} if it can be written as 
$\targetEm(\cdot) = g(\cdot; \finiteDimPar_{\Ndesign})$ 
for some finite-dimensional random vector $\finiteDimPar_{\Ndesign}$ and non-random function $g$.
When this is not possible, then $\targetEm$ is \inlinedef{infinite-dimensional}.
Standard parametric models (linear regression, neural networks) are finite-dimensional,
while GPs (with standard choices of the kernel) are infinite-dimensional. A relevant consequence 
of this difference is that trajectories of finite-dimensional
emulators can be sampled via
\begin{align}
&\targetTraj(\cdot) \Def g(\cdot; \finiteDimPar),
&&\finiteDimPar \sim \law(\finiteDimPar_{\Ndesign}),
\end{align}
where $\law(\cdot)$ denotes the probability distribution of its argument.
The sampled trajectory $\targetTraj(\cdot)$ can now be evaluated at any input value. 
For infinite-dimensional models, 
it is computationally infeasible to sample trajectories, as this would require
storing an infinite-dimensional function in computer memory.
Instead, the surrogate can typically be characterized by the set of distributions 
$\emDist(\ParBatch)$ over finite-dimensional subsets $\ParBatch$.
Infinite-dimensional GPs can be approximated by finite-dimensional models in various 
ways, often relying on a finite set of basis functions 
\citep{pathwiseCond,EfficientSampGPPost}.

\subsection{Practical Considerations for Different Emulator Targets} \label{sec:practical-em-target}
While the choice of emulator target is inherently problem-dependent, one is 
often faced with multiple competing options for a particular problem.  
\Cref{sec:reg-emulators} presented several examples with varying targets; 
we now provide a more detailed discussion of their practical tradeoffs.
To this end, we group different approaches into three broad classifications:
\inlinedef{forward model emulation} (targeting the 
underlying forward map), \inlinedef{log-density emulation} (directly
emulating the log of the likelihood or posterior surface), and 
\inlinedef{conditional density estimation} (learning a flexible parametric
density estimator). The former two categories are also compared in 
\citet{StuartTeck1,GP_PDE_priors,random_fwd_models,RobertsUncProp}.

\begin{figure}[p] 
    \centering
    \includegraphics[width=\textwidth, height=0.80\textheight, keepaspectratio]{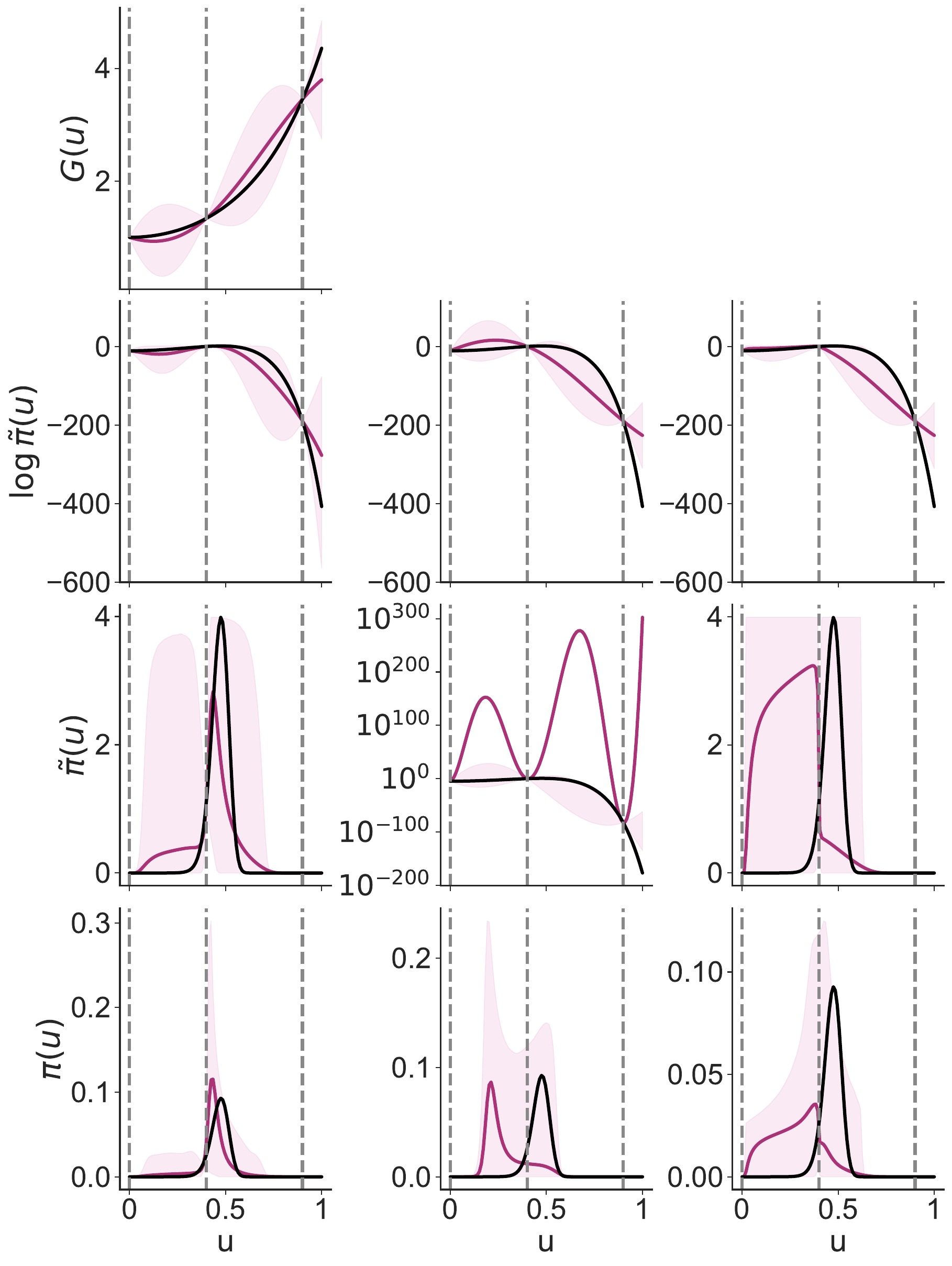}
    \vspace{-15pt}
    \caption{Pushforward distributions induced by GP forward model emulator (left column), 
    GP log-posterior emulator (middle column), and a clipped (upper bounded) GP log-posterior emulator (right column). 
    The plots summarize the pointwise marginal distributions of each quantity; in particular, the means (magenta lines)
    and 95\% credible intervals (shaded regions). The black lines are ground truth (no emulation) and the gray dashed lines 
    indicate the locations of the design inputs used to train the surrogates. The respective rows represent the surrogate-induced 
    distributions over the \emph{(1)} forward model, \emph{(2)} unnormalized log-posterior density, \emph{(3)} unnormalized 
    posterior density, and \emph{(4)} normalized posterior density. The top middle and top right entries are blank because the 
    log-density surrogates do not produce an approximation of the forward model. The magenta lines in the third and fourth 
    rows represent \texorpdfstring{$\jointApproxEUP$}{the unnormalized EUP} and \texorpdfstring{$\postApproxEP$}{the EP}, 
    respectively (see \Cref{sec:post-approx}).}
    \label{fig:em_dist_1d}
\end{figure}

\subsubsection{Forward Model Emulation} \label{sec:fwd}
The high computational cost in Bayesian inverse problems stems primarily from 
the underlying forward model. A natural strategy to alleviate this bottleneck 
is to fit an emulator directly targeting $\fwd$, as in 
\Cref{ex:fwd-target,ex:fwd-noisy-target}. The following example analyzes the
density surrogate $\jointDens(\Par; \targetEm)$ induced by an underlying
forward model emulator $\targetEm$, under Gaussian assumptions.

\begin{example}[Forward Model Emulation, Gaussian Setting]
\label{ex:fwd-em}
Consider the setting of a Bayesian inverse problem with an additive Gaussian noise model
(\Cref{eq:inv_prob_Gaussian}). Assume an emulator $\targetEm$ has been 
fit to approximate the true forward model $\fwd$. This induces an approximation of the 
unnormalized posterior density 
\begin{align}
\jointDens(\Par; \targetEm) 
&= \priorDens(\Par)\Gaussian(\obs \given \targetEm(\Par), \covNoise),
\label{eq:fwd-em-Gaussian}
\end{align}
for each $\Par \in \parSpace$. In general, the distribution of $\jointDens(\Par; \targetEm)$ depends on 
the predictive distribution of $\targetEm$. The additional assumption 
$\targetEm(\Par) \sim \Gaussian(\emMean(\Par), \emVar(\Par))$ facilitates closed-form computation
of the first two moments,
\begin{align*}
\emE\left[\jointDens(\Par; \targetEm) \right] 
&= \priorDens(\Par) \Gaussian(\obs \given \emMean(\Par), \covNoise + \emVar(\Par)) \\
\Var_{\emDist}\left[\jointDens(\Par; \targetEm) \right]
&= \priorDens^2(\Par) \bigg[\frac{\Gaussian\left(\obs \given \emMean(\Par), \frac{1}{2}\covNoise + 
\emVar(\Par)  \right)}{\det(2\pi \covNoise)^{1/2}} - \\
&\qquad \qquad\qquad \frac{\Gaussian\left(\obs \given \emMean(\Par), 
\frac{1}{2}\left[\covNoise + \emVar(\Par)\right]  \right)}{\det(2\pi [\covNoise + \emVar(\Par)])^{1/2}}\bigg],
\end{align*}
a fact that has been leveraged frequently in the literature
\citep{StuartTeck1,GP_PDE_priors,hydrologicalModel,hydrologicalModel2,Surer2023sequential,
VillaniAdaptiveGP,weightedIVAR,idealizedGCM,CES}. The left column of \Cref{fig:em_dist_1d}
visualizes how the uncertainty in a GP forward model emulator propagates to downstream 
quantities within this setting.
\end{example}

In general, the architecture of a forward model emulator is dictated by the 
properties of the forward model itself.
A common challenge in this context is the high dimensionality of the observation space,
particularly for models with complex spatial or temporal structure; examples include 
epidemic modeling \citep{FadikarAgentBased},
engineering design \citep{PODemulation}, ecological forecasting 
\citep{emPostDens,DagonCLM}, and climate modeling \citep{ESM_modeling_2pt0,idealizedGCM}.
Emulators for such complex model outputs typically rely on some form of dimensionality 
reduction. Common approaches include:
\begin{enumerate}
\item \emph{Basis Expansion}: Representing the model output with respect to a small
number of basis vectors, and emulating the scalar basis coefficients
(e.g., via principal components analysis or proper orthogonal decomposition;
\citet{HigdonBasis,FadikarAgentBased,PODemulation}).
\item \emph{Feature Extraction}: Targeting low-dimensional summary statistics
of the model outputs \citep{ESM_modeling_2pt0,idealizedGCM,CLMBayesianCalibration,CLMSurrogates}.
In ABC/SL, this implies emulating the same summary statistics used to form the 
approximate likelihoods \citep{GPSABC}.
\item \emph{Specialized Architectures}: Exploiting the recursive structure of model outputs 
for forward models with dynamical structure
\citep{GP_dynamic_emulation, Bayesian_emulation_dynamic,Liu_West_dynamic_emulation,
dynamic_nonlinear_simulators_GP,FerRNNHMC}.
\end{enumerate}

Other options include emulating each output independently 
\citep{ScalableConstrainedBO,onSiteCalibration,Surer2023sequential,GPSABC}
or adopting a tractable model for cross-output correlations
\citep{MultiTaskGP,BOHighDimOutputs}.

Despite these challenges, there are many practical benefits to the forward model emulation strategy.
Practitioners often wish to perform various tasks using the forward model,
aside from parameter estimation (e.g., sensitivity analysis). A forward model surrogate may be 
re-used across different tasks.
Moreover, in principle $\targetEm$ does not depend on either the observed data or the particular
statistical model (the prior and likelihood). Therefore, the cost of emulator training
is \inlinedef{amortized} over multiple data realizations $\obs$ or changes to the statistical
model. However, in practice the design of the forward model may be inherently intertwined with the 
data, limiting the potential for amortization.

\begin{example}[Limitations of Amortization]
\label{ex:amortization-caveats}
Consider the setting from \Cref{ex:ode}, which considers dynamical models of the form
$\frac{\d}{\d\Time} \state(\Time, \Par) = \odeRHS(\state(\Time, \Par), \Par)$. Suppose
the observed data $\obs$ consists of noisy monthly averages of one of the state variables.
A natural emulation strategy is to target the corresponding monthly averages of the
model outputs. However, this construction ties the emulator to the structure of the data.
If statistical considerations lead one to change the temporal resolution of $\obs$ to 
weekly averages, the emulator will need to be re-structured and trained again.
Even if the data structure is constant, in order to be re-used for new data realizations
$\obs$ the emulator must also account for the fact that underlying forcing data and 
initial conditions will also change. Specialty emulators for dynamic models seek 
to address this challenge \citep{GP_dynamic_emulation}.
\end{example}

\subsubsection{Log-Density Emulation} \label{sec:ldens-em}
Given that most posterior inference algorithms require only an unnormalized density as input, 
an alternative to forward model emulation is to directly target the log-likelihood or 
the unnormalized log-posterior, approaches we collectively refer to as \inlinedef{log-density emulation}.
While one could also consider modeling the likelihood or posterior surface directly,
emulating on the log scale is generally preferred to improve numerical stability, enforce
non-negativity of the resulting density approximation, and yield a smoother 
target surface \citep{wang2018adaptive,WilkinsonABCGP}. The following example analyzes
the density surrogate $\jointDens(\Par; \targetEm)$ induced by a log-likelihood emulator
$\targetEm$, under Gaussian assumptions.

\begin{example}[Log-Likelihood Emulation, Gaussian Setting]
\label{ex:ldens-em}
In the case that the emulator is fit to the log-likelihood, the induced unnormalized posterior approximation
is of the form 
\begin{align}
\jointDens(\Par; \targetEm) &= \priorDens(\Par) \Exp{\targetEm(\Par)}.
\end{align}
Under the assumption $\targetEm(\Par) \sim \Gaussian(\emMean(\Par), \emVar(\Par))$, this
quantity is log-normally distributed, with moments
\begin{align}
\emE\left[\jointDens(\Par; \targetEm) \right] 
&= \priorDens(\Par)\Exp{\emMean(\Par) + \frac{1}{2} \emVar(\Par)} \label{eq:ldens-gp-moments} \\
\Var_{\emDist}\left[\jointDens(\Par; \targetEm) \right]
&= \priorDens^2(\Par) \left[\Exp{\emVar(\Par)} - 1 \right] \Exp{2\emMean(\Par) + \emVar(\Par)}, \nonumber
\end{align}
a fact noted in \citet{DrovandiPMGP,VehtariParallelGP,StuartTeck2,GP_PDE_priors}. 
Gaussian process log-density emulators have been used both for deterministic 
inverse problems 
\citep{StuartTeck2,GP_PDE_priors,quantileApprox,ActiveLearningMCMC,JosephMinEnergy,
gp_surrogates_random_exploration,ActiveLearningMCMC,FerEmulation,FATES_CES}, as well 
as in the SBI setting 
\citep{WilkinsonABCGP,DrovandiPMGP,trainDynamics,OakleyllikEm,VehtariParallelGP,gpEmMCMC}.

The predictive distribution of a GP log-density emulator is plotted in \Cref{fig:em_dist_1d}
(second row, second column). Inspecting the induced distribution $\jointEm$ (third row, second column),
we see that the predictive mean $\emE\left[\jointEm(\Par) \right]$ is highly sensitive
to the GP variance, providing a poor summary of $\jointEm$. The third column shows how this
sensitivity is reduced when an upper bound is enforced on the emulator predictive distribution.
\end{example}

The primary appeal of log-density emulation is that it collapses a potentially high-dimensional 
observation space into a single scalar output \citep{scalarization,trainDynamics,CompositeBO}. 
However, this simplification introduces several challenges. Log-densities often exhibit a large dynamic 
range and nonstationary behavior, which can be difficult for standard surrogates to capture 
\citep{wang2018adaptive,Surer2023sequential,WilkinsonABCGP}. Furthermore, because additive errors in the log-domain 
become multiplicative errors when exponentiated, the resulting posterior approximations are highly sensitive 
to emulator misspecification \citep{RobertsUncProp}. A further complication arises when likelihood 
parameters (e.g., the noise covariance $\covNoise$ in \Cref{eq:inv_prob_Gaussian}) are unknown. 
While one can expand the emulator's input space to include these parameters \citep{llikRBF, emPostDens}, this increases 
the input dimensionality and can further complicate the target surface. Alternatively, some likelihoods 
admit a sufficient statistic that can be emulated independently of the likelihood parameters \citep{FerEmulation}.
This is similar in spirit to emulating a measure of discrepancy between the observed 
and simulated data in ABC applications \citep{BOforLFI,EfficientAcqABC}.
Finally, we note that log-density emulators are fit with respect to a particular data realization $\obs$
and statistical modeling setup. While forward model emulators offer some potential for amortizing 
the training cost over changes in $\obs$ or the model
(subject to caveats; see \Cref{ex:amortization-caveats}), log-density emulators are 
by design dependent on fixed values of these quantities.

\paragraph{Log-Likelihood vs. Log-Posterior Emulation.}
Since the prior density is typically cheap to evaluate, it appears natural to approximate
only the log-likelihood. However, several studies opt to emulate the unnormalized log-posterior 
directly \citep{emPostDens, Kandasamy2017, gp_surrogates_random_exploration}. 
While the choice may seem numerically inconsequential (one can be converted to the other 
via a deterministic shift by the log-prior), there are tradeoffs to consider. 
Emulating the log-posterior allows one to incorporate within the surrogate the known constraint 
that the posterior must decay in the tails. In contrast, the log-likelihood's tail behavior may be 
difficult to characterize a priori, presenting an additional challenge for surrogate model specification.
Conversely, constraining tail behavior in certain surrogate models (e.g., GPs) can be difficult, 
and the failure to do so in log-posterior emulation can lead to pathological results \citep{gpEmMCMC}. 
Another downside of log-posterior emulation is that the surrogate depends on the prior; 
any update to the prior model may necessitate re-training (depending on the probabilistic structure
of the emulator). A final consideration is whether 
the log-posterior or log-likelihood surface is easier to emulate. In general, this will depend on 
the relative strength of the prior and the likelihood, as well as their particular functional forms.
In settings where the likelihood dominates the prior, the practical difference between these two 
target surfaces diminishes. 

\subsubsection{Conditional Density Emulation}
We return to the setting of \Cref{sec:sbi} and \Cref{ex:cond-dens-target}, 
in which a neural network
$\condDensEst(\predObs \given \Par)$ is trained to learn the 
mapping from $\Par$ to the corresponding conditional density 
$p(\cdot \given \Par)$. The emulator targets the 
probability density itself, not its value at a particular point.
The training responses are sampled
from the conditionals 
$\simObsProcess_\designIdx = \predObs_\designIdx \sim p(\predObs \given \Par_\designIdx)$
at a set of design points $\{\Par_\designIdx\}$. As opposed to the preceding emulator
classes, the responses are not simply noisy observations of the target.
Nonetheless, this approach fits naturally within our framework for 
uncertainty propagation and active learning. Once an estimate 
$\finiteDimPar_\Ndesign$ for the neural network parameters is produced,
the challenge shifts to performing inference using the surrogate
$\jointEm(\Par) = \priorDens(\Par) \condDensEst[\finiteDimPar_\Ndesign](\obs \given \Par)$.
In practice, $\finiteDimPar_\Ndesign$ is usually a point estimate and thus 
emulator uncertainty is not quantified. However, various techniques 
can be utilized to obtain a predictive 
distribution (see \Cref{sec:neural-networks}), and thus propagate emulator
uncertainty during inference \citep{Lueckmann2019}. 

As with the other emulator
classes, the selection of the input training points $\Par_\designIdx$ is a 
design decision that will control the approximation accuracy in different
regions of $\parSpace$ \citep{SNLE,MurrayNLEvsNPE,Lueckmann2019}. Design
considerations are also influenced by whether the emulator is going
to be applied to many different data realizations $\obs$ or a single 
fixed observation. The former setting typically requires much
larger upfront simulation cost, but this cost is amortized over subsequent
datasets \citep{bayesflow_2020_original}.

We note that the flexibility in the input design is not shared by 
the closely related approach of \textit{neural posterior estimation}, 
which directly learns the posterior from samples \citep{murrayNPE,bayesflow_2020_original}. 
This approach does not fall within our framework as it emulates a target
as a function of $\predObs$ (rather than $\Par$) and does not offer 
as much flexibility in the input design (samples $\Par_\designIdx$ must
come from the prior, or otherwise be appropriately re-weighted).

Neural estimation methods often require generating a large amount of training data,
making them inapplicable in the case of expensive simulators. Addressing
this challenge is an active area of research \citep{costAwareSBI,multifidelitySBI}.
One approach to enable neural estimation with limited simulator budgets
is to incorporate a second layer of emulation, for example by using an underlying
forward model emulator to generate training data for a conditional
density estimator \citep{BurknerAmortizedSurrogate}.

\subsection{Model Checking and Diagnostics} \label{sec:model-checking}

Proper model checking and validation is an integral step in any statistical analysis.
In addition to standard Bayesian model checking for the underlying statistical 
model \citep{gelmanBDA,VisBayesianWorkflow}, the emulator adds another component
which should be validated. The modular framework provides the opportunity to
evaluate the emulator predictive performance (with respect to $\target$)
in isolation, as is standard for any predictive model.
The particular evaluation metrics will depend on the 
model class, but standard techniques like cross-validation can be applied in 
conjunction with scoring rules for evaluating probabilistic predictions \citep{scoringRules}.
Diagnostics for GP emulators are discussed in \citet{GPEmulatorDiagnostics}.

In addition to evaluating $\targetEm$ in isolation, the emulator fit should be
validated with respect to its use in downstream posterior estimation; e.g., 
the quality of the induced approximations $\likEm$, $\jointEm$, and $\postEm$
can be assessed. \citet{BurknerSurrogate} adapt Bayesian simulation-based
calibration checks to jointly validate the underlying statistical model,
the emulator, and the posterior inference algorithm. One consideration that
should be kept in mind is that $\postEm(\cdot) = \postDens(\cdot; \targetEm)$ 
is only well-defined if each trajectory $\jointDens(\cdot; \targetTraj)$
is integrable, so that $\normCst(\targetTraj)$ exists. This may not hold, 
for example, for a stationary GP over an unbounded domain $\parSpace$.
Ideally, the emulator would encode the inductive bias that the posterior 
tails must decay. However, this is challenging in general and common 
practical solutions simply truncate the prior to avoid difficulties
associated with tail behavior 
\citep{gp_surrogates_random_exploration,FerEmulation,RobertsUncProp}.
More careful theoretical treatment related to the existence of 
$\normCst(\targetTraj)$ for GP emulators is detailed in 
\citet{StuartTeck1,random_fwd_models}.

\section{Surrogate-Based Posterior Approximation} \label{sec:post-approx}
The ultimate objective of Bayesian inference is to characterize the posterior distribution 
$\postDens$, which is then used for downstream decision-making. Consequently, the 
central challenge in surrogate-based Bayesian inference is not merely fitting the emulator, but 
effectively utilizing it to approximate $\postDens$ \citep{StuartTeck1,SinsbeckNowak}. 
Any discrepancy between the emulator and the target map introduces bias into the posterior estimate. 
While iterative refinement can asymptotically eliminate this bias (\Cref{sec:active-learning}), 
computational budgets often preclude this luxury. To mitigate bias and properly calibrate uncertainty, 
the surrogate's predictive uncertainty must be acknowledged within the posterior approximation
\citep{BilionisBayesSurrogates,RobertsUncProp}. However, as noted by \citet{BurknerSurrogate}, 
there is no single ``correct'' mechanism for propagating 
surrogate uncertainty. In this section, we survey the conceptual frameworks for deriving these 
approximations and detail specific estimators proposed in the literature.

\begin{remark}
Throughout the remainder of the paper, we let $\postDens$ denote the target distribution for 
Bayesian inference. This is typically the exact posterior, but could also be the approximate
targets $\approxPost$ utilized in the ABC and SL settings (\Cref{sec:classical-lfi}).
Our aim is to focus on the uncertainty in the surrogate $\postDens(\Par; \targetEm)$ induced
by the emulator, not to analyze additional layers of approximation considered in methods
like ABC/SL.
\end{remark}

\subsection{The Plug-In Mean}
We begin by establishing a baseline: ignoring surrogate uncertainty entirely. The \inlinedef{plug-in mean} 
approximation replaces the random surrogate with its deterministic predictive mean $\emMean$, yielding:
\begin{equation}
\postApproxMean(\Par) \Def \frac{\jointDens(\Par; \emMean)}{\normCst(\emMean)}.
\label{eq:mean-approx}
\end{equation} 
While computationally straightforward and widely applied
\citep{VehtariParallelGP,trainDynamics,emPostDens,BurknerSurrogate,CLMBayesianCalibration,Lueckmann2019,BilionisBayesSurrogates}, 
this approach is only justified if the emulator is highly accurate, an assumption rarely satisfied in complex problems. 
In general, disregarding the surrogate uncertainty leads to biased posterior 
approximations with miscalibrated uncertainties 
\citep{BurknerSurrogate,BilionisBayesSurrogates,StuartTeck1,RobertsUncProp}.

\subsection{Frameworks for Constructing Posterior Estimators}
To move beyond the plug-in baseline, we outline three conceptual frameworks that
provide a rigorous foundation for the propagation of surrogate uncertainty.

\subsubsection{Decision Theory for Normalized Density} \label{sec:random-measure}
Given that the emulator $\targetEm$ is a random function, the induced posterior $\postEm$ is a 
random probability measure, provided that $\targetEm$ is constructed such that this measure
is well-defined. We generally treat $\postEm$ as a random density to avoid measure-theoretic technicalities.
One principled approach is to construct a deterministic approximation that best summarizes 
the uncertainty encoded in $\postEm$. 
\citet{RobertsUncProp} formalize this via the variational objective
\begin{equation}
\qDensOpt \Def \argmin_{\qDens \in \qSpace} \emE[\loss(\qDens, \postEm)],
\label{eq:random-measure-variational}
\end{equation}
for a loss function $\loss$ and space of densities $\qSpace$.
Estimators derived from this perspective naturally account for the coupling between the unnormalized density and the 
normalizing constant. However, the global dependence on $\normCstEm$ (a function of the 
entire random function $\targetEm$) often renders inference challenging for approximations of this form.
The random measure perspective is introduced in \citet{StuartTeck1} and adopted in 
\citet{RobertsUncProp,StuartTeck2,random_fwd_models,TeckHyperpar}.

\subsubsection{Decision Theory for Unnormalized Density} \label{sec:decision-theory}
To circumvent the difficulties stemming from the random normalizing constant, an alternative approach
is to estimate the \textit{unnormalized} density $\jointEm$ directly. When only the likelihood depends
on the surrogate, this is often equivalent to producing an approximate likelihood. 
Working with the unnormalized density is computationally 
attractive as $\jointEm(\Par)$ depends only on the marginal distribution of $\targetEm(\Par)$. Adopting a 
decision-theoretic viewpoint \citep{SinsbeckNowak}, we seek the estimator $\qFuncOpt$ that minimizes the 
expected loss
\begin{equation}
\qFuncOpt \Def \argmin_{\qFunc \in \qFuncSpace} \emE[\loss(\qFunc, \jointEm)],
\label{eq:decision-theory}
\end{equation}
over a set of candidate functions $\qFuncSpace$. The final approximate posterior is obtained by normalizing 
$\qFuncOpt$ post-hoc. While computationally convenient, these \inlinedef{pointwise estimators} ignore the 
correlation structure of the surrogate \citep{RobertsUncProp}, as well as the probabilistic coupling between
$\normCstEm$ and $\jointEm(\Par)$. This decision theoretic approach was originally proposed in 
\citet{SinsbeckNowak} and subsequently utilized in 
\citet{EfficientAcqABC,VehtariParallelGP,gpEmMCMC,StuartTeck2}.

\subsubsection{Uncertainty Propagation via Computation} \label{sec:unc-prop-computation}
Finally, rather than defining an explicit target density, the posterior approximation can be defined 
\textit{implicitly} as the output of a randomized algorithm. \citet{gpEmMCMC} adopt this perspective,
seeking to characterize how emulator uncertainty propagates through an MCMC algorithm.
In this case, posterior inference is viewed as a direct forward propagation of surrogate
uncertainty through the sampling machinery.

This viewpoint naturally leads to the modification of standard inference algorithms via the 
injection of additional noise; e.g., sampling emulator realizations at each iteration of an 
MCMC or importance sampling scheme. A comprehensive survey of such algorithms is presented 
in \citet{noisyMCSurvey}. Noisy MCMC schemes are studied in 
\citet{noisyMCMC,stabilityNoisyMH,pseudoMarginalMCMC}.

\subsection{Concrete Posterior Estimators}
We now examine specific estimators derived from these frameworks.

\subsubsection{The Expected Posterior (EP)} \label{sec:ep}
Under the random measure framework (\Cref{sec:random-measure}), minimizing the expected 
Kullback-Leibler (KL) divergence yields the mean of the random density, 
or the \inlinedef{expected posterior} (EP; \citet{BurknerSurrogate}):
\begin{align}
\postApproxEP(\Par) \Def \emE[\postEm(\Par)] = \int \postDens(\Par; \targetTraj) \emDist(\d\targetTraj).
\label{eq:ep}
\end{align}
\citet{RobertsUncProp} derive the EP from this viewpoint and argue for its use as a principled default 
posterior approximation in many applications. From the perspective of modular Bayesian inference, 
the EP can also be viewed as a \textit{cut distribution}, provided that $\targetEm$ is fit
using Bayesian methods \citep{PlummerCut,BurknerSurrogate}. In particular, supposing the 
emulator predictive distribution takes the form of a Bayesian posterior
$\emDist(\targetTraj) \propto \emDist[0](\targetTraj)p(\trainData_\Ndesign \given \targetTraj)$,
then the EP arises as a KL-optimal approximation to the joint model
\begin{equation}
p(\Par, \targetTraj, \predObs, \trainData_\Ndesign)
= \priorDens(\Par)\emDist[0](\targetTraj)p(\predObs \given \Par, \targetTraj)p(\trainData_\Ndesign \given \targetTraj),
\label{eq:joint-model}
\end{equation}
subject to the constraint that the $\targetTraj$-marginal equals $\emDist$
(and is thus not informed by $\predObs$; \citet{RobertsUncProp}).

Functionally, the EP is a mixture distribution averaging the 
posteriors induced by all possible surrogate trajectories. This interpretation suggests a nested sampling 
scheme (\Cref{alg:ep}) to marginalize over the trajectories $\targetTraj \sim \emDist$.

\begin{algorithm}[H]
    \caption{Direct sampling from $\postApproxEP$}
    \label{alg:ep}
    \begin{algorithmic}[1]
    \Function{sampleEP}{$\postEm, \NSample, M$}     
        \For{$\sampleIndex \gets 1, \dots, \NSample$} \Comment{Parallelizable}
        		\State $\targetTraj^{(\sampleIndex)} \sim \emDist$ \Comment{Sample emulator trajectory}
		\State $\Par^{(\sampleIndex, 1)}, \dots, \Par^{(\sampleIndex, M)} \sim \postDens(\cdot; \targetTraj^{(\sampleIndex)})$ \Comment{Sample posterior given trajectory}
	\EndFor
	\State \Return $\{\Par^{(\sampleIndex, m)}\}_{1 \leq \sampleIndex \leq \NSample, \ 1 \leq m \leq M}$
	\EndFunction
    \end{algorithmic}
\end{algorithm}

A practical implementation of this algorithm typically employs Metropolis-within-Monte-Carlo (MwMC), in which 
a separate MCMC scheme is run for each iteration of the loop (line 2) in order to produce approximate samples $\Par^{(\sampleIndex, m)}$
from each posterior trajectory $\postDens(\cdot; \targetTraj^{(\sampleIndex)})$ \citep{garegnani2021NoisyMCMC, BurknerSurrogate}.
However, sampling full trajectories (line 3) is non-trivial for infinite-dimensional models like GPs, a limitation that 
likely hindered early adoption of the EP \citep{StuartTeck1,SinsbeckNowak,VehtariParallelGP}. Recent advances utilize 
approximate inference schemes tailored to GP emulators to overcome this challenge \citep{RobertsUncProp,trainDynamics}.

\subsubsection{The Expected Unnormalized Posterior (EUP)} \label{sec:eup}
To avoid integrating over the normalizing constant, as required by the EP, one can average the numerator 
and denominator independently. This yields the \inlinedef{expected unnormalized posterior (EUP)}
\footnote{
\citet{StuartTeck1,StuartTeck2,GP_PDE_priors} refer to this as the \textit{marginal} approximation,
while \citet{BurknerSurrogate} instead use the term \textit{expected likelihood}. We prefer 
\textit{expected unnormalized posterior}, as used in \citet{RobertsUncProp}, since it also encompasses
log-posterior emulators.
}
\begin{align}
\postApproxEUP(\Par) \Def 
\frac{\emE[\jointEm(\Par)]}{\emE[\normCstEm]}
= \int \postDens(\Par; \targetTraj) \emDist(\d\targetTraj \given \obs),
\label{eq:eup}
\end{align}
where we have defined $\emDist(\d\targetTraj \given \obs) \propto \normCst(\targetTraj) \emDist(\d\targetTraj)$.
Analogous to the EP, the EUP is also a mixture of posterior trajectories, but the mixing distribution
re-weights $\emDist$ by $\normCst(\targetTraj)$ (which depends on $\obs$). 
Thus, the EUP up-weights trajectories $\postDens(\Par; \targetTraj)$ with high ``evidence'' 
$\normCst(\targetTraj)$ \citep{RobertsUncProp}. The EUP also corresponds to the $L^2$-optimal estimator under the unnormalized 
decision-theoretic framework from \Cref{sec:decision-theory} \citep{SinsbeckNowak}. 
Because the expectation $\emE[\jointEm(\Par)]$ 
can often be computed in closed form (see examples below) or unbiasedly estimated, the EUP enables 
the use of standard MCMC or pseudo-marginal algorithms 
\citep{pseudoMarginalMCMC,garegnani2021NoisyMCMC,DrovandiPMGP}.

\begin{table}[t] 
{\centering
\begin{tabular}{lccc}
\toprule
& \textbf{Plug-In Mean} & \textbf{EUP} & \textbf{EP} \\
\midrule \addlinespace[1.5ex]
\textbf{Ratio Estimator} & 
$\displaystyle \frac{\jointDens(\Par; \emE[\targetEm])}{\normCst(\emE[\targetEm])}$ &
$\displaystyle \frac{\emE[\jointDens(\Par; \targetEm)]}{\emE[\normCst(\targetEm)]}$ &
$\displaystyle \emE\left[\frac{\jointDens(\Par; \targetEm)}{\normCst(\targetEm)}\right]$ \\ [4ex]

\textbf{Mixing Dist.} & 
$\delta_{\emMean}(\d\targetTraj)$ & 
$\emDist(\d\targetTraj \given \obs)$ & 
$\emDist(\d\targetTraj)$ \\
\bottomrule
\end{tabular}\par}
\vspace{0.2cm}
\caption{\parbox[t]{\textwidth}{\raggedright
    Different perspectives on normalized density approximations defined by various forms of averaging.
    \emph{(Top)} Interpretations as ratio estimators. The EUP is a ratio of expectations, while the 
    EP is an expectation of a ratio.  Only the EP considers the coupling between
    the numerator and denominator.
    \emph{(Bottom)} Interpretations as mixture distributions of the form 
    $\int \postDens(\Par; \targetTraj) \omega(\d\targetTraj)$; i.e., a weighted mixture over 
    surrogate-induced posterior trajectories. The bottom row gives the form of the 
    mixing distribution $\omega(\d\targetTraj)$ for each approximation.
    We write $\emDist(\d\targetTraj \given \obs)$ to denote the distribution
    proportional to $\normCst(\targetTraj) \emDist(\d\targetTraj)$, and $\delta_{\emMean}$
    for the Dirac measure centered at $\emMean$.
}}
\label{tab:post-approx-comparison}
\end{table}

The EUP was originally proposed in \citet{BilionisBayesSurrogates}, in which it is derived as 
the marginal $p(\Par \given \obs, \trainData_{\Ndesign})$ under the joint model in
\Cref{eq:joint-model}. This implies that the EUP is not ``modular'' in the cut Bayesian
inference sense---the feedback from $\obs$ to $\targetTraj$ is introduced via the 
uncertainty propagation mechanism.
The hierarchical perspective extends to the case where $\emDist$
is not given by a Bayesian posterior; in particular, the EUP is the marginal 
$p(\Par \given \obs)$ under the model 
\begin{equation}
\targetTraj  \sim \emDist, \qquad 
\Par \sim \priorDens, \qquad
\predObs \given \targetTraj, \Par \sim p(\predObs \given \Par, \targetTraj).
\label{eq:eup-prob-model}
\end{equation}
Similar hierarchical modeling perspectives are considered in 
\citet{SinsbeckNowak,StuartTeck1,StuartTeck2,RobertsUncProp}.

Bounds on the Hellinger distance between the EUP and ground truth posterior under 
GP emulators are proved in \citet{StuartTeck1,StuartTeck2,random_fwd_models}.
\citet{RobertsUncProp} study the EUP as an approximation
to the EP, demonstrating that the two distributions can deviate when $\postEm(\Par)$ and
$\normCst(\Par)$ are highly correlated at particular ``influential'' values of $\Par$,
which is more likely to occur when the pointwise marginals $\jointEm(\Par)$ are heavy-tailed.
\Cref{tab:post-approx-comparison} compares the plug-in mean, EUP,
and EP estimators as (i) ratio estimators, and (ii) mixture distributions aggregating surrogate-induced
posterior trajectories.

\begin{example}[EUP, Gaussian Forward Model Emulator]
\label{ex:eup-fwd-em}
Consider the setting from \Cref{ex:fwd-em}, where
$\jointEm(\Par) = \priorDens(\Par) \Gaussian(\obs \given \targetEm(\Par), \covNoise)$
and $\targetEm(\Par) \sim \Gaussian(\emMean(\Par), \emVar(\Par))$.
Under these assumptions, the EUP is given by
\begin{align}
\postApproxEUP(\Par)
&\propto \emE\left[\priorDens(\Par)\Gaussian(\obs \given \targetEm(\Par), \covNoise) \right] \nonumber \\
&= \priorDens(\Par) \Gaussian(\obs \given \emMean(\Par), \covNoise + \emVar(\Par)). \label{eq:eup-gaussian}
\end{align}
Thus, in this setting the EUP admits a natural data space interpretation. 
The original inverse problem $\predObs = \fwd(\Par) + \noise$ is replaced by
$\predObs = \emMean(\Par) + \zeta_\Ndesign(\Par) + \noise$, with the parameter-dependent
noise term $\zeta_\Ndesign(\Par) \sim \Gaussian(0, \emVar(\Par))$ accounting for the 
uncertainty in the forward model \citep{CES,StuartTeck1}. Observe that
$\jointApproxEUP(\Par) \to \priorDens(\Par)$ as $\emVar(\Par) \to \infty$,
implying that ignorance about the true forward model results in 
prior reversion \citep{RobertsUncProp}. The particular EUP estimator in \Cref{eq:eup-gaussian}
is utilized in \citet{Surer2023sequential,weightedIVAR,StuartTeck2,GP_PDE_priors,CES,
idealizedGCM,VillaniAdaptiveGP,hydrologicalModel,hydrologicalModel2}.
\end{example}

\begin{example}[EUP, Gaussian Log-Likelihood Emulator]
\label{ex:eup-ldens-em}
Recall the setup from \Cref{ex:ldens-em}, where $\jointEm(\Par) = \priorDens(\Par)\Exp{\targetEm(\Par)}$
and $\targetEm(\Par) \sim \Gaussian(\emMean(\Par), \emVar(\Par))$.
Under these assumptions, the EUP is given by
\begin{align}
\postApproxEUP(\Par)
&\propto \emE\left[\priorDens(\Par)\Exp{\targetEm(\Par)} \right] \nonumber \\
&= \priorDens(\Par) \Exp{\emMean(\Par) + \frac{1}{2} \emVar(\Par)} \label{eq:eup-ldens-gauss} \\
&= \jointApproxMean(\Par) \Exp{\frac{1}{2} \emVar(\Par)}. \nonumber
\end{align}
The final expression demonstrates that, in this setting, the EUP inflates the plug-in mean estimator
where uncertainty is large \citep{StuartTeck1,StuartTeck2,VehtariParallelGP}. 
The estimator scales very quickly as surrogate uncertainty increases, with 
$\jointApproxEUP(\Par) \to \infty$ as $\emVar(\Par) \to \infty$.
Thus, unlike the forward model analog in \Cref{ex:eup-fwd-em}, the EUP does not exhibit prior
reversion under surrogate ignorance \citep{RobertsUncProp}.
Due to the exponential dependence in \Cref{eq:eup-ldens-gauss}, this distribution can be highly 
sensitive to small changes in $\emMean(\Par)$ and $\emVar(\Par)$, making it particularly susceptible to 
extreme concentration in small regions with large uncertainty 
\citep{RobertsUncProp,VehtariParallelGP,DrovandiPMGP}.
This is illustrated in the second column of \Cref{fig:post_norm_approx_1d}; while the GP 
fit (first row) appears reasonable, the induced EUP is effectively a point mass centered at the 
most uncertain location, a pathology that the EP avoids. The third column illustrates that the
EUP is stabilized by enforcing an upper bound on the GP predictive distribution.
\end{example}

\begin{figure}
    \centering
    \includegraphics[width=\textwidth, height=0.98\textheight, keepaspectratio]{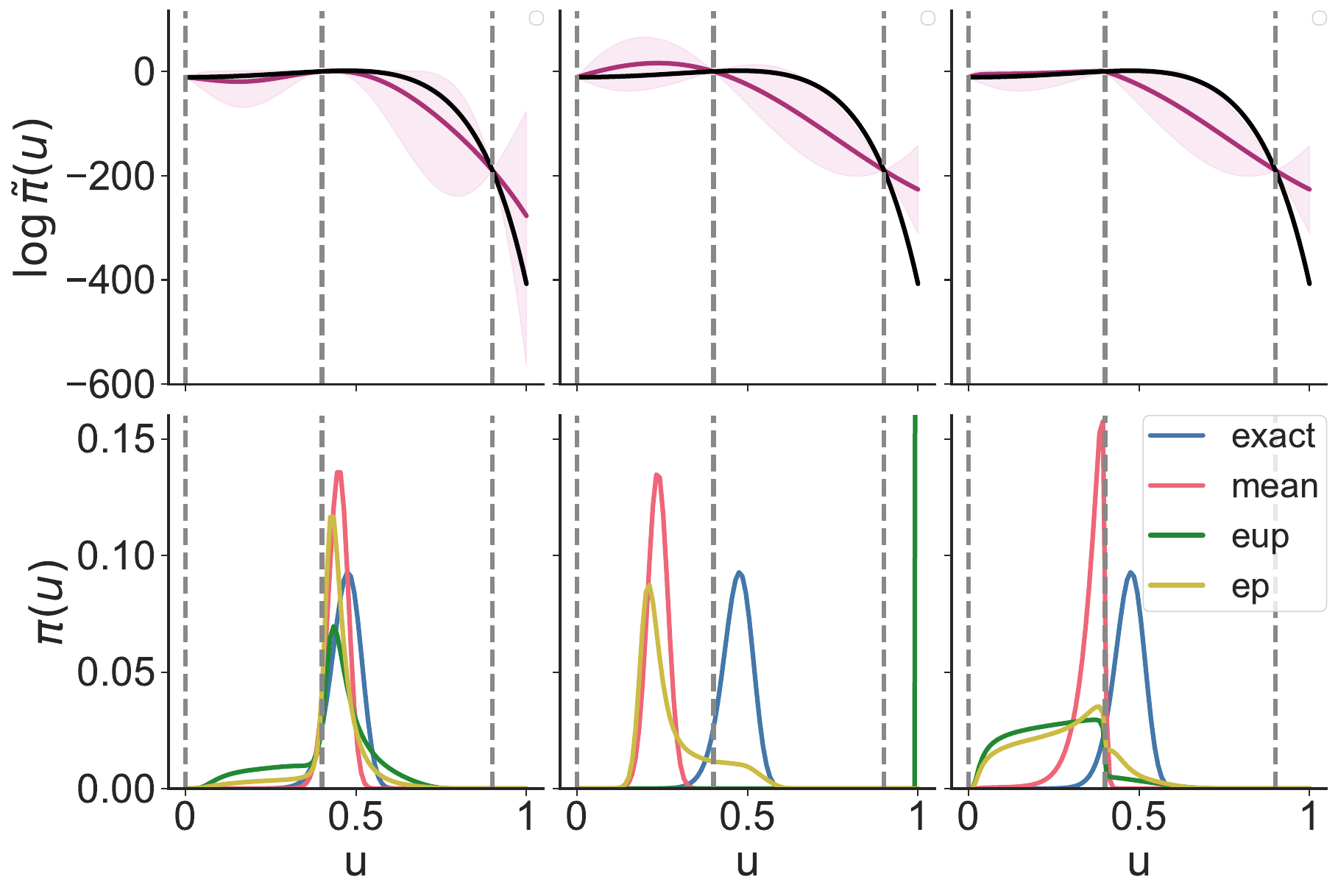}

    \caption{A continuation of the example in \Cref{fig:em_dist_1d}. From left to right, the columns correspond
    	to the GP forward model emulator, GP log-posterior emulator, and clipped (upper bounded) GP log-posterior emulator.
	The top row summarizes the pointwise marginal distributions of 
	\texorpdfstring{$\log \jointEm(\Par)$}{the log unnormalized posterior approximation}, showing the 
	mean (magenta line) and 90\% intervals against the true log-posterior (black). The bottom
	row presents different normalized posterior approximations relative to the true posterior (blue line). The
	vertical dashed lines indicate the locations of the design points.}
\label{fig:post_norm_approx_1d}
\end{figure}

\subsubsection{Other Pointwise Estimators}
The instability of the EUP in log-density emulation (\Cref{ex:eup-ldens-em}) has motivated alternative pointwise estimators.
\paragraph{Marginal Quantiles.} To improve robustness, \citet{VehtariParallelGP} propose using pointwise quantiles of 
    $\jointEm(\Par)$ rather than the mean. As detailed in \Cref{ex:quantile-pw-ldens-em}, setting the quantile probability 
    $\quantileProb > 0.5$ inflates the posterior in regions with higher uncertainty, but scales more moderately relative to the EUP.
\paragraph{Marginal Modes.} \citet{gpEmMCMC} utilize the pointwise mode of $\jointEm(\Par)$. This effectively penalizes 
    regions with high surrogate variance, encouraging the posterior to decay in these regions. While this can have the desirable
    effect of promoting tail decay in the posterior estimate, it neglects uncertain regions that may in reality contain posterior mass.
\paragraph{Expected Log-Likelihood.} In the forward model emulation setting, \citet{BurknerSurrogate} consider a
    posterior estimate proportional to \\ $\priorDens(\Par) \Exp{\emE[\log \lik(\Par; \targetEm)]}$ and draw 
    a connection with power-scaled likelihoods. However, the authors ultimately recommend the EP and EUP as preferred 
    alternatives to this method.

\begin{example}[Quantile Estimator, Gaussian Log-Likelihood Emulator]
\label{ex:quantile-pw-ldens-em}
We return to the setting from \Cref{ex:ldens-em} with a Gaussian log-likelihood emulator. 
Let $\emQ(\cdot)$ denote the $\quantileProb \in (0, 1)$ quantile of its argument with respect to $\emDist$. 
A quantile-based estimate in this setting takes the form
\begin{align*}
\postApproxQuantile(\Par) \propto
\priorDens(\Par) \emQ(\Exp{\targetEm(\Par)})
&= \priorDens(\Par) \Exp{\emMean(\Par) + \GaussianQuantile(\quantileProb) \emSD(\Par)} \\
&= \jointApproxMean(\Par) \Exp{\GaussianQuantile(\quantileProb) \emSD(\Par)},
\end{align*}
where $\GaussianQuantile(\quantileProb)$ is the $\quantileProb$-quantile of $\Gaussian(0,1)$. 
This expression is of the same form 
as the EUP in \Cref{ex:eup-ldens-em}, but the uncertainty inflation term scales more slowly as a function
of $\emSD(\Par)$. The special case $\quantileProb = 1/2$ (i.e., the median) 
reduces to the plug-mean approximation, while values $\quantileProb > 1/2$ imply 
the density will be inflated in regions of higher surrogate uncertainty. The quantile approximation
arises from the unnormalized decision theoretic viewpoint by considering losses 
of the $L^1$ variety \citep{VehtariParallelGP,gpEmMCMC}. 
Quantile-based estimators are also utilized in \citet{quantileApprox,FATES_CES}.
\end{example}

\subsection{Comparison and Practical Recommendations}
In general, there is no one-size-fits-all uncertainty propagation method; the best choice in 
a particular instance depends on the goals of the practitioner and the particular problem
at hand. Following \citet{RobertsUncProp}, we recommend the EP as a reasonable default for
problems where uncertainty calibration is of central importance or where the likelihood 
surrogate $\lik(\cdot; \targetEm)$ is misspecified with respect to $\lik(\cdot)$.
The EP enforces a clear separation between learning $\target$ and learning $\Par$, which
aligns with the view that the emulator is a computational artifact, rather than a true 
latent parameter in $\obs$'s data generating process. This separation is especially 
desirable when $\lik(\cdot; \targetEm)$ is misspecified, following from the fact that 
the EP can be interpreted as a cut posterior, which is specifically designed to mitigate
problems of misspecification \citep{PlummerCut,modularization}. The EP simply mixes 
together each component distribution $\postDens(\cdot; \targetTraj)$ with mixing weights 
$\emDist(\targetTraj)$. These weights encode the quality of the surrogate fit to the target, 
without considering the downstream fit to the data $\obs$. Generally speaking, the 
EP tends to be more diffuse (i.e., conservative) relative to the EUP, and more likely to 
preserve qualitative features such as multimodality in $\postEm$ \citep{RobertsUncProp}. 

By contrast, the EUP weights each component distribution by $\normCst(\targetTraj) \emDist(\targetTraj)$,
where $\normCst(\targetTraj)$ depends on $\obs$. Thus, the weight given to a posterior trajectory
$\postDens(\cdot; \targetTraj)$ takes into account both the quality of the surrogate fit to the
target, in addition to the marginal likelihood of $\obs$ given $\targetTraj$. This follows
from a hierarchical Bayesian perspective where the emulator is treated as a latent 
variable governing the data generating process for $\obs$. The EUP yields a good approximation
to the true posterior when $\lik(\cdot; \targetEm)$ is well-specified (the approximation is 
exact when $\lik(\cdot; \targetEm)$ is unbiased). However, under misspecification the 
EUP can produce undesirable and counterintuitive results. For example, a 
trajectory $\targetTraj$ may be a poor approximation to $\target$, but the EUP may up-weight
$\postDens(\cdot; \targetTraj)$ if the bad emulator prediction happens to fit the 
data $\obs$ well. In general, weighting trajectories by $\normCst(\targetTraj)$ is reasonable when 
a high value of $\normCst(\targetTraj)$ provides a reliable signal of approximation quality.
This happens when the true simulator is well-specified, and emulator approximation errors 
with respect to $\target$ degrade the overall model fit to $\obs$ in a way that is reflected
in $\normCst(\targetTraj)$. In general, these conditions are difficult to satisfy. One clear
example is the log-density emulation setting in \Cref{ex:eup-ldens-em}. In this case, 
$\normCst(\targetTraj)$ is artificially inflated by the emulator uncertainty, in which 
case the weights are a poor reflection of ``evidence'' for $\obs$. For a more thorough 
comparison of the EUP and EP, we refer to \citet{RobertsUncProp,BurknerSurrogate}.

Loosely speaking, when the shape of $\postDens(\cdot; \targetTraj)$ does not vary significantly
with $\targetTraj$ then the difference between the EUP and EP diminishes \citep{RobertsUncProp}. 
Certain emulators such as GPs naturally lead to greater shape variability relative to more 
constrained parametric approximations. If computational resources
allow, both the EP and EUP can be computed to test the sensitivity to a particular method. In some
cases, neither distribution may provide an adequate summary of $\postEm$; the final row of 
\Cref{fig:em_dist_1d} illustrates that the full distribution of this random posterior 
contains more information than any of the deterministic approximations alone provide.
Further work is needed to develop computationally efficient means of summarizing this uncertainty.
Moreover, in certain settings deterministic approximations other than the EP or EUP may 
be more adequate, such as those designed for robustness in specific cases \citep{generalizedCut}.

\section{Active Learning for Bayesian Inversion} \label{sec:active-learning}
Up until this point, we have considered a fixed surrogate model fit to training data 
obtained from simulator runs at a set of design points $\design \Def \{\Par_1, \dots, \Par_{\Ndesign}\}$.
The simplest approach to select these points is to sample from some distribution over $\parSpace$
(e.g., the prior $\priorDens$). Popular variants employ space-filling adjustments to encourage
the design points to be spread out \citep{initDesignReview}. 
Such a ``one-shot'' design is appealing in that it allows for maximal 
parallelism in simulator runs. However, the posterior distribution commonly concentrates on a small 
subset of the prior support, implying that a prior-based 
design will likely allocate the majority of points in regions with negligible posterior mass.
In many such cases, one cannot hope to achieve an acceptable posterior approximation with 
a static design based solely on prior information. 

To address this issue, the design can instead be constructed 
sequentially, with information from the current simulations used to inform the selection of new design points.
We refer to this broad class of algorithms as \inlinedef{active learning} for Bayesian inverse problems.
At the extreme end of this spectrum is a \inlinedef{pure sequential} strategy, where simulations are performed serially, 
one at a time. This allows for the most informed design point selection, but fails to exploit parallel computing 
resources and is thus impractical in many large-scale applications.
To balance these extremes, practical approaches typically adopt a batch-sequential workflow: simulator runs are 
spread across multiple rounds, with a batch of model runs executed in parallel each round.
For simplicity, we focus on this ``synchronous parallel'' setting, though asynchronous parallelism
is also possible \citep{parallelBOThompson}. 

The central challenge faced in active learning is 
navigating the ``exploration vs. exploitation'' 
trade-off \citep{reviewBayesOpt,BadiaRL}. Algorithms must exploit current information 
to refine the surrogate in known high-probability regions, while simultaneously exploring high-uncertainty 
areas to avoid missing other important regions.

Active learning algorithms have been explored extensively in the context of Bayesian inference. Our focus
here is on the sub-class of methods where design points are selected sequentially to refine a surrogate
model with the goal of improving a posterior approximation. We identify three (non-mutually exclusive)
strategies that are commonly used for this purpose:
\begin{enumerate}
    \item \inlinedef{Design Optimization}: Explicitly optimizing an objective function to select the next batch of points.
    \item \inlinedef{Tempering}: Constructing a sequence of intermediate target distributions to guide the surrogate toward the posterior.
    \item \inlinedef{MCMC-based Exploration}: Embedding the design construction directly within a posterior sampling algorithm.
\end{enumerate}

\subsection{Design Optimization} \label{sec:design-optimization}
Many active learning algorithms explicitly define a design criterion, or \inlinedef{acquisition function}, that quantifies the utility 
of running the simulator at a set of candidate locations. This objective function is then optimized at each round
of the active learning procedure to select the next batch of design points. Similar approaches are also widely applied in other
applications involving surrogate models. For example, the field of Bayesian optimization focuses on defining 
acquisition functions tailored to the optimization of  black-box functions \citep{reviewBayesOpt}. Acquisition functions for 
Bayesian inversion are analogous, but should instead encode the explicit goal of estimating the posterior distribution.

To better appreciate the considerations involved in defining an acquisition function, it is useful to consider how the 
criterion should behave in the absence of uncertainty. In Bayesian optimization, clearly the ideal behavior is to 
return the optimum of the function being optimized. In the present context, it is less clear which arrangement of points 
is optimal in representing a probability distribution with a surrogate model. Results suggest that the points should be sampled from an 
over-dispersed variant of the distribution, thereby ensuring adequate coverage of the tails \citep{StuartTeck2,briol2017sampling}. 
These results are consistent with the intuitive notion that design points should be placed in high-probability regions, while also 
ensuring adequate coverage to constrain the global surrogate behavior. With imperfect information, acquisition functions must 
balance this ideal with the need to reduce uncertainty in poorly explored regions.

The following sections explore design criteria that seek to navigate this particular flavor of an exploration-exploitation tradeoff.
We begin by formalizing the problem, discuss theoretical frameworks used for deriving design criteria, and then 
offer a survey of concrete acquisitions used in the literature. We close the section with a brief discussion of practical methods for 
carrying out the optimization of acquisition functions.

\subsubsection{Problem Setup}
Throughout this section, we consider an emulator $\targetEm$ that has been trained 
on an existing design $\trainData_{\Ndesign} = \{\design, \designResponse\}$, 
with the goal being to select a new batch $\ParBatch \subset \parSpace$ of $\Nbatch$ design points. 
For an acquisition function $\acq: \parSpace^\Nbatch \to \R$, the batch is 
selected by solving the optimization problem
\begin{equation}
\ParBatchOpt \Def \argmin_{\ParBatch \in \parSpace^\Nbatch} \acq(\ParBatch).
\label{eq:acq-opt}
\end{equation}
By focusing on only one step of a potentially multi-stage sequential design algorithm, we 
implicitly restrict our focus to \textit{myopic} (i.e., one-step look-ahead) strategies, 
which do not factor in the effect of future design acquisitions when selecting the current batch.
Non-myopic strategies have been considered (e.g., via dynamic programming), though they 
typically come at the cost of significant 
computational expense \citep{BONonMyopic,SURThesis}.

For a candidate batch $\ParBatch$, let $\ParBatchAug \Def \{\design, \ParBatch\}$ denote the extended 
set of design points, $\targetEm[\Naugment]$
the updated emulator, and $\jointEm[\Naugment]$ the resulting unnormalized posterior surrogate.
When necessary, we make explicit the dependence on the new training data by writing
$\targetEm[\Naugment]^{\ParBatch,\batchResponse}$ and 
$\jointEm[\Naugment]^{\ParBatch,\batchResponse}$,
where $\batchResponse \sim \simObsDist(\cdot \given \target(\ParBatch))$.
We similarly write $\acq(\ParBatch) = \acq(\ParBatch; \design, \designResponse)$ when it 
is helpful to emphasize the dependence of the acquisition function on the current training data.

\begin{remark}
The process of updating the surrogate from $\targetEm$ to $\targetEm[\Naugment]$ will depend on the
particular model in use. For GP surrogates, this entails conditioning 
$\targetEm[\Naugment] \Def \targetEm \given [\simObsPredProcess(\ParBatch) = \batchResponse]$, typically requiring 
$\BigO(\Ndesign^2 \Nbatch + \Nbatch^3)$ operations. The GP hyperparameters may also be updated, at the 
cost of $\BigO([\Ndesign + \Nbatch]^3)$ scaling. For parametric models, the surrogate update will entail 
re-estimation of the model parameters using the augmented dataset.
\end{remark}

\subsubsection{Frameworks for Constructing Design Criteria}
Similar to the posterior estimators in \Cref{sec:post-approx}, design criteria are commonly 
derived from first-principles frameworks.

\paragraph{Decision Theoretic Framework.}
Recall the unnormalized density decision theoretic framework formalized in \Cref{eq:decision-theory}, 
whereby an unnormalized posterior estimate is chosen as the optimizer of an expected loss:
\begin{equation}
\qFunc_\Ndesign \Def \argmin_{\qFunc \in \qFuncSpace} \emE[\loss(\qFunc, \jointEm)].
\end{equation}
The minimal value of the objective function is known as the \textit{Bayes' risk}:
\begin{equation}
\bayesrisk \Def \emE[\loss(\qFunc_\Ndesign, \jointEm[\Ndesign])].
\end{equation}
If the surrogate is updated using $\{\ParBatch, \batchResponse\}$, it will yield a corresponding Bayes' risk
$\bayesrisk[\Naugment]^{\ParBatch,\batchResponse}$. A natural design strategy is to choose $\ParBatch$
so that this risk is as small as possible. The problem, of course, 
is that we cannot observe $\batchResponse$ without evaluating $\target(\ParBatch)$.
Thus, we marginalize these values under the assumption $\batchResponse \sim \simObsPredDist(\ParBatch)$, 
which models the unobserved responses according to the current predictive distribution over simulator 
observations (recall \Cref{def:pred-sim-obs-process}). This yields an acquisition function of the form 
\begin{equation}
\acq(\ParBatch) \Def 
\E_{\batchResponse \sim \simObsPredDist(\ParBatch)}\left[\bayesrisk[\Naugment]^{\ParBatch,\batchResponse}\right].
\label{eq:acq-bayes-risk}
\end{equation} 
In principle, this framework could be modified to consider a loss
$\loss(\qDens, \postEm)$ between \textit{normalized} densities (as in \Cref{eq:random-measure-variational}).
To our knowledge this has not been considered in practice due to the practical difficulties caused by 
the normalizing constant.

\paragraph{Stepwise Uncertainty Reduction (SUR).}
\textit{Stepwise Uncertainty Reduction (SUR)} is another theoretical framework that has been used to develop 
and analyze sequential design algorithms for applications such as optimization and 
reliability analysis \citep{supermartingaleSUR,BectSUR}. The
framework has also recently been used in solving Bayesian inverse problems \citep{weightedIVAR}.
Let $\surmetric(\targetEm)$ denote a scalar-valued uncertainty measure we would
like to minimize. SUR acquisition functions are defined as the expected one-step-ahead uncertainty
\begin{equation}
\acq(\ParBatch) \Def 
\E_{\batchResponse \sim \simObsPredDist(\ParBatch)}\left[\surmetric(\targetEm[\Naugment]^{\ParBatch,\batchResponse})\right].
\end{equation}
This framework is quite general, and particular choices of $\surmetric$ coincide with criteria derived using the 
Bayesian decision theoretic approach. The primary benefit of SUR is theoretical, as it provides general conditions 
to ensure that the uncertainty will converge to zero if the algorithm is repeated indefinitely. 

\subsubsection{Concrete Acquisition Functions}
We now provide a survey of design criteria used in the literature.

\begin{figure}
    \centering
    \includegraphics[width=\textwidth, height=0.98\textheight, keepaspectratio]{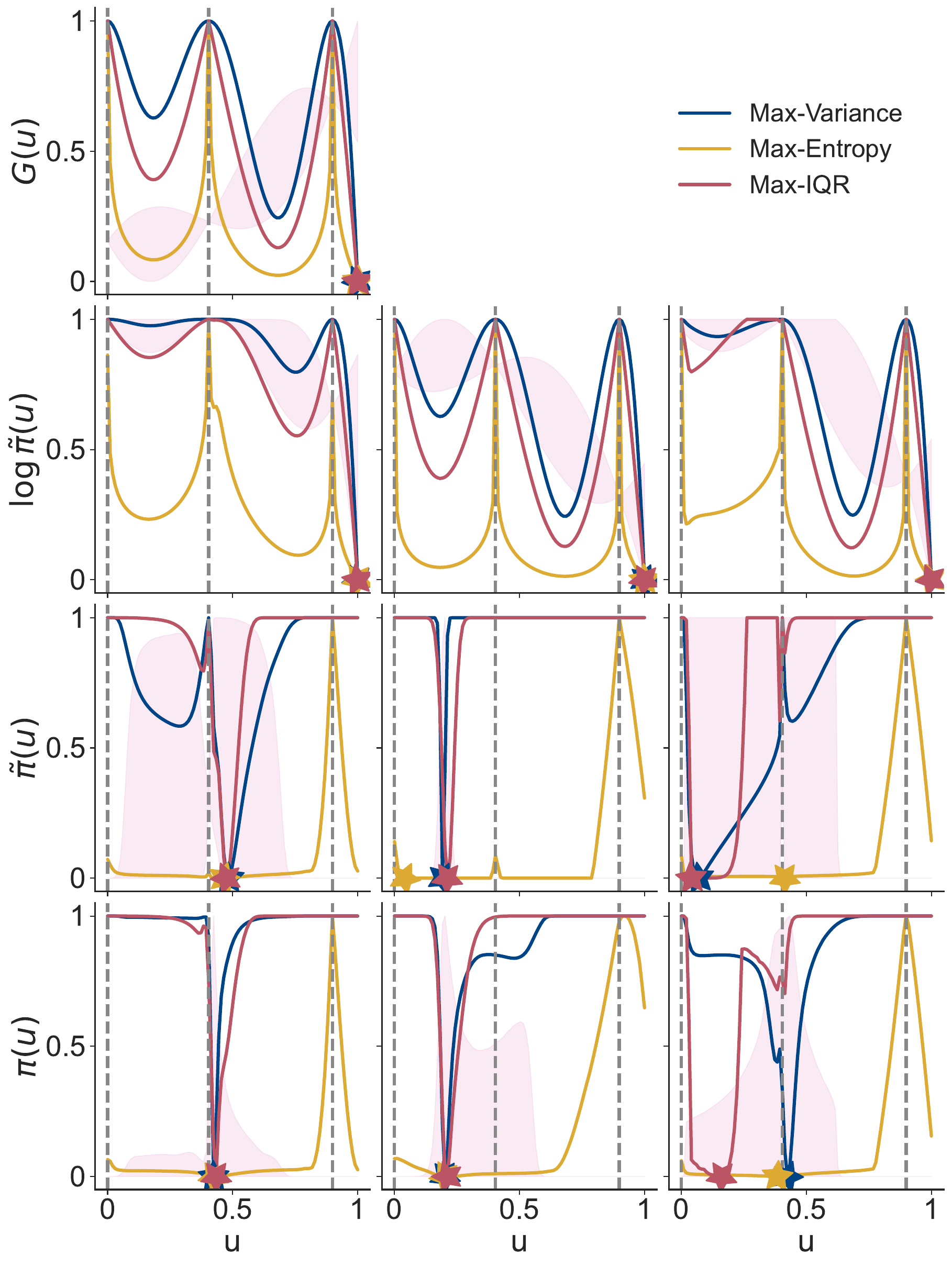}

    \caption{A continuation of \Cref{fig:em_dist_1d,fig:post_norm_approx_1d}. The shaded regions
    summarize the surrogate pushforward distributions as before. The lines correspond to different
    ``maximum uncertainty'' criteria \texorpdfstring{$\acq(\Par)$}{} targeting uncertainty in 
    the respective distributions. For example, the blue lines in the left column (from top to 
    bottom) show $-\Var(\targetEm(\Par))$, $-\Var(\log \jointEm(\Par))$, 
    $-\Var(\jointEm(\Par))$, and $-\Var(\postEm(\Par))$, where $\targetEm$ is a forward model emulator.
    The two other lines similarly show the (negated) pointwise entropy and interquartile range, and
    the columns vary the underlying emulator. The star markers indicate the optimal value for each 
    acquisition function.}
    \label{fig:acq_pw}
\end{figure}

\paragraph{Local Single-Point Criteria.} \label{sec:single-point}
Before describing acquisition functions derived from the above frameworks, we introduce some 
common heuristic strategies. When optimizing for a single design point ($\Nbatch = 1$), an intuitive 
approach is to simply select the input where the uncertainty in $\postEm(\Par)$ is largest.
Owing to the challenge of dealing with the random normalizing constant $\normCstEm$,
previous work has opted to instead target the uncertainty in the unnormalized density surrogate
$\jointEm(\Par)$. If predictive variance
is used as the measure of uncertainty, this yields the \inlinedef{maximum variance} criterion
\begin{equation}
\acq(\Par) \Def -\Var_{\emDist}[\jointEm(\Par)], \label{eq:acq-maxvar}
\end{equation}
which is negated to align with our convention of 
minimizing acquisition functions.
Alternative ``maximum uncertainty'' criteria can be defined by changing the measure
of uncertainty (e.g., variance, entropy, interquartile range) and the target quantity
for which uncertainty is assessed (e.g., $\jointEm$, $\targetEm$).
\Cref{fig:acq_pw} plots $\acq(\Par)$ under varying combinations of these 
choices for different underlying emulators. The particular criterion in \Cref{eq:acq-maxvar}
is used in \citet{Lueckmann2019} for sequential neural likelihood estimation, with the 
variance approximated using a deep ensemble. The below example notes other cases in 
which it is used in conjunction with GP emulators.

\begin{example}[Maximum Variance, Log-Density Emulator]
\label{ex:maxvar-ldens}
Consider the Gaussian log-likelihood emulation setting from \Cref{ex:eup-ldens-em}
where 
\begin{align}
&\jointEm(\Par) = \priorDens(\Par) \Exp{\targetEm(\Par)},
&&\targetEm(\Par) \sim \Gaussian(\emMean(\Par), \emVar(\Par)).
\end{align}
In this case, the maximum variance acquisition is
\begin{equation}
\acq(\Par) = -\priorDens^2(\Par) \left[\Exp{\emVar(\Par)} - 1 \right] \Exp{2\emMean(\Par) + \emVar(\Par)},
\label{eq:maxvar-llik}
\end{equation}
which favors the selection of points both where the current log-likelihood estimate $\emMean(\Par)$ and the emulator 
uncertainty $\emVar(\Par)$ are large. This criterion is used in 
\citet{Kandasamy2017,AlawiehIterativeGP}, though
\citet{VehtariParallelGP,wang2018adaptive} argue that it provides a misleading measure of uncertainty due to the 
heavy tails of $\jointEm(\Par)$. They propose using the interquartile range or entropy as more robust 
uncertainty metrics in this setting. The acquisition in \Cref{eq:maxvar-llik}
is plotted in \Cref{fig:acq_pw} (second row, second column), compared to the 
interquartile range and entropy alternatives.
\end{example}

A variety of other single-point acquisition strategies have been explored in the 
literature. Several authors have borrowed criteria from Bayesian optimization 
to target the placement of new design points in high-posterior 
regions \citep{Takhtaganov2018,BOforLFI,gp_surrogates_random_exploration}.
The strategy proposed by \citet{gp_surrogates_random_exploration} explicitly decouples
exploration and exploitation via a two-point acquisition; the first point is chosen as
the maximizer of the current log-posterior estimate (exploitation) and the second 
is sampled from $\priorDens$ (exploration).

While nominally designed for the pure sequential $\Nbatch = 1$ setting, single-point 
criteria can be adapted for batch acquisition. The simple approach of selecting the $\Nbatch$
points with the smallest acquisition values can result in a batch of very similar points.
Greedy heuristics, discussed in \Cref{sec:solve-opt}, can prevent this 
undesireable ``clumping'' behavior. Alternatively, ``stochastic'' variants of maximum 
uncertainty criteria can be defined by sampling from a density proportional to 
$-\acq(\Par)$, rather than optimizing the criterion \citep{BOforLFI,EfficientAcqABC}. 
This encourages exploration, and provides a mechanism for batch selection.  

\paragraph{Global Multipoint Criteria.}
While computationally convenient, single-point criteria suffer from two primary drawbacks: (1) they are not 
naturally defined in the batch sequential setting; and (2) they are purely local, neglecting the global impact a 
new design point has across $\parSpace$. The latter property commonly results in maximum uncertainty 
criteria targeting points in the edges or tails of the parameter space 
\citep[e.g.,][Section 6.2.1]{gramacy2020surrogates}. We now introduce a class 
of acquisition functions, referred to as \inlinedef{Expected Conditional Uncertainty (ECU)} criteria, that address 
both of these limitations.

Intuitively, ECU acquisitions arise from the following logic: we would like 
to select an input batch $\ParBatch$ that results in an updated (unnormalized) posterior 
estimate $\jointEm[\Naugment]$ with the lowest
possible uncertainty on average over $\parSpace$.
\footnote{As before, this could be extended to operate directly on the normalized densities. While arguably more principled,
this also leads to more challenging calculations.}
Given that we cannot typically compute this uncertainty without
observing $\batchResponse$, we utilize the same approach described for the Bayes' risk
(\Cref{eq:acq-bayes-risk})
and marginalize over this quantity with respect to $\simObsPredDist$. As a concrete example, 
if we again choose variance as our measure of uncertainty, we obtain the expected conditional variance
\begin{equation}
\acq(\ParBatch) \Def 
\int_{\parSpace} 
\E_{\batchResponse \sim \simObsPredDist(\ParBatch)} \left\{\Var_{\emDist[\Naugment]}[\jointDens(\Par; \targetEm[\Naugment]^{\ParBatch,\batchResponse}) \given \ParBatch, \batchResponse]\right\} \ \weightdens(\d\Par),
 \label{eq:acq-intvar}
\end{equation}
where $\weightdens$ is a measure over $\parSpace$. Variations of ECU criteria can be created by changing the measure 
of uncertainty (e.g., variance, entropy, interquartile range), the target quantity (e.g., $\jointEm$, $\targetEm$), and the 
weighting measure $\weightdens$. For example, targeting the variance of $\targetEm$ with uniform $\weightdens$ 
yields the classical integrated mean squared prediction error \citep{Mercer_kernels_IVAR}.
\citet{gpEmMCMC} target uncertainty in the acceptance probability of a Metropolis-Hastings
algorithm, deriving ECU-style acquisitions under a Bayesian decision theoretic framework.
Similarly, \citet{EfficientAcqABC} derive an integrated variance criterion targeting uncertainty 
in the ABC approximation to the (unnormalized) posterior. 

The expected variance term in \Cref{eq:acq-intvar} admits a closed form in the settings of
\Cref{ex:eup-ldens-em,ex:eup-fwd-em}, as shown below. These examples also demonstrate how ECU
criteria can be interpreted within the decision theoretic and SUR frameworks.

\begin{example}[Expected Conditional Variance, Log-Density Emulator]
\label{ex:evar-ldens}
We continue \Cref{ex:maxvar-ldens}, extending the maximum variance criterion into the expected conditional
variance. With this setup, the ECU-variance criterion in \Cref{eq:acq-intvar} reduces to
\begin{equation}
\acq(\ParBatch) = \int_{\parSpace} 
\Var_{\emDist}\left[\jointEm(\Par) \given \simObsPredProcess(\ParBatch) = \emMean[\Ndesign](\ParBatch) \right] \varInflation_{\Ndesign}(\Par; \ParBatch)
\weightdens(\d\Par)
\label{eq:acq-intvar-saa-ldens}
\end{equation}
where
\begin{align*}
\Var_{\emDist}\left[\jointEm(\Par) \given \simObsPredProcess(\ParBatch) = \emMean[\Ndesign](\ParBatch) \right]
&= \priorDens^2(\Par) \Exp{2\emMean[\Ndesign]\left(\Par\right) + \emVar[\Naugment](\Par; \ParBatch)} \cdot \\
&\qquad\qquad\qquad \left[\Exp{\emVar[\Naugment](\Par; \ParBatch)} - 1 \right] \\
\shortintertext{and}
\varInflation_{\Ndesign}(\Par; \ParBatch)
&= \Exp{2\left(\emVar[\Ndesign](\Par) - \emVar[\Naugment](\Par; \ParBatch)\right)}.
\end{align*}

We write $\emVar[\Naugment](\Par; \ParBatch)$ for the predictive variance of the updated GP, which
does not depend on the unobserved response $\batchResponse$.
The first term in \Cref{eq:acq-intvar-saa-ldens} is the variance of the unnormalized posterior surrogate
that results from updating $\targetEm$ with $\{\ParBatch, \emMean(\ParBatch)\}$; 
i.e., the current GP prediction is treated as if it were the true response.
The second term accounts for the uncertainty in the true response: the penalty
$\emVar[\Ndesign](\Par) - \emVar[\Naugment](\Par; \ParBatch)$ is large when the input batch 
$\ParBatch$ is highly ``influential.''

The expected conditional variance acquisition corresponds to the Bayes' risk under an $L^2$ 
loss \citep{VehtariParallelGP}. It can also be viewed as an SUR acquisition with uncertainty measure
$\surmetric(\targetEm) \Def \int \Var_{\emDist}[\jointEm(\Par)] \weightdens(\d\Par)$.
This acquisition is considered in \citet{VehtariParallelGP}, but the authors ultimately recommend
alternative robust criteria defined using the interquartile range in place of the variance.
\end{example}

\begin{example}[Expected Conditional Variance, Forward Model Emulator]
\label{ex:evar-fwd}
Consider the Gaussian forward model setting with 
$\jointEm(\Par) = \priorDens(\Par) \Gaussian(\obs \given \targetEm(\Par), \covNoise)$, 
$\targetEm(\Par) \sim \Gaussian(\emMean(\Par), \emVar(\Par))$. Letting 
$\CovComb(\Par) \Def \covNoise + \emVar(\Par)$ and 
$\CovComb[\Naugment](\Par) \Def \covNoise + \emVar[\Naugment](\Par; \ParBatch)$, 
the expected conditional 
variance criterion simplifies to
\begin{align}
\acq(\ParBatch) &= 
\int_{\parSpace} \priorDens^2(\Par) \bigg[\frac{\Gaussian\left(\obs \given \emMean(\Par), \CovComb(\Par) - \frac{1}{2}\covNoise \right)}{2^{\dimObs/2} \det(2\pi \covNoise)^{1/2}} - \label{eq:acq-int-var-fwd} \\
&\qquad\qquad \frac{\Gaussian\left(\obs \given \emMean(\Par), \CovComb(\Par) - \frac{1}{2}\CovComb[\Naugment](\Par; \ParBatch) \right)}{2^{\dimObs/2} \det(2\pi \CovComb[\Naugment](\Par; \ParBatch))^{1/2}} \bigg] \weightdens(\d\Par). \nonumber
\end{align}
\citet{SinsbeckNowak} are the first to propose an ECU-variance criterion in the forward model emulation setting,
and \citet{Surer2023sequential,SurerBatchStochastic} utilize similar acquisitions. 
\citet{weightedIVAR} alternatively consider a criterion of the form
\begin{equation}
\acq(\ParBatch) = \int_{\parSpace} \emVar[\Naugment](\Par; \ParBatch) \postApproxEUP(\d\Par),
\label{eq:weighted-ivar}
\end{equation}
corresponding to the SUR uncertainty metric 
$\surmetric(\targetEm) \Def \int \Var_{\emDist}[\targetEm(\Par)] \postApproxEUP(\d\Par)$. 
The integrand considers uncertainty
in the forward model emulator (exploration), while the measure $\weightdens = \postApproxEUP$ 
weights by the current posterior approximation (exploitation).  
\end{example}

Outside of special cases \citep{Binois_2018,MakTargetedVar,Koermer2024} the outer 
integral (over $\parSpace$) in ECU criteria is not tractable. Monte Carlo approximations
are possible but would yield a stochastic optimization problem. A simpler and more common solution is 
to utilize a sample average approximation
\begin{align}
&\acq(\ParBatch) \Def \frac{1}{J}
\sum_{j=1}^{J} 
\E_{\batchResponse \sim \simObsPredDist(\ParBatch)} 
\left\{\Var_{\emDist[\Naugment]}[\postDens(\Par_j; \targetEm[\Naugment]^{\ParBatch,\batchResponse}) \given \ParBatch, \batchResponse]\right\},
&&\Par_j \overset{\mathrm{iid}}{\sim} \weightdens,
 \label{eq:acq-intvar-saa}
\end{align}
in which the $\Par_j$ are sampled once and then fixed, maintaining a deterministic 
objective function \citep{Mercer_kernels_IVAR,botorch}.

\subsubsection{Solving the Optimization Problem} \label{sec:solve-opt}
We comment briefly on algorithms to actually carry out the optimization of the acquisition 
function in \Cref{eq:acq-opt}. Detailed treatments of this topic are available across the Bayesian 
optimization and computer experiments literatures \citep{maxAcq,botorch,gramacy2020surrogates,WynnDiscreteExchange}.
In general, the task of minimizing $\acq(\ParBatch)$ involves a challenging non-convex optimization problem over
$\dimPar\Nbatch$ variables. Gradient-based optimization of acquisitions defined by sample-average approximations
(\Cref{eq:acq-intvar-saa}) has been shown to perform favorably in moderate dimensions \citep{Mercer_kernels_IVAR}.
This approach is implemented in the \verb+BoTorch+ Python package \citep{botorch}. Alternatively, a large portion of the literature 
eschews continuous optimization in favor of discrete methods. These algorithms take the form of a subset selection 
problem, with the optimal batch selected from a finite set of candidate points. This presents a 
challenging combinatorial optimization problem, commonly solved by employing stochastic exchange 
algorithms (e.g., Federov exchange) to find a local minimum \citep{FederovExchange,WynnDiscreteExchange,LOEPPKY20101452}.
\footnote{Exchange algorithms can also be employed to optimize over continuous space and leverage gradient information.}

To avoid the joint optimization over $\parSpace^{\Nbatch}$ altogether, greedy approximations are also common in 
practice \citep{VehtariParallelGP,Surer2023sequential}. These methods seek to approximate the 
joint optimum using a sequence of $\Nbatch$ simpler optimization problems over $\parSpace$. 
Since this procedure proceeds one point at a time, it provides the opportunity to use single-point criteria 
(\Cref{sec:single-point}) in batch optimization. 
Under this approach, the first point $\Par_1$ is selected
by optimizing $\acq(\Par)$ as in the pure sequential setting. The remaining $b = 2, \dots, \Nbatch - 1$ points are added
sequentially by minimizing $\acq(\Par; \ParBatch_{\Ndesign+b-1}, \batchResponse_{\Ndesign+b-1})$.
In pure sequential design, 
the response $\simObsProcess(\Par_b)$ is observed every iteration. Greedy batch design differs in that 
$\simObsProcess(\Par_b)$ is not observed until the entire batch has been selected. 
This implies that the objective function must be 
approximated in solving this sequence of optimization problems, typically by imputing the unknown
responses $\simObsProcess(\Par_b)$ with pseudo-observations. 
\textit{Kriging Believer} and \textit{Constant Liar} are two such heuristics
\citep{Ginsbourger2010}. After selecting $\Par_b$, the Kriging Believer strategy updates the surrogate with 
the pseudo-observation $\{\Par_b, \emMean(\Par_b)\}$, while Constant Liar instead uses 
$\{\Par_b, c\}$ for some pre-determined constant $c$. An extension of this approach consists 
of generating multiple candidate batches using varying heuristics, and then leveraging a batch 
acquisition function to select among the candidates \citep{Chevalier2013}. 

\subsection{Surrogate-Guided Sampling} \label{sec:surrogate-guided-sampling}
Instead of optimizing an explicit acquisition function, the particular goal of posterior approximation
suggests a simple alternative: sample the batch $\ParBatch$ from the current posterior estimate (e.g., $\postApproxEP$).
This yields an active learning procedure with the form of a fixed-point iteration; the current posterior estimate is 
used to select the next batch, which is then used to update the posterior estimate, and so on 
\citep{murrayNPE,SNLE,hydrologicalModel,quantileApprox,hydrologicalModel2}. \citet{FerEmulation} adopt a 
variant of this strategy, sampling from a mixture of the current approximate posterior and the prior, with the mixture 
weights representing a user-defined tuning parameter. \citet{turbulenceModelAdaptiveLHS} consider a similar 
prior-posterior mixture, with a Latin hypercube adjustment to 
encourage the new design points to be spread out. \citet{adaptiveMultimodal} draw samples from the current approximate 
posterior, and then update these samples with an iteration of an ensemble Kalman algorithm, a method with close
connections to the \textit{calibrate, emulate, sample} workflow (\citet{CES}; see \Cref{sec:tempering}).
Surrogate-based sampling is also commonly used in sequential algorithms for constructing 
neural density estimators \citep{murrayNPE,SNLE,MurrayNLEvsNPE}. This approach is especially convenient
in this setting given that (unlike acquisition-based approaches) it does not require a probabilistic 
emulator, which can be difficult to train for neural networks.

These sampling-based approaches enjoy the practical benefits of being inherently parallel and avoiding a difficult
optimization problem. In this sense, they are similar in spirit to Thompson sampling for
batch Bayesian optimization \citep{parallelBOThompson}. Sampling from a mixture of the prior
and current posterior estimate can be viewed as an explicit decoupling of exploration and exploitation, similar to the
methods proposed in \citet{gp_surrogates_random_exploration}. One could try to combine the strengths of
sampling-based and optimization-based approaches by first sampling candidates from the current posterior estimate 
and then running an optimization algorithm to select the final subset; the ECU acquisition in \Cref{eq:weighted-ivar} 
might be viewed as an instance of this approach. 

Overall, sampling-based approaches offer a promising avenue in large-scale applications where parallelization is crucial, 
and the simulator cost only allows for a few rounds of sequential design with large batch sizes.

\subsection{Tempered Target Distributions} \label{sec:tempering}
The sequential design strategies of the previous section are motivated by the observation
that a static, prior-based design will often fail to allocate points in high-density regions when the 
posterior is concentrated. They address this problem by building up the design sequentially,
targeting the placement of points where the surrogate-estimated posterior is large, balanced
with exploration of regions with high surrogate uncertainty. A concern with such algorithms is that 
they may require many iterations to identify the high-density regions, 
exhausting the budget exploring unimportant areas. This problem arises when the surrogate 
approximation of the posterior is quite poor in the early stages of the algorithm, causing 
acquisition functions based on the surrogate to inefficiently 
explore the parameter space \citep{WilkinsonABCGP}.

One approach to mitigate this issue builds on the recognition that it may be unrealistic and 
counterproductive to directly target the posterior in the initial stages of the algorithm. 
Instead, the surrogate target can be allowed to vary over the rounds in the active 
learning procedure, tracking a sequence of intermediate distributions that bridge
between the prior and posterior. Since the earlier distributions are less concentrated, the surrogate 
approximation will tend to be more reasonable and thus more useful in guiding the placement of the 
next round of design points. \inlinedef{Tempering} strategies of this form are widely utilized in 
computational statistics; we focus on their use in conjunction with surrogate models. 

Tempering algorithms traverse from prior to posterior through a sequence of interpolating
distributions $\postDens_0, \postDens_1, \dots, \postDens_T$, where $\postDens_0$ is the prior and $\postDens_T = \postDens$
the posterior. Common choices are based on geometric bridges of the form 
$\postDens_t(\Par) \propto \priorDens(\Par) \lik(\Par)^{\beta_t}$ with $0 = \beta_0 < \dots < \beta_T = 1$.
This approach forms the basis of sequential Monte Carlo (SMC) algorithms, used both for statistical 
estimation in time-varying systems (i.e., particle filters) as well as posterior estimation for 
static inverse problems \citep{DoucetSMC,ElementsSMC}. 
Unlike MCMC, these methods are naturally parallelizable,
but can still require many likelihood evaluations. Surrogate models offer one avenue to 
lower this cost, and may be integrated into a tempering framework in a variety of different 
ways. 

The role of the intermediate distributions $\postDens_t$ may differ depending on the emulator target.
In forward model emulation, the target $\fwd$ is constant. However, the design
strategy may vary over $t$ to produce an approximation of $\fwd$ optimized for 
$\postDens_t$. The sequence of surrogates may thus be \inlinedef{local}, in the 
sense that certain design points are discarded or downweighted to target 
predictions in the high-density region with respect to $\postDens_t$ \citep{WilkinsonABCGP,Li_2014}.
In other cases, the surrogate target itself may evolve.
For example, at iteration $t$ a log-posterior surrogate is fit to training 
data of the form $\{\Par_\designIdx, \log \postDens_t(\Par_\designIdx)\}$, which 
again may constitute a sub-sampled or weighted modification of the total 
training set. 
Note that design points from previous rounds can still be re-used 
without additional model simulations. For example, in the case of geometric 
bridging distributions,
$\log \postDens_t(\Par) = \log \priorDens(\Par) + \beta_t \log \lik(\Par)$,
implying that previous likelihood evaluations simply need to be rescaled for use in future steps.

While tempering methods are diverse, we restrict our focus to three broad categories:
(1) algorithms relying on sequential importance sampling; (2) algorithms that 
utilize approximate Kalman-style updates to traverse between the bridging distributions;
and (3) other related methods.

\paragraph{Importance Sampling.}
In principle, importance sampling provides the means to transform samples from a prior proposal 
distribution into posterior samples using evaluations of the 
likelihood function \cite[Chapter 9]{OwenMCBook}.
When utilizing tempering, these updates are applied sequentially such that $\pi_t$ acts as the 
proposal distribution for $\postDens_{t+1}$. To reduce the computational cost of the likelihood evaluations, 
surrogates can be leveraged in various ways to approximate these updates. One natural strategy is to replace 
$\pi_t$ with a surrogate-based estimate $\approxPost_t$, then apply the importance sampling 
machinery to sample $\approxPost_{t+1}$. Samples from $\approxPost_{t+1}$ can then 
be chosen as the new batch of design points. \citet{quantileApprox} adopt this approach
with a GP log-posterior surrogate, never discarding design points and thus maintaining 
a global surrogate approximation throughout the algorithm. By contrast, \citet{Li_2014}
focus on the construction of local polynomial chaos surrogates. The $\approxPost_t$
are estimated using a variational approximation, with the variational objective approximated
using importance sampling and the current surrogate model.
\citet{adaptiveReducedTempering} propose an adaptive SMC scheme that actively samples new 
design points to reduce surrogate error. Both \citet{Li_2014,adaptiveReducedTempering}
automatically adapt the temperature schedule $\{\beta_t\}$, negating the need for 
a priori specification.

\paragraph{Ensemble Kalman Updates.}
While importance sampling techniques are widely applicable, they can suffer from known pathologies
in moderate to high dimensions. Ensemble Kalman methods, rooted in linear Gaussian methodology, sacrifice 
exact updates $\postDens_t \mapsto \postDens_{t+1}$ for robustness and are widely used in 
high-dimensional settings \citep{IglesiasEKI}. 
These approximate methods can be used as a drop-in replacement for the importance sampling updates
discussed in the previous section. \citet{adaptiveMultimodal} proceed in this spirit, first sampling 
from an intermediate posterior approximation using MCMC and then feeding the resulting samples 
through an ensemble Kalman update to produce the next batch of design points. 
\citet{PicchiniABCEKI} utilize ensemble Kalman updates to build a sequence of proposal distributions
for an MCMC sampler.

Instead of interleaving Kalman steps with surrogate updates,
another popular strategy consists of first running several iterations of a Kalman algorithm
and then utilizing the resulting approximate posterior samples as an initial design for 
an emulator. This methodology, known as the \textit{Calibrate, Emulate, Sample (CES)} approach \citep{CES},
is an instance of the general idea of using an approximate sampling or optimization algorithm 
to cheaply generate design points in the high-density regions. Surrogate-based posterior approximations
may then be constructed to refine the initial coarse approximation. A variety of ensemble Kalman methods, 
falling under the umbrella label \textit{Ensemble Kalman Inversion (EKI)}, may be utilized in the first 
stage of CES algorithms \citep{EKIRace,CESSoftware,idealizedGCM,FATES_CES}.

\paragraph{Other Approaches.}
A variety of other methods utilize tempering-like strategies outside of the traditional SMC framework.
For example, \citet{JosephMinEnergy,JosephMEDSampling} solve a sequence of optimization problems 
to deterministically sample from a probability distribution using a minimum energy criterion.
These algorithms utilize both tempering and local GP surrogates to improve efficiency.

The framework of \citet{wang2018adaptive} does not explicitly use tempering, but similarly leverages
a sequence of intermediate distributions. They factor the unnormalized posterior density as 
$\jointDens(\Par) = \priorDens(\Par) \lik(\Par) = \Exp{g_t(\Par)}\postDens_t(\Par)$, where a GP emulator targets 
$g_t(\Par) \Def \log\{\jointDens(\Par) / \postDens_t(\Par)\}$ and the density $\postDens_t$ is a free parameter.
The emulator target varies as $\postDens_t$ changes, but the quantity 
$\Exp{g_t(\Par)}\postDens_t(\Par)$ always provides an approximation of $\jointDens$.
For $\pi_t = \priorDens$,
$g_t(\Par)$ is the log-likelihood, but if $\postDens_t = \postDens$ then $g_t(\Par)$ reduces to a 
constant. Thus, the idea is that the emulation target becomes simpler to approximate over the course
of the algorithm. \citet{adaptiveMultimodal} adapt this idea in developing an algorithm tailored to 
multi-modal posteriors.

\citet{WilkinsonABCGP} suggests an alternative method that iteratively eliminates ``implausible''
regions of the parameter space. A sequence of local GP surrogates are trained only on points
in the current plausible region, and are used for determining the plausible region for the
next iteration.

\subsection{MCMC-Based Sequential Design} \label{sec:mcmc-design}
Finally, we highlight a class of active learning strategies that interleave surrogate refinement 
and posterior sampling within a unified algorithm. The goals of such approaches
vary, spanning purely approximate schemes as well as so-called ``exact-approximate'' algorithms that 
strive for asymptotic convergence to the exact posterior $\postDens$. We provide a high-level 
overview of these methods here, referring to \citet{noisyMCSurvey} for a more detailed survey. 

The idea of accelerating MCMC using a surrogate dates back to the delayed acceptance mechanism of 
\citet{DelayedAcceptance}. Delayed acceptance schemes are exact-approximate algorithms that
improve efficiency by filtering our poor proposals using the surrogate, but ultimately require exact 
simulations when a proposal is accepted \citep{MCMC_GP_proposal,RasmussenGPHMC}. 
To further reduce computational costs,
many methods also use the surrogate in making acceptance decisions, which generally
results in inexact samplers (i.e., the target distribution is no longer $\postDens$). 
Others seek a middle ground by using the surrogate in acceptance decisions only when 
the model is confident in its predictions; when uncertainty exceeds some threshold,
an exact simulation is triggered.
These additional simulator runs are used to update the surrogate, thus refining the emulator 
design over the course of the algorithm. This updating is often 
halted after the simulation budget is exceeded, at which point the sampler proceeds 
with a fixed surrogate targeting an approximate posterior \citep{gpEmMCMC,bliznyuk2012}. 
At this stage, any of the posterior estimates from 
\Cref{sec:post-approx} may be employed. 

In defining a criterion for triggering additional simulations, a principled choice is to consider 
uncertainty in the Metropolis-Hastings acceptance probability $\accProbMH(\Par, \widetilde{\Par})$
\citep{GPSABC,ConradLocalExactMCMC,gpEmMCMC}. When 
the surrogate-induced uncertainty in this quantity is large, the Markov chain is unsure whether 
or not to accept the proposed value $\widetilde{\Par}$. \citet{GPSABC} characterize the probability
of making an incorrect decision under the distribution induced by a GP surrogate in a 
simulation-based inference setting. \citet{gpEmMCMC} refine these results under a GP 
log-likelihood surrogate, providing an analytical characterization of the uncertainty 
in $\accProbMH(\Par, \widetilde{\Par})$. \citet{ActiveLearningMCMC,DrovandiPMGP} adopt a more 
heuristic approach, triggering design augmentation when the variance in 
$\targetEm(\widetilde{\Par})$ exceeds a predefined threshold.
\citet{ConradLocalExactMCMC} also consider uncertainty in the acceptance probability, but
in contrast to the above approaches, they (1) allow surrogate refinement to continue indefinitely,
theoretically establishing asymptotic exactness under certain conditions; and (2) utilize local 
surrogate models, such that only a local subset of the design is used when predicting at a 
particular location.

Due to the serial nature of MCMC 
algorithms, these methods are limited in their ability to leverage 
parallel computing resources. They are thus typically most appropriate for simulators of modest 
computational expense, when it is still feasible to perform many runs sequentially.
This is especially true for exact-approximate schemes; the price to be paid for 
asymptotic exactness is a larger number of simulator runs \citep{ConradLocalExactMCMC}.

\section{Conclusion} \label{sec:conclusion}
In this paper, we presented a comprehensive synthesis of surrogate-based methodology
for Bayesian inference. Our review emphasized uncertainty propagation and active
learning, two key elements of a robust surrogate-based Bayesian workflow.
While research in these areas has enabled more efficient and reliable use of 
surrogates, there remains significant opportunity for further progress.

Training emulators to approximate very expensive simulators continues to present
a significant hurdle. Gaussian processes are often the tool of choice in this
setting. However, constructing GP models for high-dimensional input spaces or 
nonstationary target surfaces is highly non-trivial 
\citep{highDimBO,nonstationaryGP}. Neural networks offer scalability, but 
typically require large training sets that are unattainable for expensive
computer models. Moreover, while many approaches have been proposed to produce 
predictive distributions for neural networks, uncertainty quantification
in deep learning generally remains an open problem \citep{UQDLReview}. 
The development of emulators combining the strengths of GPs and neural 
networks is a promising research direction. Emulators geared towards posterior 
approximation may achieve this balance by exploiting structure in the Bayesian 
inverse problem (e.g., \citet{CuiLIS}).

These challenges are further exacerbated when the simulator is noisy, which typically
necessitates larger simulator budgets. Neural density estimation is a powerful 
SBI paradigm for constructing flexible and amortized emulators \citep{frontierSBI}. 
However, it often requires large training sets that are not feasible for very 
expensive computer models. By contrast, methods from the computer experiments 
community target the small training set regime, but are typically 
not explicitly aligned with the goal of posterior approximation \citep{stochasticComputerModels}. 
Combining ideas from these two communities may prove a fruitful avenue for research;
for example, using emulators to generate training data for a neural
estimator \citep{BurknerAmortizedSurrogate}, or designing more query-efficient conditional
density estimators \citep{FadikarAgentBased}.

In discussing emulator uncertainty propagation in \Cref{sec:post-approx}, we emphasized
the EP and EUP, while highlighting a number of alternative posterior approximations.
Given that there is no one-size-fits-all approach, further work is needed to provide
guidance in specific applications. In addition, considerations such as robustness to 
misspecification may lead to alternative uncertainty propagation methods with 
superior performance in particular settings (e.g., improving the robustness
of posteriors derived using log-density emulators). Connections to areas such as 
semi-modular and generalized Bayesian inference may prove helpful 
in this regard \citep{semiModular,generalizedCut}.

As illustrated in \Cref{sec:active-learning}, many strategies have been proposed
for sequentially refining a surrogate for posterior approximation. In general, 
more work is required to better understand their tradeoffs and inform the 
choice of a particular algorithm in practice. Both experimental and theoretical
results are likely needed to close this gap, building upon early work in this area
\citep{StuartTeck2,gp_surrogates_random_exploration,VehtariParallelGP,weightedIVAR}.
While many previous studies have focused on active learning with single-point or 
small batch acquisitions, there is relatively little research in the challenging setting
of sequential design algorithms with few iterations and 
large batch sizes \citep{VehtariParallelGP,Surer2023sequential,MurrayNLEvsNPE}. 
This situation arises when the simulator is expensive but parallel computing 
resources are available. Surrogate-guided sampling (\Cref{sec:surrogate-guided-sampling})
appears to be a natural and practical approach in this context, but to our knowledge
has not been rigorously investigated.

We hope this review serves as a first step towards categorizing the many 
choices that users face when incorporating an emulator within a Bayesian analysis.
Further work is needed to better understand the tradeoffs between these choices,
provide general guidance for practitioners, and develop standardized diagnostics 
for surrogate-based Bayesian workflow.

\appendix

\section{Gaussian Process Derivations}
We now derive the expressions presented in the examples considering GP emulators throughout the paper.
Let $\targetEm \sim \GP(\emMean, \emKer)$, with $\emVar(\Par) = \emKer(\Par, \Par)$.
We focus on two settings amenable to analytical calculations: 
(1) forward model emulation with a Gaussian likelihood; and (2) log-likelihood emulation.
All expectations are assumed to be with respect to $\targetEm$ unless otherwise noted.

Before proceeding with the specific cases, we prove a general result that characterizes
the uncertainty in the GP predictive mean due to an unobserved target response. This 
result will be used in deriving the expected conditional uncertainty acquisition functions.
A similar result is presented in \citet{VehtariParallelGP}.

\begin{lemma}
\label[lemma]{lemma:pred-mean-dist}
Let $\targetEm[\Naugment]^{\ParBatch,\batchResponse} \sim \GP(\emMean[\Naugment]^{\ParBatch,\batchResponse}, \emKer[\Naugment]^{\ParBatch})$ 
denote the GP $\targetEm$ conditioned on the
additional training points $\{\ParBatch,\batchResponse\}$. Assume 
$\batchResponse \sim \Gaussian(\emMean(\ParBatch), \emVar(\ParBatch))$. Then, for any $\Par \in \parSpace$, 
the randomness in 
$\emMean[\Naugment]^{\ParBatch,\batchResponse}(\Par)$ as a function of $\batchResponse$ is characterized by
\begin{equation}
\emMean[\Naugment]^{\ParBatch,\batchResponse}(\Par) \sim \Gaussian(\emMean(\Par), \emKer[\Ndesign](\Par) - \emKer[\Naugment]^{\ParBatch}(\Par)).
\end{equation}
\end{lemma}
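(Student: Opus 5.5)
The plan is to write the updated predictive mean explicitly as an affine function of $\batchResponse$, using the kriging equations in \Cref{eq:kriging-equations}. We apply them with $\targetEm \sim \GP(\emMean, \emKer)$ playing the role of the prior. The new responses are modeled as in the lemma, $\batchResponse \sim \Gaussian(\emMean(\ParBatch), \emVar(\ParBatch))$, where $\emVar(\ParBatch) = \emKer(\ParBatch,\ParBatch)$. If observation noise is present, it is absorbed into this covariance, i.e.\ one replaces $\emKer(\ParBatch,\ParBatch)$ by $\emKer(\ParBatch,\ParBatch)+\sigma^2 I$ throughout. Sequential conditioning of the GP then gives
\begin{align*}
\emMean[\Naugment]^{\ParBatch,\batchResponse}(\Par) &= \emMean(\Par) + \emKer(\Par,\ParBatch)\,\emVar(\ParBatch)^{-1}\big[\batchResponse - \emMean(\ParBatch)\big], \\
\emKer[\Naugment]^{\ParBatch}(\Par) &= \emKer(\Par,\Par) - \emKer(\Par,\ParBatch)\,\emVar(\ParBatch)^{-1}\emKer(\ParBatch,\Par).
\end{align*}
Conditioning on $\trainData_\Ndesign$ and then on the batch agrees with conditioning on the union. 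This is the standard consistency of Gaussian conditioning, so I would simply invoke it.

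Since $\batchResponse$ is Gaussian and the mean update is affine in $\batchResponse$, the quantity $\emMean[\Naugment]^{\ParBatch,\batchResponse}(\Par)$ is Gaussian. Its expectation is $\emMean(\Par)$, because $\E[\batchResponse - \emMean(\ParBatch)] = 0$. Writing $w \Def \emVar(\ParBatch)^{-1}\emKer(\ParBatch,\Par)$, its variance is
\begin{equation*}
w^\top \emVar(\ParBatch)\, w = \emKer(\Par,\ParBatch)\,\emVar(\ParBatch)^{-1}\emKer(\ParBatch,\Par).
\end{equation*}
Comparing with the second display, this is exactly $\emKer(\Par,\Par) - \emKer[\Naugment]^{\ParBatch}(\Par)$. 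Reading $\emKer[\Ndesign](\Par)$ as $\emKer(\Par,\Par) = \emVar(\Par)$, this is the stated variance.

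The only delicate point is bookkeeping: the covariance used for $\batchResponse$ must be the same matrix that is inverted in the update. If the conditioning uses the noisy predictive covariance $\emKer(\ParBatch,\ParBatch)+\sigma^2 I$, then $\batchResponse$ must be drawn from $\simObsPredDist$ with that same covariance. Only then do the two factors cancel to give $w^\top(\cdot)w$ equal to the variance reduction. I would state this convention explicitly. Invertibility of $\emVar(\ParBatch)$ is also needed; it holds for distinct batch points with a positive definite kernel, or when $\sigma^2>0$, and otherwise a pseudo-inverse gives the same conclusion. An alternative route is the law of total variance: $\emVar(\Par) = \E[\emVar[\Naugment](\Par)] + \Var[\emMean[\Naugment](\Par)]$, and since the posterior variance is deterministic in $\batchResponse$, the variance identity follows immediately. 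I would mention this as a sanity check.
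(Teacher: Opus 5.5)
Your proof is correct and follows the paper's argument: write the updated mean as an affine function of $\batchResponse$, conclude it is Gaussian, and compute its mean and variance. You carry the variance step out explicitly as $w^\top \emVar(\ParBatch) w$. Your bookkeeping caveat is also well taken: the paper's proof inverts $\emVar(\ParBatch) + \sigma^2 I$ while the lemma assumes $\batchResponse$ has covariance $\emVar(\ParBatch)$, so the stated variance holds only when these matrices coincide (e.g.\ $\sigma^2 = 0$, or $\batchResponse$ drawn from the noisy predictive $\simObsPredDist(\ParBatch)$).
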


\begin{proof}
Note that the GP predictive kernel depends only on the input locations, not the responses. First recall
the form of the GP predictive mean

\begin{equation}
\emMean[\Naugment]^{\ParBatch,\batchResponse}(\Par) =
\emMean(\Par) + \emKer(\Par, \ParBatch) \kerMat_{\Ndesign}^{-1}[\batchResponse - \emMean(\ParBatch)] \label{eq:pred-mean} \\
\emKer(\ParBatch, \Par),
\end{equation}
where $\kerMat_{\Ndesign} \Def \emVar(\ParBatch) + \sigma^2 I$ as in \Cref{eq:pred-mean} (in this case $\targetEm$
acts as the GP prior). This quantity is an affine function of the Gaussian random vector $\batchResponse$ and is thus
Gaussian distributed. Computing its mean and variance yields the desired result.
\end{proof}

\subsection{Forward Model Emulation}

The forward model emulation setting yields the unnormalized posterior surrogate
$\jointEm(\Par) = \priorDens(\Par)\Gaussian(\obs \given \targetEm(\Par), \covNoise)$.
We begin by computing the mean and variance of this quantity. The variance derivation
uses the following lemma.

\begin{lemma} 
\label[lemma]{lemma:squared_Gaussian_density}
Let $\Gaussian(m, C)$ be a Gaussian distribution on $\R^{\dimObs}$ with $C$ a symmetric, positive-definite 
matrix. Then, for $\predObs \in \R^{\dimObs}$, 
\begin{align*}
\Gaussian(\predObs \given m, C)^2 
&= 2^{-\dimObs/2} \det(2\pi C)^{-1/2} \Gaussian(\predObs \given m, C/2) .
\end{align*}
\end{lemma}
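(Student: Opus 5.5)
The plan is to work directly from the explicit density formula and compare exponents and normalizing constants. Write
\[
\Gaussian(\predObs \given m, C) = \det(2\pi C)^{-1/2} \Exp{-\tfrac{1}{2}(\predObs - m)^\top C^{-1}(\predObs - m)}.
\]
Squaring this gives $\det(2\pi C)^{-1}\Exp{-(\predObs-m)^\top C^{-1}(\predObs-m)}$. The key observation is that the exponent can be rewritten as $-\tfrac{1}{2}(\predObs-m)^\top (C/2)^{-1}(\predObs-m)$. This holds because $(C/2)^{-1} = 2C^{-1}$, and $C/2$ is again symmetric positive definite, so $\Gaussian(\predObs \given m, C/2)$ is well defined.

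It follows that
\[
\Exp{-(\predObs-m)^\top C^{-1}(\predObs-m)} = \det(2\pi C/2)^{1/2}\,\Gaussian(\predObs \given m, C/2).
\]
The remaining step is bookkeeping on the normalizing constants. Since $C$ is a $\dimObs \times \dimObs$ matrix, $\det(2\pi C/2) = 2^{-\dimObs}\det(2\pi C)$, and therefore $\det(2\pi C/2)^{1/2} = 2^{-\dimObs/2}\det(2\pi C)^{1/2}$.

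Multiplying by the prefactor $\det(2\pi C)^{-1}$ then gives
\[
\Gaussian(\predObs \given m, C)^2 = 2^{-\dimObs/2}\det(2\pi C)^{-1/2}\,\Gaussian(\predObs \given m, C/2),
\]
which is the claim. There is no real obstacle. The only place to be careful is the determinant scaling $\det(cA) = c^{\dimObs}\det(A)$, which is exactly where the dimension-dependent factor $2^{-\dimObs/2}$ comes from.
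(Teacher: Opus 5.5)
Your proof is correct and follows the paper's argument essentially step for step. Both rewrite the squared exponent as a Gaussian exponent with covariance $C/2$, then rescale the normalizing constant using $\det(2\pi C/2) = 2^{-\dimObs}\det(2\pi C)$.
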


\begin{proof}
\begin{align*}
\Gaussian(\predObs \given m, C)^2 
&= \det(2\pi C)^{-1} \Exp{-\frac{1}{2} (\predObs - m)^\top \left[\frac{1}{2}C \right]^{-1}(\predObs - m)} \\
&= \det(2\pi C)^{-1} \det(2\pi (1/2)C)^{1/2} \Gaussian\left(\predObs \given m, C/2\right) \\
&= 2^{-\dimObs/2} \det(2\pi C)^{-1/2} \Gaussian\left(\predObs \given m, C/2\right)
\end{align*}
\end{proof}

The expressions for the first two moments of $\jointEm(\Par)$ in \Cref{ex:fwd-em} follow directly from the 
following result with $A$ set to the identity.

\begin{proposition} 
\label[proposition]{prop:Gaussian_marginal_moments}
Let $\mu \sim \Gaussian(m, C)$ be a $\dimPar$-dimensional Gaussian vector and 
$A \in \R^{\dimObs \times \dimPar}$ a fixed matrix.
Assuming $\covNoise$ and $C$ are symmetric, positive definite matrices,
then
\begin{align*}
\E\left[\Gaussian(\predObs \given A \mu, \covNoise) \right] &= \Gaussian(\predObs \given Am, \covNoise + ACA^\top) \\
\Var\left[\Gaussian(\predObs \given A \mu, \covNoise) \right] 
&= \frac{\Gaussian\left(\predObs \given Am, \frac{1}{2} \covNoise + ACA^\top \right)}{2^{\dimObs/2} \det(2\pi \covNoise)^{1/2}} - 
\frac{\Gaussian\left(\predObs \given Am, \frac{1}{2}[\covNoise + ACA^\top] \right)}{2^{\dimObs/2} \det(2\pi[\covNoise + ACA^\top])^{1/2}}
\end{align*}
\end{proposition}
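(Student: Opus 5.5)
The plan is to reduce both moments to Gaussian convolution identities. The key facts are that $\Gaussian(\predObs \given A\mu, \covNoise)$ is symmetric in $\predObs$ and $A\mu$, and that $A\mu \sim \Gaussian(Am, ACA^\top)$. Writing $\nu \Def A\mu$, the expectation becomes
\[
\E[\Gaussian(\predObs \given \nu, \covNoise)] = \int \Gaussian(\predObs - \nu \given 0, \covNoise)\, \Gaussian(\nu \given Am, ACA^\top)\, \d\nu .
\]
This is the density at $\predObs$ of $\nu + \noise$ with independent $\noise \sim \Gaussian(0,\covNoise)$. Hence it equals $\Gaussian(\predObs \given Am, \covNoise + ACA^\top)$. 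This step works even if $ACA^\top$ is only positive semidefinite (e.g.\ $A$ not of full row rank): the argument only uses the sum of independent Gaussians, and $\covNoise + ACA^\top$ is still positive definite. So I will phrase it in terms of random vectors rather than densities of $\nu$.

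For the variance, I will use $\Var = \E[X^2] - (\E X)^2$.

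\textbf{Second term.} Square the first-moment result and apply \Cref{lemma:squared_Gaussian_density} with covariance $\covNoise + ACA^\top$. This gives
\[
2^{-\dimObs/2}\det(2\pi[\covNoise + ACA^\top])^{-1/2}\,\Gaussian\!\left(\predObs \given Am, \tfrac12[\covNoise+ACA^\top]\right),
\]
which is exactly the subtracted term in the statement.

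\textbf{First term.} Apply \Cref{lemma:squared_Gaussian_density} pointwise inside the expectation with covariance $\covNoise$. This gives
\[
\Gaussian(\predObs \given A\mu,\covNoise)^2 = 2^{-\dimObs/2}\det(2\pi\covNoise)^{-1/2}\,\Gaussian(\predObs \given A\mu, \covNoise/2).
\]
Now take the expectation and reuse the first-moment computation with $\covNoise$ replaced by $\covNoise/2$, which is still symmetric positive definite. The result is
\[
\E[\Gaussian(\predObs \given A\mu,\covNoise)^2] = \frac{\Gaussian(\predObs \given Am, \tfrac12\covNoise + ACA^\top)}{2^{\dimObs/2}\det(2\pi\covNoise)^{1/2}},
\]
which matches the first term. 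Subtracting the squared mean completes the proof.

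The only place needing care is the Gaussian convolution identity in the first-moment step, in particular the degenerate case $\dimObs > \dimPar$, where $\nu$ has no Lebesgue density. I will handle it with characteristic functions or the sum-of-independent-Gaussians argument above, rather than completing the square, so that no invertibility of $ACA^\top$ is needed. Everything else is bookkeeping of constants supplied by \Cref{lemma:squared_Gaussian_density}.
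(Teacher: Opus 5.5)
Your proposal is correct and follows the paper's proof essentially step for step. The mean comes from the Gaussian convolution identity. The variance comes from $\E[X^2] - (\E X)^2$, applying \Cref{lemma:squared_Gaussian_density} once inside the expectation (then reusing the convolution with $\covNoise/2$) and once to the squared mean. Your remark that the convolution step needs only $\covNoise + ACA^\top$ to be positive definite, so $ACA^\top$ may be singular, is a small extra point of care that the paper leaves implicit.
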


\begin{proof} 
The expectation follows immediately from the expression for the convolution of two Gaussians.
For the variance, we have 
\begin{align}
\Var\left[\Gaussian(\predObs \given A \mu, \covNoise) \right] 
&= \E\left[\Gaussian(\predObs \given A \mu, \covNoise)^2 \right] - \E\left[\Gaussian(\predObs \given A \mu, \covNoise) \right]^2 \nonumber \\
&= \E\left[\Gaussian(\predObs \given A \mu, \covNoise)^2 \right] - \Gaussian(\predObs \given Am, \covNoise + ACA^\top)^2. \label{two_terms_variance}
\end{align}
Starting with the first term, we apply \Cref{lemma:squared_Gaussian_density} and 
the convolution formula, respectively, to obtain 
\begin{align*}
\E\left[\Gaussian(\predObs \given A \mu, \covNoise)^2 \right]
&= 2^{-\dimObs/2} \det(2\pi \covNoise)^{-1/2} \E[\Gaussian(\predObs \given A\mu, \covNoise/2)] \\
&= 2^{-\dimObs/2} \det(2\pi \covNoise)^{-1/2} \Gaussian(\predObs \given Am, \covNoise/2 + ACA^\top).
\end{align*}
For the second term in \Cref{two_terms_variance}, another application of \Cref{lemma:squared_Gaussian_density} gives
\begin{align*}
\Gaussian(\predObs \given Am, &\covNoise + ACA^\top)^2 \\
&= 2^{-\dimObs/2} \det(2\pi[\covNoise + ACA^\top])^{-1/2} \Gaussian(\predObs \given Am, [\covNoise + ACA^\top]/2).
\end{align*}
Plugging these expressions back into \Cref{two_terms_variance} completes the proof. 
\end{proof}

The EUP $\jointApproxEUP(\Par) =
\priorDens(\Par)\Gaussian(\obs \given \emMean(\Par), \covNoise + \emVar(\Par))$ follows
immediately from this result. We next derive the expected variance term that appears in the 
acquisition function from \Cref{ex:evar-fwd}.

\begin{proposition}
Let $\jointDens(\Par; \targetEm[\Naugment]^{\ParBatch,\batchResponse}) = 
\priorDens(\Par) \Gaussian(\obs \given \targetEm[\Naugment]^{\ParBatch,\batchResponse}(\Par), \covNoise)$ denote the 
unnormalized posterior surrogate induced by updating the emulator $\targetEm$ with new training
data $\{\ParBatch, \batchResponse\}$. Under the assumption 
$\batchResponse \sim \Gaussian(\emMean(\ParBatch), \emVar(\ParBatch))$, the expected conditional variance
is given by
\begin{align}
\E_{\batchResponse} \left\{\Var[\jointDens(\Par; \targetEm[\Naugment]^{\ParBatch,\batchResponse}) \given \ParBatch, \batchResponse]\right\}
&= \priorDens^2(\Par) \bigg[\frac{\Gaussian\left(\obs \given \emMean(\Par), \CovComb(\Par) - \frac{1}{2}\covNoise \right)}{2^{\dimObs/2} \det(2\pi \covNoise)^{1/2}} - \\
&\frac{\Gaussian\left(\obs \given \emMean(\Par), \CovComb(\Par) - \frac{1}{2}\CovComb[\Naugment](\Par; \ParBatch) \right)}{2^{\dimObs/2} \det(2\pi \CovComb[\Naugment](\Par; \ParBatch))^{1/2}} \bigg],
\end{align}
where $\CovComb(\Par; \ParBatch) \Def \covNoise + \emVar(\Par)$, 
$\CovComb[\Naugment](\Par; \ParBatch) \Def \covNoise + \emVar[\Naugment](\Par; \ParBatch)$, and 
$\emVar[\Naugment](\Par; \ParBatch) \Def \emKer[\Naugment]^{\ParBatch}(\Par, \Par)$.
\end{proposition}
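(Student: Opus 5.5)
We would prove this by conditioning first on $\batchResponse$, applying \Cref{prop:Gaussian_marginal_moments} to the updated emulator, and then averaging the two resulting terms over $\batchResponse$ using \Cref{lemma:pred-mean-dist}.

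\textbf{Step 1 (conditional variance).} For fixed $\{\ParBatch, \batchResponse\}$, the updated emulator satisfies $\targetEm[\Naugment]^{\ParBatch,\batchResponse}(\Par) \sim \Gaussian(\emMean[\Naugment]^{\ParBatch,\batchResponse}(\Par), \emVar[\Naugment](\Par;\ParBatch))$. Applying \Cref{prop:Gaussian_marginal_moments} with $A = I$, $m = \emMean[\Naugment]^{\ParBatch,\batchResponse}(\Par)$ and $C = \emVar[\Naugment](\Par;\ParBatch)$ gives
\begin{align*}
\frac{\Var[\jointDens(\Par; \targetEm[\Naugment]^{\ParBatch,\batchResponse}) \given \ParBatch, \batchResponse]}{\priorDens^2(\Par)}
&= \frac{\Gaussian\left(\obs \given \emMean[\Naugment]^{\ParBatch,\batchResponse}(\Par), \frac{1}{2}\covNoise + \emVar[\Naugment](\Par;\ParBatch)\right)}{2^{\dimObs/2}\det(2\pi\covNoise)^{1/2}} \\
&\quad - \frac{\Gaussian\left(\obs \given \emMean[\Naugment]^{\ParBatch,\batchResponse}(\Par), \frac{1}{2}\CovComb[\Naugment](\Par;\ParBatch)\right)}{2^{\dimObs/2}\det(2\pi\CovComb[\Naugment](\Par;\ParBatch))^{1/2}}.
\end{align*}
Here $\CovComb[\Naugment](\Par;\ParBatch) = \covNoise + \emVar[\Naugment](\Par;\ParBatch)$. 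The covariance $\emVar[\Naugment]$ and the determinant prefactors do not depend on $\batchResponse$. So the only randomness left is in the mean argument $\emMean[\Naugment]^{\ParBatch,\batchResponse}(\Par)$.

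\textbf{Step 2 (average over $\batchResponse$).} By \Cref{lemma:pred-mean-dist}, $\emMean[\Naugment]^{\ParBatch,\batchResponse}(\Par) \sim \Gaussian(\emMean(\Par), \emVar(\Par) - \emVar[\Naugment](\Par;\ParBatch))$. Each Gaussian density is symmetric in its argument and mean, so each term has the form $\E[\Gaussian(\obs \given \mu, S)]$ with $\mu$ Gaussian. The convolution identity (the expectation part of \Cref{prop:Gaussian_marginal_moments}) then adds $\emVar(\Par) - \emVar[\Naugment](\Par;\ParBatch)$ to the covariance.

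\textbf{Step 3 (simplify the covariances).} In the first term the covariance becomes
\begin{equation*}
\tfrac12\covNoise + \emVar[\Naugment] + \emVar - \emVar[\Naugment] = \CovComb(\Par) - \tfrac12\covNoise.
\end{equation*}
In the second term it becomes
\begin{equation*}
\tfrac12\CovComb[\Naugment] + \emVar - \emVar[\Naugment] = \CovComb(\Par) - \tfrac12\CovComb[\Naugment](\Par;\ParBatch),
\end{equation*}
using $\emVar - \emVar[\Naugment] = \CovComb - \CovComb[\Naugment]$. This matches the claimed formula.

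The main obstacle is justifying the use of \Cref{lemma:pred-mean-dist} in the multi-output, positive semidefinite setting. We need $\emVar(\Par) - \emVar[\Naugment](\Par;\ParBatch)$ to be a valid (possibly singular) covariance, and the convolution formula must remain valid in that case. Both follow because conditioning can only reduce variance, and because the convolution identity holds by a limiting or degenerate-Gaussian argument. One should also check that the noise convention in the lemma matches the assumption $\batchResponse \sim \Gaussian(\emMean(\ParBatch), \emVar(\ParBatch))$ used here. In the noiseless case the variance of the updated mean reduces exactly to $\emVar - \emVar[\Naugment]$ by the law of total variance.
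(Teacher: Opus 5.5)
Your proposal is correct and follows the paper's proof step for step. You apply \Cref{prop:Gaussian_marginal_moments} conditionally on $\batchResponse$ and note that only the mean argument $\emMean[\Naugment]^{\ParBatch,\batchResponse}(\Par)$ remains random. You then average each numerator via \Cref{lemma:pred-mean-dist} and the Gaussian convolution identity, and rearrange the covariances into $\CovComb(\Par) - \tfrac12\covNoise$ and $\CovComb(\Par) - \tfrac12\CovComb[\Naugment](\Par;\ParBatch)$.

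Your caveat about the noise convention is apt. \Cref{lemma:pred-mean-dist} yields variance exactly $\emVar(\Par) - \emVar[\Naugment](\Par;\ParBatch)$ only when $\batchResponse$ has the same covariance used in the conditioning (e.g., the noiseless case), a point the paper's proof passes over.
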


\begin{proof}
The prior density is a constant, and thus is pulled out of the variance and squared. 
Applying \Cref{prop:Gaussian_marginal_moments}, the conditional variance term thus becomes
\begin{align}
&\Var[\Gaussian(\obs \given \targetEm[\Naugment]^{\ParBatch,\batchResponse}(\Par), \covNoise) \given \ParBatch, \batchResponse]
= \\ &\frac{\Gaussian(\obs | \emMean[\Naugment]^{\ParBatch,\batchResponse}(\Par), 
\frac{1}{2}\covNoise + \emVar[\Naugment](\Par; \ParBatch))}{\det(2\pi \covNoise)^{1/2}}
- \frac{\Gaussian(\obs \given \emMean[\Naugment]^{\ParBatch,\batchResponse}(\Par), \frac{1}{2}[\covNoise + \emVar[\Naugment](\Par; \ParBatch)])}{\det(2\pi [\covNoise + \emVar[\Naugment](\Par; \ParBatch)])^{1/2}}. \nonumber
\end{align}
The denominators in the above expression can be pulled out of the outer expectation and are seen to equal 
the denominators in the desired expression. We thus complete the proof by 
computing the expectation of the numerators with respect to $\batchResponse$. 
Applying \Cref{lemma:pred-mean-dist} and \Cref{prop:Gaussian_marginal_moments} yields
\begin{align*}
&\E_{\batchResponse}[\Gaussian(\obs \given \emMean[\Naugment]^{\ParBatch,\batchResponse}(\Par), 
\covNoise/2 + \emVar[\Naugment](\Par; \ParBatch))] \\
&= \Gaussian(\obs \given \emMean(\Par), 
[\covNoise/2 + \emVar[\Naugment](\Par;\ParBatch)] + [\emVar(\Par) - \emVar[\Naugment](\Par; \ParBatch)]) \\
&= \Gaussian(\obs \given \emMean(\Par), \covNoise/2 + \emVar(\Par))
\end{align*}
and
\begin{align*}
&\E_{\batchResponse}[\Gaussian(\obs \given \emMean[\Naugment]^{\ParBatch,\batchResponse}(\Par), 
[\covNoise + \emVar[\Naugment](\Par; \ParBatch)]/2)] \\
&= \Gaussian(\obs \given \emMean(\Par), [\covNoise + \emVar[\Naugment](\Par; \ParBatch)]/2 + 
[\emVar(\Par) - \emVar[\Naugment](\Par; \ParBatch)]) \\
&= \Gaussian(\obs \given \emMean(\Par), [\covNoise/2 + \emVar(\Par)] - \emVar[\Naugment](\Par; \ParBatch)/2).
\end{align*}
We now rearrange the covariances of the above expressions to obtain 
\begin{align*}
&\covNoise/2 + \emVar(\Par) = \CovComb(\Par) - \covNoise/2 \\
&[\covNoise/2 + \emVar(\Par)] - \emVar[\Naugment](\Par; \ParBatch)/2
= \CovComb(\Par) - \CovComb[\Naugment](\Par) /2. 
\end{align*}
\end{proof}

\subsection{Log-Density Emulation}
The log-likelihood emulation setting yields the unnormalized posterior surrogate
$\jointEm(\Par) = \priorDens(\Par)\Exp{\targetEm(\Par)}$. This is a
log-normal process and thus the mean and variance in \Cref{ex:ldens-em}
follow immediately from the known forms of log-normal moments. We now derive 
the form of the expected conditional variance criterion in \Cref{ex:evar-ldens}.

\begin{proposition}
Let $\jointDens(\Par; \targetEm[\Naugment]^{\ParBatch,\batchResponse}) = 
\priorDens(\Par)\Exp{\targetEm[\Naugment]^{\ParBatch,\batchResponse}(\Par)}$ denote the 
unnormalized posterior surrogate induced by updating the emulator $\targetEm$ with new training
data $\{\ParBatch, \batchResponse\}$. Under the assumption 
$\batchResponse \sim \Gaussian(\emMean(\ParBatch), \emVar(\ParBatch))$, the expected conditional variance
is given by
\begin{align}
\E_{\batchResponse} \left\{\Var[\jointDens(\Par; \targetEm[\Naugment]^{\ParBatch,\batchResponse}) \given \ParBatch, \batchResponse]\right\}
&= \Var_{\emDist}\left[\jointEm(\Par) \given \simObsPredProcess(\ParBatch) = \emMean[\Ndesign](\ParBatch) \right] \varInflation_{\Ndesign}(\Par; \ParBatch),
\label{eq:evar-llik}
\end{align}
where
\begin{equation}
\varInflation_{\Ndesign}(\Par; \ParBatch)
= \Exp{2\left(\emVar[\Ndesign](\Par) - \emVar[\Naugment](\Par; \ParBatch)\right)}.
\end{equation}
The first term in \Cref{eq:evar-llik} is computed by conditioning the GP on 
$\{\ParBatch, \emMean[\Ndesign](\ParBatch)\}$ and then evaluating the log-normal variance of the resulting
unnormalized posterior surrogate. See \Cref{ex:evar-ldens} for the explicit expression.
\end{proposition}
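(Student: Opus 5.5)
The plan follows the forward-model proposition. First I condition on $\batchResponse$ and use the log-normal moment formula from \Cref{ex:ldens-em}. Then I average over $\batchResponse$ with \Cref{lemma:pred-mean-dist}. Finally I recognize the resulting expression as a variance at the plug-in response times the inflation factor $\varInflation_{\Ndesign}(\Par;\ParBatch)$.

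Fix $\Par$ and write $m(\batchResponse) \Def \emMean[\Naugment]^{\ParBatch,\batchResponse}(\Par)$ and $v \Def \emVar[\Naugment](\Par;\ParBatch)$. As noted in the proof of \Cref{lemma:pred-mean-dist}, $v$ does not depend on $\batchResponse$. Conditional on $\batchResponse$, $\targetEm[\Naugment]^{\ParBatch,\batchResponse}(\Par)\sim\Gaussian(m(\batchResponse),v)$. The log-normal variance formula then gives
\begin{equation*}
\Var[\jointDens(\Par; \targetEm[\Naugment]^{\ParBatch,\batchResponse}) \given \ParBatch, \batchResponse] = \priorDens^2(\Par)\left[\Exp{v}-1\right]\Exp{v}\,\Exp{2m(\batchResponse)}.
\end{equation*}
Only the factor $\Exp{2m(\batchResponse)}$ is random in $\batchResponse$. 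By \Cref{lemma:pred-mean-dist}, $m(\batchResponse)\sim\Gaussian(\emMean(\Par),\emVar(\Par)-v)$. The Gaussian moment generating function evaluated at $2$ therefore yields
\begin{equation*}
\E_{\batchResponse}\Exp{2m(\batchResponse)} = \Exp{2\emMean(\Par)+2(\emVar(\Par)-v)}.
\end{equation*}

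For the identification step, note that conditioning on the pseudo-response $\batchResponse=\emMean(\ParBatch)$ makes the correction term in the kriging mean vanish, so $m(\emMean(\ParBatch))=\emMean(\Par)$. The conditional variance at that response is then $\priorDens^2(\Par)\Exp{2\emMean(\Par)+v}[\Exp{v}-1]$, which is exactly the explicit expression in \Cref{ex:evar-ldens}. Multiplying it by $\Exp{2(\emVar(\Par)-v)}=\varInflation_{\Ndesign}(\Par;\ParBatch)$ recovers the expectation computed above. This proves \Cref{eq:evar-llik}.

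I expect the main obstacle to be the variance bookkeeping in \Cref{lemma:pred-mean-dist}. The lemma states the distribution of $\batchResponse$ using $\emVar(\ParBatch)$, while the kriging update uses $\kerMat=\emVar(\ParBatch)+\sigma^2 I$. The claimed variance $\emVar(\Par)-v$ of the updated mean holds only when the predictive law of $\batchResponse$ matches the matrix $\kerMat$ used in the update, i.e.\ in the noiseless case or with $\simObsPredDist$ including the noise. I would state this consistency assumption explicitly. The other subtle point is checking that the symbol $\Var_{\emDist}[\jointEm(\Par)\given \simObsPredProcess(\ParBatch)=\emMean(\ParBatch)]$ means the variance under the GP updated with $\{\ParBatch,\emMean(\ParBatch)\}$, which is the reading in \Cref{ex:evar-ldens}.
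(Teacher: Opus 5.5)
Your proposal is correct and follows the paper's proof essentially step for step: the conditional log-normal variance with only $\Exp{2\emMean[\Naugment]^{\ParBatch,\batchResponse}(\Par)}$ random, then \Cref{lemma:pred-mean-dist} plus the log-normal mean (your MGF at $2$) to average over $\batchResponse$, then identification of the remaining factor with the variance at the plug-in response $\emMean(\ParBatch)$; you also usefully make explicit that the kriging correction vanishes there, so the updated mean is $\emMean(\Par)$, which the paper uses without comment. Your caveat is also well founded: the lemma's variance $\emVar(\Par) - \emVar[\Naugment](\Par;\ParBatch)$ holds only if $\batchResponse$ is drawn from the noise-inclusive predictive law $\simObsPredDist(\ParBatch)$ (as in the main-text acquisition) or if $\sigma^2 = 0$, and not from $\Gaussian(\emMean(\ParBatch), \emVar(\ParBatch))$ with $\sigma^2 > 0$ as the appendix literally states.
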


\begin{proof}
After pulling out the constant $\priorDens(\Par)$, note that 
\begin{equation}
\Exp{\targetEm(\Par)} \given [\simObsPredProcess(\ParBatch) = \batchResponse] \sim 
\lognormal(\emMean[\Naugment]^{\ParBatch,\batchResponse}(\Par), \emVar[\Naugment](\Par; \ParBatch)).
\end{equation}
Applying the formula for a log-normal variance yields
\begin{align}
\Var[\Exp{\targetEm(\Par)} \given \simObsPredProcess(\ParBatch) = \batchResponse]
&= \cst \cdot \Exp{2\emMean[\Naugment]^{\ParBatch,\batchResponse}(\Par)}, \label{eq:formula_plug_in}
\end{align}
where 
\begin{align*}
\cst \Def \left[\Exp{\emVar[\Naugment](\Par; \ParBatch)} -1 \right] \Exp{\emVar[\Naugment](\Par; \ParBatch)}
\end{align*}
is not a function of the random variable $\batchResponse$. Applying \Cref{lemma:pred-mean-dist} gives
\begin{align*}
\Exp{2\emMean[\Naugment]^{\ParBatch,\batchResponse}(\Par)}
&\sim \lognormal(2\emMean(\Par), 4[\emVar(\Par) - \emVar[\Naugment](\Par; \ParBatch)]),
\end{align*}
which has the log-normal mean 
\begin{align*}
\E_{\batchResponse}[\Exp{2\emMean[\Naugment]^{\ParBatch,\batchResponse}(\Par)}]
&= \Exp{2\emMean(\Par)} \Exp{2[\emVar(\Par) - \emVar[\Naugment](\Par; \ParBatch)]} \\
&=  \Exp{2\emMean(\Par)} \varInflation(\Par; \ParBatch).
\end{align*}
Plugging this expression back into \Cref{eq:formula_plug_in} gives 
\begin{align*}
\E_{\batchResponse} \Var[\Exp{\targetEm(\Par)} \given \simObsPredProcess(\ParBatch) = \batchResponse]
&= \cst \cdot \Exp{2\emMean(\Par)} \varInflation(\Par; \ParBatch) \\
&= \Var_{\emDist}\left[\jointEm(\Par) \given \simObsProcess(\ParBatch) = \emMean[\Ndesign](\ParBatch) \right] \varInflation_{\Ndesign}(\Par; \ParBatch).
\end{align*}
\end{proof}

\bibliographystyle{imsart-authoryear}
\bibliography{references}

\begin{thebibliography}{176}

\bibitem[\protect\citeauthoryear{Aakula, Fer and Vira}{2025}]{FerRNNHMC}
\begin{barticle}[author]
\bauthor{\bsnm{Aakula},~\bfnm{Viivi}\binits{V.}},
  \bauthor{\bsnm{Fer},~\bfnm{Istem}\binits{I.}} \AND
  \bauthor{\bsnm{Vira},~\bfnm{Julius}\binits{J.}}
(\byear{2025}).
\btitle{Emulator-based calibration of a dynamic grassland model using recurrent
  neural networks and Hamiltonian Monte Carlo}.
\bjournal{European Journal of Agronomy}
\bvolume{170}
\bpages{127769}.
\bdoi{https://doi.org/10.1016/j.eja.2025.127769}
\end{barticle}
\endbibitem

\bibitem[\protect\citeauthoryear{Abdar et~al.}{2021}]{UQDLReview}
\begin{barticle}[author]
\bauthor{\bsnm{Abdar},~\bfnm{Moloud}\binits{M.}},
  \bauthor{\bsnm{Pourpanah},~\bfnm{Farhad}\binits{F.}},
  \bauthor{\bsnm{Hussain},~\bfnm{Sadiq}\binits{S.}},
  \bauthor{\bsnm{Rezazadegan},~\bfnm{Dana}\binits{D.}},
  \bauthor{\bsnm{Liu},~\bfnm{Li}\binits{L.}},
  \bauthor{\bsnm{Ghavamzadeh},~\bfnm{Mohammad}\binits{M.}},
  \bauthor{\bsnm{Fieguth},~\bfnm{Paul}\binits{P.}},
  \bauthor{\bsnm{Cao},~\bfnm{Xiaochun}\binits{X.}},
  \bauthor{\bsnm{Khosravi},~\bfnm{Abbas}\binits{A.}},
  \bauthor{\bsnm{Acharya},~\bfnm{U.~Rajendra}\binits{U.~R.}},
  \bauthor{\bsnm{Makarenkov},~\bfnm{Vladimir}\binits{V.}} \AND
  \bauthor{\bsnm{Nahavandi},~\bfnm{Saeid}\binits{S.}}
(\byear{2021}).
\btitle{A review of uncertainty quantification in deep learning: Techniques,
  applications and challenges}.
\bjournal{Inf. Fusion}
\bvolume{76}
\bpages{243–297}.
\bdoi{10.1016/j.inffus.2021.05.008}
\end{barticle}
\endbibitem

\bibitem[\protect\citeauthoryear{Alawieh, Goodman and
  Bell}{2020}]{AlawiehIterativeGP}
\begin{barticle}[author]
\bauthor{\bsnm{Alawieh},~\bfnm{Leen}\binits{L.}},
  \bauthor{\bsnm{Goodman},~\bfnm{Jonathan}\binits{J.}} \AND
  \bauthor{\bsnm{Bell},~\bfnm{John~B.}\binits{J.~B.}}
(\byear{2020}).
\btitle{Iterative construction of Gaussian process surrogate models for
  Bayesian inference}.
\bjournal{Journal of Statistical Planning and Inference}
\bvolume{207}
\bpages{55-72}.
\bdoi{https://doi.org/10.1016/j.jspi.2019.11.002}
\end{barticle}
\endbibitem

\bibitem[\protect\citeauthoryear{Alquier et~al.}{2016}]{noisyMCMC}
\begin{barticle}[author]
\bauthor{\bsnm{Alquier},~\bfnm{Pierre}\binits{P.}},
  \bauthor{\bsnm{Friel},~\bfnm{Nial}\binits{N.}},
  \bauthor{\bsnm{Everitt},~\bfnm{Richard}\binits{R.}} \AND
  \bauthor{\bsnm{Boland},~\bfnm{Aidan}\binits{A.}}
(\byear{2016}).
\btitle{Noisy {M}onte {C}arlo: convergence of {M}arkov chains with approximate
  transition kernels}.
\bjournal{Statistics and Computing}
\bvolume{26}
\bpages{29--47}.
\bdoi{10.1007/s11222-014-9521-x}
\end{barticle}
\endbibitem

\bibitem[\protect\citeauthoryear{Andrieu, Doucet and
  Holenstein}{2010}]{ParticleMCMC}
\begin{barticle}[author]
\bauthor{\bsnm{Andrieu},~\bfnm{Christophe}\binits{C.}},
  \bauthor{\bsnm{Doucet},~\bfnm{Arnaud}\binits{A.}} \AND
  \bauthor{\bsnm{Holenstein},~\bfnm{Roman}\binits{R.}}
(\byear{2010}).
\btitle{Particle {M}arkov chain {M}onte {C}arlo methods}.
\bjournal{Journal of the Royal Statistical Society Series B: Statistical
  Methodology}
\bvolume{72}
\bpages{269--342}.
\bdoi{10.1111/j.1467-9868.2009.00736.x}
\end{barticle}
\endbibitem

\bibitem[\protect\citeauthoryear{Andrieu and
  Roberts}{2009}]{pseudoMarginalMCMC}
\begin{barticle}[author]
\bauthor{\bsnm{Andrieu},~\bfnm{Christophe}\binits{C.}} \AND
  \bauthor{\bsnm{Roberts},~\bfnm{Gareth~O.}\binits{G.~O.}}
(\byear{2009}).
\btitle{{The pseudo-marginal approach for efficient Monte Carlo computations}}.
\bjournal{The Annals of Statistics}
\bvolume{37}
\bpages{697 -- 725}.
\bdoi{10.1214/07-AOS574}
\end{barticle}
\endbibitem

\bibitem[\protect\citeauthoryear{Astudillo and Frazier}{2019}]{CompositeBO}
\begin{binproceedings}[author]
\bauthor{\bsnm{Astudillo},~\bfnm{Raul}\binits{R.}} \AND
  \bauthor{\bsnm{Frazier},~\bfnm{Peter}\binits{P.}}
(\byear{2019}).
\btitle{{B}ayesian Optimization of Composite Functions}.
In \bbooktitle{Proceedings of the 36th International Conference on Machine
  Learning}
(\beditor{\bfnm{Kamalika}\binits{K.}~\bsnm{Chaudhuri}} \AND
  \beditor{\bfnm{Ruslan}\binits{R.}~\bsnm{Salakhutdinov}}, eds.).
\bseries{Proceedings of Machine Learning Research}
\bvolume{97}
\bpages{354--363}.
\bpublisher{PMLR}.
\end{binproceedings}
\endbibitem

\bibitem[\protect\citeauthoryear{Badia et~al.}{2020}]{BadiaRL}
\begin{bmisc}[author]
\bauthor{\bsnm{Badia},~\bfnm{Adrià~Puigdomènech}\binits{A.~P.}},
  \bauthor{\bsnm{Sprechmann},~\bfnm{Pablo}\binits{P.}},
  \bauthor{\bsnm{Vitvitskyi},~\bfnm{Alex}\binits{A.}},
  \bauthor{\bsnm{Guo},~\bfnm{Daniel}\binits{D.}},
  \bauthor{\bsnm{Piot},~\bfnm{Bilal}\binits{B.}},
  \bauthor{\bsnm{Kapturowski},~\bfnm{Steven}\binits{S.}},
  \bauthor{\bsnm{Tieleman},~\bfnm{Olivier}\binits{O.}},
  \bauthor{\bsnm{Arjovsky},~\bfnm{Martín}\binits{M.}},
  \bauthor{\bsnm{Pritzel},~\bfnm{Alexander}\binits{A.}},
  \bauthor{\bsnm{Bolt},~\bfnm{Andew}\binits{A.}} \AND
  \bauthor{\bsnm{Blundell},~\bfnm{Charles}\binits{C.}}
(\byear{2020}).
\btitle{Never Give Up: Learning Directed Exploration Strategies}.
\end{bmisc}
\endbibitem

\bibitem[\protect\citeauthoryear{Bai, Teckentrup and
  Zygalakis}{2024}]{GP_PDE_priors}
\begin{barticle}[author]
\bauthor{\bsnm{Bai},~\bfnm{Ting}\binits{T.}},
  \bauthor{\bsnm{Teckentrup},~\bfnm{Aretha~L.}\binits{A.~L.}} \AND
  \bauthor{\bsnm{Zygalakis},~\bfnm{Konstantinos~C.}\binits{K.~C.}}
(\byear{2024}).
\btitle{Gaussian processes for Bayesian inverse problems associated with linear
  partial differential equations}.
\bjournal{Statistics and Computing}
\bvolume{34}.
\bdoi{10.1007/s11222-024-10452-2}
\end{barticle}
\endbibitem

\bibitem[\protect\citeauthoryear{Baker et~al.}{2022}]{stochasticComputerModels}
\begin{barticle}[author]
\bauthor{\bsnm{Baker},~\bfnm{Evan}\binits{E.}},
  \bauthor{\bsnm{Barbillon},~\bfnm{Pierre}\binits{P.}},
  \bauthor{\bsnm{Fadikar},~\bfnm{Arindam}\binits{A.}},
  \bauthor{\bsnm{Gramacy},~\bfnm{Robert~B.}\binits{R.~B.}},
  \bauthor{\bsnm{Herbei},~\bfnm{Radu}\binits{R.}},
  \bauthor{\bsnm{Higdon},~\bfnm{David}\binits{D.}},
  \bauthor{\bsnm{Huang},~\bfnm{Jiangeng}\binits{J.}},
  \bauthor{\bsnm{Johnson},~\bfnm{Leah~R.}\binits{L.~R.}},
  \bauthor{\bsnm{Ma},~\bfnm{Pulong}\binits{P.}},
  \bauthor{\bsnm{Mondal},~\bfnm{Anirban}\binits{A.}},
  \bauthor{\bsnm{Pires},~\bfnm{Bianica}\binits{B.}},
  \bauthor{\bsnm{Sacks},~\bfnm{Jerome}\binits{J.}} \AND
  \bauthor{\bsnm{Sokolov},~\bfnm{Vadim}\binits{V.}}
(\byear{2022}).
\btitle{Analyzing Stochastic Computer Models: A Review with Opportunities}.
\bjournal{Statistical Science}
\bvolume{37}
\bpages{64--89}.
\bdoi{10.1214/21-STS822}
\end{barticle}
\endbibitem

\bibitem[\protect\citeauthoryear{Balandat et~al.}{2020}]{botorch}
\begin{binproceedings}[author]
\bauthor{\bsnm{Balandat},~\bfnm{Maximilian}\binits{M.}},
  \bauthor{\bsnm{Karrer},~\bfnm{Brian}\binits{B.}},
  \bauthor{\bsnm{Jiang},~\bfnm{Daniel~R.}\binits{D.~R.}},
  \bauthor{\bsnm{Daulton},~\bfnm{Samuel}\binits{S.}},
  \bauthor{\bsnm{Letham},~\bfnm{Benjamin}\binits{B.}},
  \bauthor{\bsnm{Wilson},~\bfnm{Andrew~Gordon}\binits{A.~G.}} \AND
  \bauthor{\bsnm{Bakshy},~\bfnm{Eytan}\binits{E.}}
(\byear{2020}).
\btitle{{BoTorch: A Framework for Efficient Monte-Carlo Bayesian
  Optimization}}.
In \bbooktitle{Advances in Neural Information Processing Systems 33}.
\end{binproceedings}
\endbibitem

\bibitem[\protect\citeauthoryear{Bastos and
  O’Hagan}{2009}]{GPEmulatorDiagnostics}
\begin{barticle}[author]
\bauthor{\bsnm{Bastos},~\bfnm{Leonardo~S.}\binits{L.~S.}} \AND
  \bauthor{\bsnm{O’Hagan},~\bfnm{Anthony}\binits{A.}}
(\byear{2009}).
\btitle{Diagnostics for Gaussian Process Emulators}.
\bjournal{Technometrics}
\bvolume{51}
\bpages{425--438}.
\bdoi{10.1198/TECH.2009.08019}
\end{barticle}
\endbibitem

\bibitem[\protect\citeauthoryear{Bect, Bachoc and
  Ginsbourger}{2019}]{supermartingaleSUR}
\begin{barticle}[author]
\bauthor{\bsnm{Bect},~\bfnm{Julien}\binits{J.}},
  \bauthor{\bsnm{Bachoc},~\bfnm{Fran{\c{c}}ois}\binits{F.}} \AND
  \bauthor{\bsnm{Ginsbourger},~\bfnm{David}\binits{D.}}
(\byear{2019}).
\btitle{{A supermartingale approach to Gaussian process based sequential design
  of experiments}}.
\bjournal{Bernoulli}
\bvolume{25}
\bpages{2883 -- 2919}.
\bdoi{10.3150/18-BEJ1074}
\end{barticle}
\endbibitem

\bibitem[\protect\citeauthoryear{Bect et~al.}{2011}]{BectSUR}
\begin{barticle}[author]
\bauthor{\bsnm{Bect},~\bfnm{Julien}\binits{J.}},
  \bauthor{\bsnm{Ginsbourger},~\bfnm{David}\binits{D.}},
  \bauthor{\bsnm{Li},~\bfnm{Ling}\binits{L.}},
  \bauthor{\bsnm{Picheny},~\bfnm{Victor}\binits{V.}} \AND
  \bauthor{\bsnm{Vazquez},~\bfnm{Emmanuel}\binits{E.}}
(\byear{2011}).
\btitle{Sequential design of computer experiments for the estimation of a
  probability of failure}.
\bjournal{Statistics and Computing}
\bvolume{22}
\bpages{773-793}.
\bdoi{10.1007/s11222-011-9241-4}
\end{barticle}
\endbibitem

\bibitem[\protect\citeauthoryear{Bharti et~al.}{2025}]{costAwareSBI}
\begin{binproceedings}[author]
\bauthor{\bsnm{Bharti},~\bfnm{Ayush}\binits{A.}},
  \bauthor{\bsnm{Huang},~\bfnm{Daolang}\binits{D.}},
  \bauthor{\bsnm{Kaski},~\bfnm{Samuel}\binits{S.}} \AND
  \bauthor{\bsnm{Briol},~\bfnm{Francois-Xavier}\binits{F.-X.}}
(\byear{2025}).
\btitle{Cost-aware simulation-based inference}.
In \bbooktitle{Proceedings of The 28th International Conference on Artificial
  Intelligence and Statistics}
(\beditor{\bfnm{Yingzhen}\binits{Y.}~\bsnm{Li}},
  \beditor{\bfnm{Stephan}\binits{S.}~\bsnm{Mandt}},
  \beditor{\bfnm{Shipra}\binits{S.}~\bsnm{Agrawal}} \AND
  \beditor{\bfnm{Emtiyaz}\binits{E.}~\bsnm{Khan}}, eds.).
\bseries{Proceedings of Machine Learning Research}
\bvolume{258}
\bpages{28--36}.
\bpublisher{PMLR}.
\end{binproceedings}
\endbibitem

\bibitem[\protect\citeauthoryear{Bhattacharyya}{2020}]{BayesianPCE1}
\begin{barticle}[author]
\bauthor{\bsnm{Bhattacharyya},~\bfnm{Biswarup}\binits{B.}}
(\byear{2020}).
\btitle{Global sensitivity analysis: A Bayesian learning based polynomial chaos
  approach}.
\bjournal{Journal of Computational Physics}
\bvolume{415}
\bpages{109539}.
\bdoi{https://doi.org/10.1016/j.jcp.2020.109539}
\end{barticle}
\endbibitem

\bibitem[\protect\citeauthoryear{Bilionis and
  Zabaras}{2013}]{BilionisBayesSurrogates}
\begin{barticle}[author]
\bauthor{\bsnm{Bilionis},~\bfnm{Ilias}\binits{I.}} \AND
  \bauthor{\bsnm{Zabaras},~\bfnm{Nicholas}\binits{N.}}
(\byear{2013}).
\btitle{Solution of inverse problems with limited forward solver evaluations: a
  Bayesian perspective}.
\bjournal{Inverse Problems}
\bvolume{30 015004}.
\bdoi{10.1088/0266-5611/30/1/015004}
\end{barticle}
\endbibitem

\bibitem[\protect\citeauthoryear{Binois et~al.}{2018}]{Binois_2018}
\begin{barticle}[author]
\bauthor{\bsnm{Binois},~\bfnm{Mickaël}\binits{M.}},
  \bauthor{\bsnm{Huang},~\bfnm{Jiangeng}\binits{J.}},
  \bauthor{\bsnm{Gramacy},~\bfnm{Robert~B.}\binits{R.~B.}} \AND
  \bauthor{\bsnm{Ludkovski},~\bfnm{Mike}\binits{M.}}
(\byear{2018}).
\btitle{Replication or Exploration? Sequential Design for Stochastic Simulation
  Experiments}.
\bjournal{Technometrics}
\bvolume{61}
\bpages{7–23}.
\bdoi{10.1080/00401706.2018.1469433}
\end{barticle}
\endbibitem

\bibitem[\protect\citeauthoryear{Bliznyuk, Ruppert and
  Shoemaker}{2012}]{bliznyuk2012}
\begin{barticle}[author]
\bauthor{\bsnm{Bliznyuk},~\bfnm{Nikolay}\binits{N.}},
  \bauthor{\bsnm{Ruppert},~\bfnm{David}\binits{D.}} \AND
  \bauthor{\bsnm{Shoemaker},~\bfnm{Christine~A.}\binits{C.~A.}}
(\byear{2012}).
\btitle{Local Derivative-Free Approximation of Computationally Expensive
  Posterior Densities}.
\bjournal{Journal of Computational and Graphical Statistics}
\bvolume{21}
\bpages{676--695}.
\bnote{PMID: 29861616; PMCID: PMC5978778}.
\bdoi{10.1080/10618600.2012.681255}
\end{barticle}
\endbibitem

\bibitem[\protect\citeauthoryear{Bliznyuk et~al.}{2008}]{llikRBF}
\begin{barticle}[author]
\bauthor{\bsnm{Bliznyuk},~\bfnm{Nikolay}\binits{N.}},
  \bauthor{\bsnm{Ruppert},~\bfnm{David}\binits{D.}},
  \bauthor{\bsnm{Shoemaker},~\bfnm{Christine}\binits{C.}},
  \bauthor{\bsnm{Regis},~\bfnm{Rommel}\binits{R.}},
  \bauthor{\bsnm{Wild},~\bfnm{Stefan}\binits{S.}} \AND
  \bauthor{\bsnm{Mugunthan},~\bfnm{Pradeep}\binits{P.}}
(\byear{2008}).
\btitle{Bayesian Calibration and Uncertainty Analysis for Computationally
  Expensive Models Using Optimization and Radial Basis Function Approximation}.
\bjournal{Journal of Computational and Graphical Statistics}
\bvolume{17}
\bpages{270-294}.
\bdoi{10.1198/106186008X320681}
\end{barticle}
\endbibitem

\bibitem[\protect\citeauthoryear{Bonilla, Chai and
  Williams}{2007}]{MultiTaskGP}
\begin{binproceedings}[author]
\bauthor{\bsnm{Bonilla},~\bfnm{Edwin~V}\binits{E.~V.}},
  \bauthor{\bsnm{Chai},~\bfnm{Kian}\binits{K.}} \AND
  \bauthor{\bsnm{Williams},~\bfnm{Christopher}\binits{C.}}
(\byear{2007}).
\btitle{Multi-task Gaussian Process Prediction}.
In \bbooktitle{Advances in Neural Information Processing Systems}
(\beditor{\bfnm{J.}\binits{J.}~\bsnm{Platt}},
  \beditor{\bfnm{D.}\binits{D.}~\bsnm{Koller}},
  \beditor{\bfnm{Y.}\binits{Y.}~\bsnm{Singer}} \AND
  \beditor{\bfnm{S.}\binits{S.}~\bsnm{Roweis}}, eds.)
\bvolume{20}.
\bpublisher{Curran Associates, Inc.}
\end{binproceedings}
\endbibitem

\bibitem[\protect\citeauthoryear{Booth, Cooper and
  Gramacy}{2024}]{nonstationaryGP}
\begin{bmisc}[author]
\bauthor{\bsnm{Booth},~\bfnm{Annie~S.}\binits{A.~S.}},
  \bauthor{\bsnm{Cooper},~\bfnm{Andrew}\binits{A.}} \AND
  \bauthor{\bsnm{Gramacy},~\bfnm{Robert~B.}\binits{R.~B.}}
(\byear{2024}).
\btitle{Nonstationary Gaussian Process Surrogates}.
\bdoi{10.48550/arXiv.2305.19242}
\end{bmisc}
\endbibitem

\bibitem[\protect\citeauthoryear{Bradford and Imsland}{2021}]{PCEGP2}
\begin{bmisc}[author]
\bauthor{\bsnm{Bradford},~\bfnm{Eric}\binits{E.}} \AND
  \bauthor{\bsnm{Imsland},~\bfnm{Lars}\binits{L.}}
(\byear{2021}).
\btitle{Combining Gaussian processes and polynomial chaos expansions for
  stochastic nonlinear model predictive control}.
\bdoi{10.48550/arXiv.2103.05441}
\end{bmisc}
\endbibitem

\bibitem[\protect\citeauthoryear{Briol et~al.}{2017}]{briol2017sampling}
\begin{binproceedings}[author]
\bauthor{\bsnm{Briol},~\bfnm{Fran{\c{c}}ois-Xavier}\binits{F.-X.}},
  \bauthor{\bsnm{Oates},~\bfnm{Chris~J.}\binits{C.~J.}},
  \bauthor{\bsnm{Cockayne},~\bfnm{Jon}\binits{J.}},
  \bauthor{\bsnm{Chen},~\bfnm{Wilson~Ye}\binits{W.~Y.}} \AND
  \bauthor{\bsnm{Girolami},~\bfnm{Mark}\binits{M.}}
(\byear{2017}).
\btitle{On the Sampling Problem for Kernel Quadrature}.
In \bbooktitle{Proceedings of the 34th International Conference on Machine
  Learning}
(\beditor{\bfnm{Doina}\binits{D.}~\bsnm{Precup}} \AND
  \beditor{\bfnm{Yee~Whye}\binits{Y.~W.}~\bsnm{Teh}}, eds.).
\bseries{Proceedings of Machine Learning Research}
\bvolume{70}
\bpages{586--595}.
\bpublisher{PMLR}.
\end{binproceedings}
\endbibitem

\bibitem[\protect\citeauthoryear{Brooks et~al.}{2011}]{mcmcHandbook}
\begin{bbook}[author]
\beditor{\bsnm{Brooks},~\bfnm{Steve}\binits{S.}},
  \beditor{\bsnm{Gelman},~\bfnm{Andrew}\binits{A.}},
  \beditor{\bsnm{Jones},~\bfnm{Galin}\binits{G.}} \AND
  \beditor{\bsnm{Meng},~\bfnm{Xiao-Li}\binits{X.-L.}}, eds.
(\byear{2011}).
\btitle{Handbook of Markov Chain Monte Carlo},
\bedition{1st} ed.
\bpublisher{Chapman and Hall/CRC}.
\bdoi{10.1201/b10905}
\end{bbook}
\endbibitem

\bibitem[\protect\citeauthoryear{Bürkner et~al.}{2023}]{BayesianPCE2}
\begin{barticle}[author]
\bauthor{\bsnm{Bürkner},~\bfnm{Paul-Christian}\binits{P.-C.}},
  \bauthor{\bsnm{Kröker},~\bfnm{Ilja}\binits{I.}},
  \bauthor{\bsnm{Oladyshkin},~\bfnm{Sergey}\binits{S.}} \AND
  \bauthor{\bsnm{Nowak},~\bfnm{Wolfgang}\binits{W.}}
(\byear{2023}).
\btitle{A fully Bayesian sparse polynomial chaos expansion approach with joint
  priors on the coefficients and global selection of terms}.
\bjournal{Journal of Computational Physics}
\bvolume{488}
\bpages{112210}.
\bdoi{https://doi.org/10.1016/j.jcp.2023.112210}
\end{barticle}
\endbibitem

\bibitem[\protect\citeauthoryear{Carmona}{2023}]{semiModular}
\begin{bphdthesis}[author]
\bauthor{\bsnm{Carmona},~\bfnm{C.~U.}\binits{C.~U.}}
(\byear{2023}).
\btitle{Bayesian Semi-Modular Inference: theory and methods for inference in
  multi-modular settings under model misspecification},
\btype{PhD thesis},
\bpublisher{University of Oxford}.
\end{bphdthesis}
\endbibitem

\bibitem[\protect\citeauthoryear{C\'{e}rou, H\'{e}as and
  Rousset}{2025}]{adaptiveReducedTempering}
\begin{barticle}[author]
\bauthor{\bsnm{C\'{e}rou},~\bfnm{Fr\'{e}d\'{e}ric}\binits{F.}},
  \bauthor{\bsnm{H\'{e}as},~\bfnm{Patrick}\binits{P.}} \AND
  \bauthor{\bsnm{Rousset},~\bfnm{Mathias}\binits{M.}}
(\byear{2025}).
\btitle{Adaptive Reduced Tempering for Bayesian Inverse Problems and Rare Event
  Simulation}.
\bjournal{SIAM/ASA Journal on Uncertainty Quantification}
\bvolume{13}
\bpages{2022-2053}.
\bdoi{10.1137/24M170510X}
\end{barticle}
\endbibitem

\bibitem[\protect\citeauthoryear{Chevalier}{2013}]{SURThesis}
\begin{bphdthesis}[author]
\bauthor{\bsnm{Chevalier},~\bfnm{Clément}\binits{C.}}
(\byear{2013}).
\btitle{Fast uncertainty reduction strategies relying on Gaussian process
  models},
\btype{PhD thesis}.
\end{bphdthesis}
\endbibitem

\bibitem[\protect\citeauthoryear{Chevalier and
  Ginsbourger}{2013}]{Chevalier2013}
\begin{binproceedings}[author]
\bauthor{\bsnm{Chevalier},~\bfnm{Cl{\'e}ment}\binits{C.}} \AND
  \bauthor{\bsnm{Ginsbourger},~\bfnm{David}\binits{D.}}
(\byear{2013}).
\btitle{Fast Computation of the Multi-Points Expected Improvement with
  Applications in Batch Selection}.
In \bbooktitle{Learning and Intelligent Optimization}
(\beditor{\bfnm{Giuseppe}\binits{G.}~\bsnm{Nicosia}} \AND
  \beditor{\bfnm{Panos}\binits{P.}~\bsnm{Pardalos}}, eds.)
\bpages{59--69}.
\bpublisher{Springer Berlin Heidelberg}, \baddress{Berlin, Heidelberg}.
\end{binproceedings}
\endbibitem

\bibitem[\protect\citeauthoryear{Christen and Fox}{2005}]{DelayedAcceptance}
\begin{barticle}[author]
\bauthor{\bsnm{Christen},~\bfnm{J.~Andrés}\binits{J.~A.}} \AND
  \bauthor{\bsnm{Fox},~\bfnm{Colin}\binits{C.}}
(\byear{2005}).
\btitle{Markov chain Monte Carlo Using an Approximation}.
\bjournal{Journal of Computational and Graphical Statistics}
\bvolume{14}
\bpages{795--810}.
\bdoi{10.1198/106186005X76983}
\end{barticle}
\endbibitem

\bibitem[\protect\citeauthoryear{Cleary et~al.}{2021}]{CES}
\begin{barticle}[author]
\bauthor{\bsnm{Cleary},~\bfnm{Emmet}\binits{E.}},
  \bauthor{\bsnm{Garbuno-Inigo},~\bfnm{Alfredo}\binits{A.}},
  \bauthor{\bsnm{Lan},~\bfnm{Shiwei}\binits{S.}},
  \bauthor{\bsnm{Schneider},~\bfnm{Tapio}\binits{T.}} \AND
  \bauthor{\bsnm{Stuart},~\bfnm{Andrew~M.}\binits{A.~M.}}
(\byear{2021}).
\btitle{Calibrate, emulate, sample}.
\bjournal{Journal of Computational Physics}
\bvolume{424}
\bpages{109716}.
\bdoi{10.1016/j.jcp.2020.109716}
\end{barticle}
\endbibitem

\bibitem[\protect\citeauthoryear{Cole et~al.}{2023}]{cole2021entropybased}
\begin{barticle}[author]
\bauthor{\bsnm{Cole},~\bfnm{D.~Austin}\binits{D.~A.}},
  \bauthor{\bsnm{Gramacy},~\bfnm{Robert~B.}\binits{R.~B.}},
  \bauthor{\bsnm{Warner},~\bfnm{James~E.}\binits{J.~E.}},
  \bauthor{\bsnm{Bomarito},~\bfnm{Geoffrey~F.}\binits{G.~F.}},
  \bauthor{\bsnm{Leser},~\bfnm{Patrick~E.}\binits{P.~E.}} \AND
  \bauthor{\bsnm{Leser},~\bfnm{William~P.}\binits{W.~P.}}
(\byear{2023}).
\btitle{Entropy-based adaptive design for contour finding and estimating
  reliability}.
\bjournal{Journal of Quality Technology}
\bvolume{55}
\bpages{43--60}.
\bdoi{10.1080/00224065.2022.2053795}
\end{barticle}
\endbibitem

\bibitem[\protect\citeauthoryear{Conrad et~al.}{2016}]{ConradLocalExactMCMC}
\begin{barticle}[author]
\bauthor{\bsnm{Conrad},~\bfnm{Patrick~R.}\binits{P.~R.}},
  \bauthor{\bsnm{Marzouk},~\bfnm{Youssef~M.}\binits{Y.~M.}},
  \bauthor{\bsnm{Pillai},~\bfnm{Natesh~S.}\binits{N.~S.}} \AND
  \bauthor{\bsnm{and},~\bfnm{Aaron~Smith}\binits{A.~S.}}
(\byear{2016}).
\btitle{Accelerating Asymptotically Exact MCMC for Computationally Intensive
  Models via Local Approximations}.
\bjournal{Journal of the American Statistical Association}
\bvolume{111}
\bpages{1591--1607}.
\bdoi{10.1080/01621459.2015.1096787}
\end{barticle}
\endbibitem

\bibitem[\protect\citeauthoryear{Conti and
  O’Hagan}{2010}]{Bayesian_emulation_dynamic}
\begin{barticle}[author]
\bauthor{\bsnm{Conti},~\bfnm{Stefano}\binits{S.}} \AND
  \bauthor{\bsnm{O’Hagan},~\bfnm{Anthony}\binits{A.}}
(\byear{2010}).
\btitle{Bayesian emulation of complex multi-output and dynamic computer
  models}.
\bjournal{Journal of Statistical Planning and Inference}
\bvolume{140}
\bpages{640-651}.
\bdoi{https://doi.org/10.1016/j.jspi.2009.08.006}
\end{barticle}
\endbibitem

\bibitem[\protect\citeauthoryear{Conti et~al.}{2009}]{GP_dynamic_emulation}
\begin{barticle}[author]
\bauthor{\bsnm{Conti},~\bfnm{S.}\binits{S.}},
  \bauthor{\bsnm{Gosling},~\bfnm{J.~P.}\binits{J.~P.}},
  \bauthor{\bsnm{Oakley},~\bfnm{J.~E.}\binits{J.~E.}} \AND
  \bauthor{\bsnm{O'Hagan},~\bfnm{A.}\binits{A.}}
(\byear{2009}).
\btitle{{Gaussian process emulation of dynamic computer codes}}.
\bjournal{Biometrika}
\bvolume{96}
\bpages{663-676}.
\bdoi{10.1093/biomet/asp028}
\end{barticle}
\endbibitem

\bibitem[\protect\citeauthoryear{Cranmer, Brehmer and
  Louppe}{2020}]{frontierSBI}
\begin{barticle}[author]
\bauthor{\bsnm{Cranmer},~\bfnm{Kyle}\binits{K.}},
  \bauthor{\bsnm{Brehmer},~\bfnm{Johann}\binits{J.}} \AND
  \bauthor{\bsnm{Louppe},~\bfnm{Gilles}\binits{G.}}
(\byear{2020}).
\btitle{The frontier of simulation-based inference}.
\bjournal{Proceedings of the National Academy of Sciences}
\bvolume{117}
\bpages{30055--30062}.
\bdoi{10.1073/pnas.1912789117}
\end{barticle}
\endbibitem

\bibitem[\protect\citeauthoryear{Cranmer, Pavez and Louppe}{2015}]{CranmerNRE}
\begin{barticle}[author]
\bauthor{\bsnm{Cranmer},~\bfnm{Kyle}\binits{K.}},
  \bauthor{\bsnm{Pavez},~\bfnm{Juan}\binits{J.}} \AND
  \bauthor{\bsnm{Louppe},~\bfnm{Gilles}\binits{G.}}
(\byear{2015}).
\btitle{{Approximating Likelihood Ratios with Calibrated Discriminative
  Classifiers}}.
\end{barticle}
\endbibitem

\bibitem[\protect\citeauthoryear{Cui et~al.}{2014}]{CuiLIS}
\begin{barticle}[author]
\bauthor{\bsnm{Cui},~\bfnm{T}\binits{T.}},
  \bauthor{\bsnm{Martin},~\bfnm{J}\binits{J.}},
  \bauthor{\bsnm{Marzouk},~\bfnm{Y~M}\binits{Y.~M.}},
  \bauthor{\bsnm{Solonen},~\bfnm{A}\binits{A.}} \AND
  \bauthor{\bsnm{Spantini},~\bfnm{A}\binits{A.}}
(\byear{2014}).
\btitle{Likelihood-informed dimension reduction for nonlinear inverse
  problems}.
\bjournal{Inverse Problems}
\bvolume{30}
\bpages{114015}.
\bdoi{10.1088/0266-5611/30/11/114015}
\end{barticle}
\endbibitem

\bibitem[\protect\citeauthoryear{Dagon et~al.}{2020}]{DagonCLM}
\begin{barticle}[author]
\bauthor{\bsnm{Dagon},~\bfnm{K.}\binits{K.}},
  \bauthor{\bsnm{Sanderson},~\bfnm{B.~M.}\binits{B.~M.}},
  \bauthor{\bsnm{Fisher},~\bfnm{R.~A.}\binits{R.~A.}} \AND
  \bauthor{\bsnm{Lawrence},~\bfnm{D.~M.}\binits{D.~M.}}
(\byear{2020}).
\btitle{A machine learning approach to emulation and biophysical parameter
  estimation with the Community Land Model, version 5}.
\bjournal{Advances in Statistical Climatology, Meteorology and Oceanography}
\bvolume{6}
\bpages{223--244}.
\bdoi{10.5194/ascmo-6-223-2020}
\end{barticle}
\endbibitem

\bibitem[\protect\citeauthoryear{Dax et~al.}{2021}]{npeAstrophysics}
\begin{barticle}[author]
\bauthor{\bsnm{Dax},~\bfnm{Maximilian}\binits{M.}},
  \bauthor{\bsnm{Green},~\bfnm{Stephen~R.}\binits{S.~R.}},
  \bauthor{\bsnm{Gair},~\bfnm{Jonathan}\binits{J.}},
  \bauthor{\bsnm{Macke},~\bfnm{Jakob~H.}\binits{J.~H.}},
  \bauthor{\bsnm{Buonanno},~\bfnm{Alessandra}\binits{A.}} \AND
  \bauthor{\bsnm{Sch\"olkopf},~\bfnm{Bernhard}\binits{B.}}
(\byear{2021}).
\btitle{Real-Time Gravitational Wave Science with Neural Posterior Estimation}.
\bjournal{Phys. Rev. Lett.}
\bvolume{127}
\bpages{241103}.
\bdoi{10.1103/PhysRevLett.127.241103}
\end{barticle}
\endbibitem

\bibitem[\protect\citeauthoryear{Deligiannidis, Doucet and Pitt}{2018}]{corrPM}
\begin{barticle}[author]
\bauthor{\bsnm{Deligiannidis},~\bfnm{George}\binits{G.}},
  \bauthor{\bsnm{Doucet},~\bfnm{Arnaud}\binits{A.}} \AND
  \bauthor{\bsnm{Pitt},~\bfnm{Michael~K.}\binits{M.~K.}}
(\byear{2018}).
\btitle{The correlated pseudo-marginal method}.
\bjournal{Journal of the Royal Statistical Society Series B: Statistical
  Methodology}
\bvolume{80}
\bpages{839--870}.
\bdoi{10.1111/rssb.12280}
\end{barticle}
\endbibitem

\bibitem[\protect\citeauthoryear{Dietzel and Reichert}{2014}]{emPostDens}
\begin{barticle}[author]
\bauthor{\bsnm{Dietzel},~\bfnm{A.}\binits{A.}} \AND
  \bauthor{\bsnm{Reichert},~\bfnm{P.}\binits{P.}}
(\byear{2014}).
\btitle{Bayesian inference of a lake water quality model by emulating its
  posterior density}.
\bjournal{Water Resources Research}
\bvolume{50}
\bpages{7626-7647}.
\bdoi{https://doi.org/10.1002/2012WR013086}
\end{barticle}
\endbibitem

\bibitem[\protect\citeauthoryear{Dinkel et~al.}{2024}]{quantileApprox}
\begin{barticle}[author]
\bauthor{\bsnm{Dinkel},~\bfnm{Maximilian}\binits{M.}},
  \bauthor{\bsnm{Geitner},~\bfnm{Carolin~M}\binits{C.~M.}},
  \bauthor{\bsnm{Robalo~Rei},~\bfnm{Gil}\binits{G.}},
  \bauthor{\bsnm{Nitzler},~\bfnm{Jonas}\binits{J.}} \AND
  \bauthor{\bsnm{Wall},~\bfnm{Wolfgang~A}\binits{W.~A.}}
(\byear{2024}).
\btitle{Solving Bayesian inverse problems with expensive likelihoods using
  constrained Gaussian processes and active learning}.
\bjournal{Inverse Problems}
\bvolume{40}
\bpages{095008}.
\bdoi{10.1088/1361-6420/ad5eb4}
\end{barticle}
\endbibitem

\bibitem[\protect\citeauthoryear{Doucet, de~Freitas and
  Gordon}{2001}]{DoucetSMC}
\begin{binbook}[author]
\bauthor{\bsnm{Doucet},~\bfnm{Arnaud}\binits{A.}}, \bauthor{\bparticle{de}
  \bsnm{Freitas},~\bfnm{Nando}\binits{N.}} \AND
  \bauthor{\bsnm{Gordon},~\bfnm{Neil}\binits{N.}}
(\byear{2001}).
\btitle{An Introduction to Sequential Monte Carlo Methods}
In \bbooktitle{Sequential Monte Carlo Methods in Practice}
\bpages{3--14}.
\bpublisher{Springer New York}, \baddress{New York, NY}.
\bdoi{10.1007/978-1-4757-3437-9_1}
\end{binbook}
\endbibitem

\bibitem[\protect\citeauthoryear{Doumont et~al.}{2025}]{highDimBO}
\begin{bmisc}[author]
\bauthor{\bsnm{Doumont},~\bfnm{Colin}\binits{C.}},
  \bauthor{\bsnm{Fan},~\bfnm{Donney}\binits{D.}},
  \bauthor{\bsnm{Maus},~\bfnm{Natalie}\binits{N.}},
  \bauthor{\bsnm{Gardner},~\bfnm{Jacob~R.}\binits{J.~R.}},
  \bauthor{\bsnm{Moss},~\bfnm{Henry}\binits{H.}} \AND
  \bauthor{\bsnm{Pleiss},~\bfnm{Geoff}\binits{G.}}
(\byear{2025}).
\btitle{We Still Don't Understand High-Dimensional Bayesian Optimization}.
\end{bmisc}
\endbibitem

\bibitem[\protect\citeauthoryear{Drovandi, Moores and
  Boys}{2018}]{DrovandiPMGP}
\begin{barticle}[author]
\bauthor{\bsnm{Drovandi},~\bfnm{Christopher~C.}\binits{C.~C.}},
  \bauthor{\bsnm{Moores},~\bfnm{Matthew~T.}\binits{M.~T.}} \AND
  \bauthor{\bsnm{Boys},~\bfnm{Richard~J.}\binits{R.~J.}}
(\byear{2018}).
\btitle{Accelerating pseudo-marginal MCMC using Gaussian processes}.
\bjournal{Computational Statistics \& Data Analysis}
\bvolume{118}
\bpages{1-17}.
\bdoi{https://doi.org/10.1016/j.csda.2017.09.002}
\end{barticle}
\endbibitem

\bibitem[\protect\citeauthoryear{Dunbar et~al.}{2021}]{idealizedGCM}
\begin{barticle}[author]
\bauthor{\bsnm{Dunbar},~\bfnm{Oliver R.~A.}\binits{O.~R.~A.}},
  \bauthor{\bsnm{Garbuno-Inigo},~\bfnm{Alfredo}\binits{A.}},
  \bauthor{\bsnm{Schneider},~\bfnm{Tapio}\binits{T.}} \AND
  \bauthor{\bsnm{Stuart},~\bfnm{Andrew~M.}\binits{A.~M.}}
(\byear{2021}).
\btitle{Calibration and Uncertainty Quantification of Convective Parameters in
  an Idealized GCM}.
\bjournal{Journal of Advances in Modeling Earth Systems}
\bvolume{13}
\bpages{e2020MS002454}.
\bnote{e2020MS002454 2020MS002454}.
\bdoi{https://doi.org/10.1029/2020MS002454}
\end{barticle}
\endbibitem

\bibitem[\protect\citeauthoryear{Dunbar et~al.}{2024}]{CESSoftware}
\begin{barticle}[author]
\bauthor{\bsnm{Dunbar},~\bfnm{Oliver R.~A.}\binits{O.~R.~A.}},
  \bauthor{\bsnm{Bieli},~\bfnm{Melanie}\binits{M.}},
  \bauthor{\bsnm{Garbuno-Iñigo},~\bfnm{Alfredo}\binits{A.}},
  \bauthor{\bsnm{Howland},~\bfnm{Michael}\binits{M.}}, \bauthor{\bparticle{de}
  \bsnm{Souza},~\bfnm{Andre~Nogueira}\binits{A.~N.}},
  \bauthor{\bsnm{Mansfield},~\bfnm{Laura~Anne}\binits{L.~A.}},
  \bauthor{\bsnm{Wagner},~\bfnm{Gregory~L.}\binits{G.~L.}} \AND
  \bauthor{\bsnm{Efrat-Henrici},~\bfnm{N.}\binits{N.}}
(\byear{2024}).
\btitle{CalibrateEmulateSample.jl: Accelerated Parametric Uncertainty
  Quantification}.
\bjournal{Journal of Open Source Software}
\bvolume{9}
\bpages{6372}.
\bdoi{10.21105/joss.06372}
\end{barticle}
\endbibitem

\bibitem[\protect\citeauthoryear{Durkan, Papamakarios and
  Murray}{2018}]{MurrayNLEvsNPE}
\begin{bmisc}[author]
\bauthor{\bsnm{Durkan},~\bfnm{Conor}\binits{C.}},
  \bauthor{\bsnm{Papamakarios},~\bfnm{George}\binits{G.}} \AND
  \bauthor{\bsnm{Murray},~\bfnm{Iain}\binits{I.}}
(\byear{2018}).
\btitle{Sequential Neural Methods for Likelihood-free Inference}.
\end{bmisc}
\endbibitem

\bibitem[\protect\citeauthoryear{Eriksson and
  Poloczek}{2021}]{ScalableConstrainedBO}
\begin{binproceedings}[author]
\bauthor{\bsnm{Eriksson},~\bfnm{David}\binits{D.}} \AND
  \bauthor{\bsnm{Poloczek},~\bfnm{Matthias}\binits{M.}}
(\byear{2021}).
\btitle{Scalable Constrained Bayesian Optimization}.
In \bbooktitle{Proceedings of The 24th International Conference on Artificial
  Intelligence and Statistics}
(\beditor{\bfnm{Arindam}\binits{A.}~\bsnm{Banerjee}} \AND
  \beditor{\bfnm{Kenji}\binits{K.}~\bsnm{Fukumizu}}, eds.).
\bseries{Proceedings of Machine Learning Research}
\bvolume{130}
\bpages{730--738}.
\bpublisher{PMLR}.
\end{binproceedings}
\endbibitem

\bibitem[\protect\citeauthoryear{Fadikar et~al.}{2018}]{FadikarAgentBased}
\begin{barticle}[author]
\bauthor{\bsnm{Fadikar},~\bfnm{Arindam}\binits{A.}},
  \bauthor{\bsnm{Higdon},~\bfnm{Dave}\binits{D.}},
  \bauthor{\bsnm{Chen},~\bfnm{Jiangzhuo}\binits{J.}},
  \bauthor{\bsnm{Lewis},~\bfnm{Bryan}\binits{B.}},
  \bauthor{\bsnm{Venkatramanan},~\bfnm{Srinivasan}\binits{S.}} \AND
  \bauthor{\bsnm{Marathe},~\bfnm{Madhav}\binits{M.}}
(\byear{2018}).
\btitle{Calibrating a Stochastic, Agent-Based Model Using Quantile-Based
  Emulation}.
\bjournal{SIAM/ASA Journal on Uncertainty Quantification}
\bvolume{6}
\bpages{1685-1706}.
\bdoi{10.1137/17M1161233}
\end{barticle}
\endbibitem

\bibitem[\protect\citeauthoryear{Fedorov}{1972}]{FederovExchange}
\begin{bbook}[author]
\bauthor{\bsnm{Fedorov},~\bfnm{Valerii}\binits{V.}}
(\byear{1972}).
\btitle{Theory of Optimal Experiments Designs}.
\end{bbook}
\endbibitem

\bibitem[\protect\citeauthoryear{Fer et~al.}{2018}]{FerEmulation}
\begin{barticle}[author]
\bauthor{\bsnm{Fer},~\bfnm{I.}\binits{I.}},
  \bauthor{\bsnm{Kelly},~\bfnm{R.}\binits{R.}},
  \bauthor{\bsnm{Moorcroft},~\bfnm{P.~R.}\binits{P.~R.}},
  \bauthor{\bsnm{Richardson},~\bfnm{A.~D.}\binits{A.~D.}},
  \bauthor{\bsnm{Cowdery},~\bfnm{E.~M.}\binits{E.~M.}} \AND
  \bauthor{\bsnm{Dietze},~\bfnm{M.~C.}\binits{M.~C.}}
(\byear{2018}).
\btitle{Linking big models to big data: efficient ecosystem model calibration
  through Bayesian model emulation}.
\bjournal{Biogeosciences}
\bvolume{15}
\bpages{5801--5830}.
\bdoi{10.5194/bg-15-5801-2018}
\end{barticle}
\endbibitem

\bibitem[\protect\citeauthoryear{Frazier and Nott}{2025}]{generalizedCut}
\begin{barticle}[author]
\bauthor{\bsnm{Frazier},~\bfnm{David~T.}\binits{D.~T.}} \AND
  \bauthor{\bsnm{Nott},~\bfnm{David~J.}\binits{D.~J.}}
(\byear{2025}).
\btitle{Cutting Feedback and Modularized Analyses in Generalized Bayesian
  Inference}.
\bjournal{Bayesian Analysis}
\bvolume{20}
\bpages{1647--1675}.
\end{barticle}
\endbibitem

\bibitem[\protect\citeauthoryear{Frazier et~al.}{2023}]{FrazierSL}
\begin{barticle}[author]
\bauthor{\bsnm{Frazier},~\bfnm{David~T.}\binits{D.~T.}},
  \bauthor{\bsnm{Nott},~\bfnm{David~J.}\binits{D.~J.}},
  \bauthor{\bsnm{Drovandi},~\bfnm{Christopher}\binits{C.}} \AND
  \bauthor{\bsnm{Kohn},~\bfnm{Robert}\binits{R.}}
(\byear{2023}).
\btitle{Bayesian Inference Using Synthetic Likelihood: Asymptotics and
  Adjustments}.
\bjournal{Journal of the American Statistical Association}
\bvolume{118}
\bpages{2821--2832}.
\bdoi{10.1080/01621459.2022.2086132}
\end{barticle}
\endbibitem

\bibitem[\protect\citeauthoryear{Gabry et~al.}{2019}]{VisBayesianWorkflow}
\begin{barticle}[author]
\bauthor{\bsnm{Gabry},~\bfnm{Jonah}\binits{J.}},
  \bauthor{\bsnm{Simpson},~\bfnm{Daniel}\binits{D.}},
  \bauthor{\bsnm{Vehtari},~\bfnm{Aki}\binits{A.}},
  \bauthor{\bsnm{Betancourt},~\bfnm{Michael}\binits{M.}} \AND
  \bauthor{\bsnm{Gelman},~\bfnm{Andrew}\binits{A.}}
(\byear{2019}).
\btitle{Visualization in {B}ayesian workflow}.
\bjournal{Journal of the Royal Statistical Society Series A: Statistics in
  Society}
\bvolume{182}
\bpages{389--402}.
\bdoi{10.1111/rssa.12378}
\end{barticle}
\endbibitem

\bibitem[\protect\citeauthoryear{Gal and Ghahramani}{2016}]{BayesianDropout}
\begin{binproceedings}[author]
\bauthor{\bsnm{Gal},~\bfnm{Yarin}\binits{Y.}} \AND
  \bauthor{\bsnm{Ghahramani},~\bfnm{Zoubin}\binits{Z.}}
(\byear{2016}).
\btitle{Dropout as a Bayesian Approximation: Representing Model Uncertainty in
  Deep Learning}.
In \bbooktitle{Proceedings of The 33rd International Conference on Machine
  Learning}
(\beditor{\bfnm{Maria~Florina}\binits{M.~F.}~\bsnm{Balcan}} \AND
  \beditor{\bfnm{Kilian~Q.}\binits{K.~Q.}~\bsnm{Weinberger}}, eds.).
\bseries{Proceedings of Machine Learning Research}
\bvolume{48}
\bpages{1050--1059}.
\bpublisher{PMLR}, \baddress{New York, New York, USA}.
\end{binproceedings}
\endbibitem

\bibitem[\protect\citeauthoryear{Garegnani}{2021}]{garegnani2021NoisyMCMC}
\begin{bmisc}[author]
\bauthor{\bsnm{Garegnani},~\bfnm{Giacomo}\binits{G.}}
(\byear{2021}).
\btitle{Sampling Methods for Bayesian Inference Involving Convergent Noisy
  Approximations of Forward Maps}.
\bdoi{10.48550/arXiv.2111.03491}
\end{bmisc}
\endbibitem

\bibitem[\protect\citeauthoryear{Garud, Karimi and
  Kraft}{2017}]{initDesignReview}
\begin{barticle}[author]
\bauthor{\bsnm{Garud},~\bfnm{Sushant~S.}\binits{S.~S.}},
  \bauthor{\bsnm{Karimi},~\bfnm{Iftekhar~A.}\binits{I.~A.}} \AND
  \bauthor{\bsnm{Kraft},~\bfnm{Markus}\binits{M.}}
(\byear{2017}).
\btitle{Design of computer experiments: A review}.
\bjournal{Computers \& Chemical Engineering}
\bvolume{106}
\bpages{71-95}.
\bnote{ESCAPE-26}.
\bdoi{https://doi.org/10.1016/j.compchemeng.2017.05.010}
\end{barticle}
\endbibitem

\bibitem[\protect\citeauthoryear{Gelman et~al.}{2013}]{gelmanBDA}
\begin{bbook}[author]
\bauthor{\bsnm{Gelman},~\bfnm{Andrew}\binits{A.}},
  \bauthor{\bsnm{Carlin},~\bfnm{John~B}\binits{J.~B.}},
  \bauthor{\bsnm{Stern},~\bfnm{Hal~S}\binits{H.~S.}},
  \bauthor{\bsnm{Dunson},~\bfnm{David~B}\binits{D.~B.}},
  \bauthor{\bsnm{Vehtari},~\bfnm{Aki}\binits{A.}} \AND
  \bauthor{\bsnm{Rubin},~\bfnm{Donald~B}\binits{D.~B.}}
(\byear{2013}).
\btitle{Bayesian Data Analysis},
\bedition{3rd} ed.
\bpublisher{CRC Press}, \baddress{Boca Raton, FL}.
\end{bbook}
\endbibitem

\bibitem[\protect\citeauthoryear{Gelman et~al.}{2020}]{BayesianWorkflow}
\begin{bmisc}[author]
\bauthor{\bsnm{Gelman},~\bfnm{Andrew}\binits{A.}},
  \bauthor{\bsnm{Vehtari},~\bfnm{Aki}\binits{A.}},
  \bauthor{\bsnm{Simpson},~\bfnm{Daniel}\binits{D.}},
  \bauthor{\bsnm{Margossian},~\bfnm{Charles~C.}\binits{C.~C.}},
  \bauthor{\bsnm{Carpenter},~\bfnm{Bob}\binits{B.}},
  \bauthor{\bsnm{Yao},~\bfnm{Yuling}\binits{Y.}},
  \bauthor{\bsnm{Kennedy},~\bfnm{Lauren}\binits{L.}},
  \bauthor{\bsnm{Gabry},~\bfnm{Jonah}\binits{J.}},
  \bauthor{\bsnm{Bürkner},~\bfnm{Paul-Christian}\binits{P.-C.}} \AND
  \bauthor{\bsnm{Modrák},~\bfnm{Martin}\binits{M.}}
(\byear{2020}).
\btitle{Bayesian Workflow}.
\bdoi{10.48550/arXiv.2011.01808}
\end{bmisc}
\endbibitem

\bibitem[\protect\citeauthoryear{Ginsbourger, Le~Riche and
  Carraro}{2010}]{Ginsbourger2010}
\begin{binbook}[author]
\bauthor{\bsnm{Ginsbourger},~\bfnm{David}\binits{D.}},
  \bauthor{\bsnm{Le~Riche},~\bfnm{Rodolphe}\binits{R.}} \AND
  \bauthor{\bsnm{Carraro},~\bfnm{Laurent}\binits{L.}}
(\byear{2010}).
\btitle{Kriging Is Well-Suited to Parallelize Optimization}
In \bbooktitle{Computational Intelligence in Expensive Optimization Problems}
\bpages{131--162}.
\bpublisher{Springer Berlin Heidelberg}, \baddress{Berlin, Heidelberg}.
\bdoi{10.1007/978-3-642-10701-6_6}
\end{binbook}
\endbibitem

\bibitem[\protect\citeauthoryear{Gjini et~al.}{2025}]{EKIRace}
\begin{bmisc}[author]
\bauthor{\bsnm{Gjini},~\bfnm{Rebecca}\binits{R.}},
  \bauthor{\bsnm{Morzfeld},~\bfnm{Matthias}\binits{M.}},
  \bauthor{\bsnm{Dunbar},~\bfnm{Oliver R.~A.}\binits{O.~R.~A.}} \AND
  \bauthor{\bsnm{Schneider},~\bfnm{Tapio}\binits{T.}}
(\byear{2025}).
\btitle{The Ensemble Kalman Inversion Race}.
\end{bmisc}
\endbibitem

\bibitem[\protect\citeauthoryear{Gneiting and Raftery}{2007}]{scoringRules}
\begin{barticle}[author]
\bauthor{\bsnm{Gneiting},~\bfnm{Tilmann}\binits{T.}} \AND
  \bauthor{\bsnm{Raftery},~\bfnm{Adrian~E}\binits{A.~E.}}
(\byear{2007}).
\btitle{Strictly Proper Scoring Rules, Prediction, and Estimation}.
\bjournal{Journal of the American Statistical Association}
\bvolume{102}
\bpages{359--378}.
\bdoi{10.1198/016214506000001437}
\end{barticle}
\endbibitem

\bibitem[\protect\citeauthoryear{Gonzalez, Osborne and
  Lawrence}{2016}]{BONonMyopic}
\begin{binproceedings}[author]
\bauthor{\bsnm{Gonzalez},~\bfnm{Javier}\binits{J.}},
  \bauthor{\bsnm{Osborne},~\bfnm{Michael}\binits{M.}} \AND
  \bauthor{\bsnm{Lawrence},~\bfnm{Neil}\binits{N.}}
(\byear{2016}).
\btitle{GLASSES: Relieving The Myopia Of Bayesian Optimisation}.
In \bbooktitle{Proceedings of the 19th International Conference on Artificial
  Intelligence and Statistics}
(\beditor{\bfnm{Arthur}\binits{A.}~\bsnm{Gretton}} \AND
  \beditor{\bfnm{Christian~C.}\binits{C.~C.}~\bsnm{Robert}}, eds.).
\bseries{Proceedings of Machine Learning Research}
\bvolume{51}
\bpages{790--799}.
\bpublisher{PMLR}, \baddress{Cadiz, Spain}.
\end{binproceedings}
\endbibitem

\bibitem[\protect\citeauthoryear{Gorodetsky and
  Marzouk}{2016}]{Mercer_kernels_IVAR}
\begin{barticle}[author]
\bauthor{\bsnm{Gorodetsky},~\bfnm{Alex}\binits{A.}} \AND
  \bauthor{\bsnm{Marzouk},~\bfnm{Youssef}\binits{Y.}}
(\byear{2016}).
\btitle{Mercer Kernels and Integrated Variance Experimental Design: Connections
  Between Gaussian Process Regression and Polynomial Approximation}.
\bjournal{SIAM/ASA Journal on Uncertainty Quantification}
\bvolume{4}
\bpages{796-828}.
\bdoi{10.1137/15M1017119}
\end{barticle}
\endbibitem

\bibitem[\protect\citeauthoryear{Gramacy}{2020}]{gramacy2020surrogates}
\begin{bbook}[author]
\bauthor{\bsnm{Gramacy},~\bfnm{Robert~B.}\binits{R.~B.}}
(\byear{2020}).
\btitle{Surrogates: {G}aussian Process Modeling, Design and \ Optimization for
  the Applied Sciences}.
\bpublisher{Chapman Hall/CRC}, \baddress{Boca Raton, Florida}.
\bnote{\url{http://bobby.gramacy.com/surrogates/}}.
\end{bbook}
\endbibitem

\bibitem[\protect\citeauthoryear{Graves}{2011}]{VIforNNs}
\begin{binproceedings}[author]
\bauthor{\bsnm{Graves},~\bfnm{Alex}\binits{A.}}
(\byear{2011}).
\btitle{Practical Variational Inference for Neural Networks}.
In \bbooktitle{Advances in Neural Information Processing Systems}
(\beditor{\bfnm{J.}\binits{J.}~\bsnm{Shawe-Taylor}},
  \beditor{\bfnm{R.}\binits{R.}~\bsnm{Zemel}},
  \beditor{\bfnm{P.}\binits{P.}~\bsnm{Bartlett}},
  \beditor{\bfnm{F.}\binits{F.}~\bsnm{Pereira}} \AND
  \beditor{\bfnm{K.~Q.}\binits{K.~Q.}~\bsnm{Weinberger}}, eds.)
\bvolume{24}.
\bpublisher{Curran Associates, Inc.}
\end{binproceedings}
\endbibitem

\bibitem[\protect\citeauthoryear{Gutmann and Corander}{2016}]{BOforLFI}
\begin{barticle}[author]
\bauthor{\bsnm{Gutmann},~\bfnm{Michael~U.}\binits{M.~U.}} \AND
  \bauthor{\bsnm{Corander},~\bfnm{Jukka}\binits{J.}}
(\byear{2016}).
\btitle{Bayesian Optimization for Likelihood-Free Inference of Simulator-Based
  Statistical Models}.
\bjournal{Journal of Machine Learning Research}
\bvolume{17}
\bpages{1--47}.
\end{barticle}
\endbibitem

\bibitem[\protect\citeauthoryear{Hartig et~al.}{2011}]{HartigStochasticReview}
\begin{barticle}[author]
\bauthor{\bsnm{Hartig},~\bfnm{Florian}\binits{F.}},
  \bauthor{\bsnm{Calabrese},~\bfnm{Justin~M.}\binits{J.~M.}},
  \bauthor{\bsnm{Reineking},~\bfnm{Björn}\binits{B.}},
  \bauthor{\bsnm{Wiegand},~\bfnm{Thorsten}\binits{T.}} \AND
  \bauthor{\bsnm{Huth},~\bfnm{Andreas}\binits{A.}}
(\byear{2011}).
\btitle{Statistical inference for stochastic simulation models – theory and
  application}.
\bjournal{Ecology Letters}
\bvolume{14}
\bpages{816-827}.
\bdoi{https://doi.org/10.1111/j.1461-0248.2011.01640.x}
\end{barticle}
\endbibitem

\bibitem[\protect\citeauthoryear{He et~al.}{2025}]{DLUQ}
\begin{bmisc}[author]
\bauthor{\bsnm{He},~\bfnm{Wenchong}\binits{W.}},
  \bauthor{\bsnm{Jiang},~\bfnm{Zhe}\binits{Z.}},
  \bauthor{\bsnm{Xiao},~\bfnm{Tingsong}\binits{T.}},
  \bauthor{\bsnm{Xu},~\bfnm{Zelin}\binits{Z.}} \AND
  \bauthor{\bsnm{Li},~\bfnm{Yukun}\binits{Y.}}
(\byear{2025}).
\btitle{A Survey on Uncertainty Quantification Methods for Deep Learning}.
\bdoi{10.48550/arXiv.2302.13425}
\end{bmisc}
\endbibitem

\bibitem[\protect\citeauthoryear{Helin et~al.}{2024}]{StuartTeck2}
\begin{binproceedings}[author]
\bauthor{\bsnm{Helin},~\bfnm{Tapio}\binits{T.}},
  \bauthor{\bsnm{Stuart},~\bfnm{Andrew~M.}\binits{A.~M.}},
  \bauthor{\bsnm{Teckentrup},~\bfnm{Aretha~L.}\binits{A.~L.}} \AND
  \bauthor{\bsnm{Zygalakis},~\bfnm{Konstantinos~C.}\binits{K.~C.}}
(\byear{2024}).
\btitle{Introduction to Gaussian Process Regression in Bayesian Inverse
  Problems, with New Results on Experimental Design for Weighted Error
  Measures}.
In \bbooktitle{Monte Carlo and Quasi-Monte Carlo Methods}
(\beditor{\bfnm{Aicke}\binits{A.}~\bsnm{Hinrichs}},
  \beditor{\bfnm{Peter}\binits{P.}~\bsnm{Kritzer}} \AND
  \beditor{\bfnm{Friedrich}\binits{F.}~\bsnm{Pillichshammer}}, eds.)
\bpages{49--79}.
\bpublisher{Springer International Publishing}, \baddress{Cham}.
\bdoi{10.1007/978-3-031-59762-6_3}
\end{binproceedings}
\endbibitem

\bibitem[\protect\citeauthoryear{Higdon et~al.}{2008}]{HigdonBasis}
\begin{barticle}[author]
\bauthor{\bsnm{Higdon},~\bfnm{Dave}\binits{D.}},
  \bauthor{\bsnm{Gattiker},~\bfnm{James}\binits{J.}},
  \bauthor{\bsnm{Williams},~\bfnm{Brian}\binits{B.}} \AND
  \bauthor{\bsnm{Rightley},~\bfnm{Maria}\binits{M.}}
(\byear{2008}).
\btitle{Computer Model Calibration Using High-Dimensional Output}.
\bjournal{Journal of the American Statistical Association}
\bvolume{103}
\bpages{570-583}.
\bdoi{10.1198/016214507000000888}
\end{barticle}
\endbibitem

\bibitem[\protect\citeauthoryear{Higdon et~al.}{2015}]{Higdon_2015}
\begin{barticle}[author]
\bauthor{\bsnm{Higdon},~\bfnm{Dave}\binits{D.}},
  \bauthor{\bsnm{McDonnell},~\bfnm{Jordan~D}\binits{J.~D.}},
  \bauthor{\bsnm{Schunck},~\bfnm{Nicolas}\binits{N.}},
  \bauthor{\bsnm{Sarich},~\bfnm{Jason}\binits{J.}} \AND
  \bauthor{\bsnm{Wild},~\bfnm{Stefan~M}\binits{S.~M.}}
(\byear{2015}).
\btitle{A Bayesian approach for parameter estimation and prediction using a
  computationally intensive model}.
\bjournal{Journal of Physics G: Nuclear and Particle Physics}
\bvolume{42}
\bpages{034009}.
\bdoi{10.1088/0954-3899/42/3/034009}
\end{barticle}
\endbibitem

\bibitem[\protect\citeauthoryear{Huang et~al.}{2016}]{CLMBayesianCalibration}
\begin{barticle}[author]
\bauthor{\bsnm{Huang},~\bfnm{Maoyi}\binits{M.}},
  \bauthor{\bsnm{Ray},~\bfnm{Jaideep}\binits{J.}},
  \bauthor{\bsnm{Hou},~\bfnm{Zhangshuan}\binits{Z.}},
  \bauthor{\bsnm{Ren},~\bfnm{Huiying}\binits{H.}},
  \bauthor{\bsnm{Liu},~\bfnm{Ying}\binits{Y.}} \AND
  \bauthor{\bsnm{Swiler},~\bfnm{Laura}\binits{L.}}
(\byear{2016}).
\btitle{On the applicability of surrogate-based Markov chain Monte
  Carlo-Bayesian inversion to the Community Land Model: Case studies at flux
  tower sites}.
\bjournal{Journal of Geophysical Research: Atmospheres}
\bvolume{121}
\bpages{7548-7563}.
\bdoi{https://doi.org/10.1002/2015JD024339}
\end{barticle}
\endbibitem

\bibitem[\protect\citeauthoryear{Huang et~al.}{2020}]{onSiteCalibration}
\begin{barticle}[author]
\bauthor{\bsnm{Huang},~\bfnm{Jiangeng}\binits{J.}},
  \bauthor{\bsnm{Gramacy},~\bfnm{Robert~B.}\binits{R.~B.}},
  \bauthor{\bsnm{Binois},~\bfnm{Mickaël}\binits{M.}} \AND
  \bauthor{\bsnm{Libraschi},~\bfnm{Mirko}\binits{M.}}
(\byear{2020}).
\btitle{On-site surrogates for large-scale calibration}.
\bjournal{Applied Stochastic Models in Business and Industry}
\bvolume{36}
\bpages{283-304}.
\bdoi{https://doi.org/10.1002/asmb.2523}
\end{barticle}
\endbibitem

\bibitem[\protect\citeauthoryear{Hüllermeier and
  Waegeman}{2021}]{epistemicAleatoric}
\begin{barticle}[author]
\bauthor{\bsnm{Hüllermeier},~\bfnm{E.}\binits{E.}} \AND
  \bauthor{\bsnm{Waegeman},~\bfnm{W.}\binits{W.}}
(\byear{2021}).
\btitle{Aleatoric and epistemic uncertainty in machine learning: an
  introduction to concepts and methods}.
\bjournal{Mach Learn}
\bvolume{110}
\bpages{457-506}.
\bdoi{https://doi.org/10.1007/s10994-021-05946-3}
\end{barticle}
\endbibitem

\bibitem[\protect\citeauthoryear{Iglesias, Law and Stuart}{2013}]{IglesiasEKI}
\begin{barticle}[author]
\bauthor{\bsnm{Iglesias},~\bfnm{Marco~A}\binits{M.~A.}},
  \bauthor{\bsnm{Law},~\bfnm{Kody J~H}\binits{K.~J.~H.}} \AND
  \bauthor{\bsnm{Stuart},~\bfnm{Andrew~M}\binits{A.~M.}}
(\byear{2013}).
\btitle{Ensemble Kalman methods for inverse problems}.
\bjournal{Inverse Problems}
\bvolume{29}
\bpages{045001}.
\bdoi{10.1088/0266-5611/29/4/045001}
\end{barticle}
\endbibitem

\bibitem[\protect\citeauthoryear{Jacob et~al.}{2017}]{moduleModels}
\begin{bmisc}[author]
\bauthor{\bsnm{Jacob},~\bfnm{Pierre~E.}\binits{P.~E.}},
  \bauthor{\bsnm{Murray},~\bfnm{Lawrence~M.}\binits{L.~M.}},
  \bauthor{\bsnm{Holmes},~\bfnm{Chris~C.}\binits{C.~C.}} \AND
  \bauthor{\bsnm{Robert},~\bfnm{Christian~P.}\binits{C.~P.}}
(\byear{2017}).
\btitle{Better together? Statistical learning in models made of modules}.
\bdoi{10.48550/arXiv.1708.08719}
\end{bmisc}
\endbibitem

\bibitem[\protect\citeauthoryear{J{{\"a}}rvenp{{\"a}}{{\"a}} and
  Corander}{2024}]{gpEmMCMC}
\begin{barticle}[author]
\bauthor{\bsnm{J{{\"a}}rvenp{{\"a}}{{\"a}}},~\bfnm{Marko}\binits{M.}} \AND
  \bauthor{\bsnm{Corander},~\bfnm{Jukka}\binits{J.}}
(\byear{2024}).
\btitle{Approximate Bayesian inference from noisy likelihoods with Gaussian
  process emulated MCMC}.
\bjournal{Journal of Machine Learning Research}
\bvolume{25}
\bpages{1--55}.
\bdoi{10.48550/arXiv.2104.03942}
\end{barticle}
\endbibitem

\bibitem[\protect\citeauthoryear{J{\"a}rvenp{\"a}{\"a}
  et~al.}{2021}]{VehtariParallelGP}
\begin{barticle}[author]
\bauthor{\bsnm{J{\"a}rvenp{\"a}{\"a}},~\bfnm{Marko}\binits{M.}},
  \bauthor{\bsnm{Gutmann},~\bfnm{Michael~U.}\binits{M.~U.}},
  \bauthor{\bsnm{Vehtari},~\bfnm{Aki}\binits{A.}} \AND
  \bauthor{\bsnm{Marttinen},~\bfnm{Pekka}\binits{P.}}
(\byear{2021}).
\btitle{{Parallel Gaussian Process Surrogate Bayesian Inference with Noisy
  Likelihood Evaluations}}.
\bjournal{Bayesian Analysis}
\bvolume{16}
\bpages{147 -- 178}.
\bdoi{10.1214/20-BA1200}
\end{barticle}
\endbibitem

\bibitem[\protect\citeauthoryear{Jedhoff et~al.}{2026}]{BurknerTwoStep}
\begin{bmisc}[author]
\bauthor{\bsnm{Jedhoff},~\bfnm{Svenja}\binits{S.}},
  \bauthor{\bsnm{Kutabi},~\bfnm{Hadi}\binits{H.}},
  \bauthor{\bsnm{Meyer},~\bfnm{Anne}\binits{A.}} \AND
  \bauthor{\bsnm{Bürkner},~\bfnm{Paul-Christian}\binits{P.-C.}}
(\byear{2026}).
\btitle{Efficient Uncertainty Propagation in Bayesian Two-Step Procedures}.
\end{bmisc}
\endbibitem

\bibitem[\protect\citeauthoryear{Joseph et~al.}{2015}]{JosephMinEnergy}
\begin{barticle}[author]
\bauthor{\bsnm{Joseph},~\bfnm{V.~Roshan}\binits{V.~R.}},
  \bauthor{\bsnm{Dasgupta},~\bfnm{Tirthankar}\binits{T.}},
  \bauthor{\bsnm{Tuo},~\bfnm{Rui}\binits{R.}} \AND
  \bauthor{\bsnm{and},~\bfnm{C.~F. Jeff~Wu}\binits{C.~F. J.~W.}}
(\byear{2015}).
\btitle{Sequential Exploration of Complex Surfaces Using Minimum Energy
  Designs}.
\bjournal{Technometrics}
\bvolume{57}
\bpages{64--74}.
\bdoi{10.1080/00401706.2014.881749}
\end{barticle}
\endbibitem

\bibitem[\protect\citeauthoryear{Joseph et~al.}{2019}]{JosephMEDSampling}
\begin{barticle}[author]
\bauthor{\bsnm{Joseph},~\bfnm{V.~Roshan}\binits{V.~R.}},
  \bauthor{\bsnm{Wang},~\bfnm{Dianpeng}\binits{D.}},
  \bauthor{\bsnm{Gu},~\bfnm{Li}\binits{L.}},
  \bauthor{\bsnm{Lyu},~\bfnm{Shiji}\binits{S.}} \AND
  \bauthor{\bsnm{and},~\bfnm{Rui~Tuo}\binits{R.~T.}}
(\byear{2019}).
\btitle{Deterministic Sampling of Expensive Posteriors Using Minimum Energy
  Designs}.
\bjournal{Technometrics}
\bvolume{61}
\bpages{297--308}.
\bdoi{10.1080/00401706.2018.1552203}
\end{barticle}
\endbibitem

\bibitem[\protect\citeauthoryear{Järvenpää et~al.}{2019}]{EfficientAcqABC}
\begin{barticle}[author]
\bauthor{\bsnm{Järvenpää},~\bfnm{Marko}\binits{M.}},
  \bauthor{\bsnm{Gutmann},~\bfnm{Michael}\binits{M.}},
  \bauthor{\bsnm{Pleska},~\bfnm{Arijus}\binits{A.}},
  \bauthor{\bsnm{Vehtari},~\bfnm{Aki}\binits{A.}} \AND
  \bauthor{\bsnm{Marttinen},~\bfnm{Pekka}\binits{P.}}
(\byear{2019}).
\btitle{Efficient Acquisition Rules for Model-Based Approximate Bayesian
  Computation}.
\bjournal{Bayesian Analysis}
\bvolume{14}.
\bdoi{10.1214/18-BA1121}
\end{barticle}
\endbibitem

\bibitem[\protect\citeauthoryear{Kandasamy, Schneider and
  P\'{o}czos}{2015}]{KandasamyActiveLearning2015}
\begin{binproceedings}[author]
\bauthor{\bsnm{Kandasamy},~\bfnm{Kirthevasan}\binits{K.}},
  \bauthor{\bsnm{Schneider},~\bfnm{Jeff}\binits{J.}} \AND
  \bauthor{\bsnm{P\'{o}czos},~\bfnm{Barnab\'{a}s}\binits{B.}}
(\byear{2015}).
\btitle{Bayesian Active Learning for Posterior Estimation}.
In \bbooktitle{Proceedings of the 24th International Conference on Artificial
  Intelligence}.
\bseries{IJCAI'15}
\bpages{3605–3611}.
\bpublisher{AAAI Press}.
\end{binproceedings}
\endbibitem

\bibitem[\protect\citeauthoryear{Kandasamy, Schneider and
  Póczos}{2017}]{Kandasamy2017}
\begin{barticle}[author]
\bauthor{\bsnm{Kandasamy},~\bfnm{Kirthevasan}\binits{K.}},
  \bauthor{\bsnm{Schneider},~\bfnm{Jeff}\binits{J.}} \AND
  \bauthor{\bsnm{Póczos},~\bfnm{Barnabás}\binits{B.}}
(\byear{2017}).
\btitle{Query efficient posterior estimation in scientific experiments via
  Bayesian active learning}.
\bjournal{Artificial Intelligence}
\bvolume{243}
\bpages{45–56}.
\bdoi{10.1016/j.artint.2016.11.002}
\end{barticle}
\endbibitem

\bibitem[\protect\citeauthoryear{Kandasamy et~al.}{2018}]{parallelBOThompson}
\begin{binproceedings}[author]
\bauthor{\bsnm{Kandasamy},~\bfnm{Kirthevasan}\binits{K.}},
  \bauthor{\bsnm{Krishnamurthy},~\bfnm{Akshay}\binits{A.}},
  \bauthor{\bsnm{Schneider},~\bfnm{Jeff}\binits{J.}} \AND
  \bauthor{\bsnm{Poczos},~\bfnm{Barnabas}\binits{B.}}
(\byear{2018}).
\btitle{Parallelised Bayesian Optimisation via Thompson Sampling}.
In \bbooktitle{Proceedings of the Twenty-First International Conference on
  Artificial Intelligence and Statistics}
(\beditor{\bfnm{Amos}\binits{A.}~\bsnm{Storkey}} \AND
  \beditor{\bfnm{Fernando}\binits{F.}~\bsnm{Perez-Cruz}}, eds.).
\bseries{Proceedings of Machine Learning Research}
\bvolume{84}
\bpages{133--142}.
\bpublisher{PMLR}.
\end{binproceedings}
\endbibitem

\bibitem[\protect\citeauthoryear{Karabatsos and Leisen}{2018}]{ABCApproxLik}
\begin{barticle}[author]
\bauthor{\bsnm{Karabatsos},~\bfnm{George}\binits{G.}} \AND
  \bauthor{\bsnm{Leisen},~\bfnm{Fabrizio}\binits{F.}}
(\byear{2018}).
\btitle{{An approximate likelihood perspective on ABC methods}}.
\bjournal{Statistics Surveys}
\bvolume{12}
\bpages{66 -- 104}.
\bdoi{10.1214/18-SS120}
\end{barticle}
\endbibitem

\bibitem[\protect\citeauthoryear{Keetz et~al.}{2025}]{FATES_CES}
\begin{barticle}[author]
\bauthor{\bsnm{Keetz},~\bfnm{L.~T.}\binits{L.~T.}},
  \bauthor{\bsnm{Aalstad},~\bfnm{K.}\binits{K.}},
  \bauthor{\bsnm{Fisher},~\bfnm{R.~A.}\binits{R.~A.}},
  \bauthor{\bsnm{Poppe~Terán},~\bfnm{C.}\binits{C.}},
  \bauthor{\bsnm{Naz},~\bfnm{B.}\binits{B.}},
  \bauthor{\bsnm{Pirk},~\bfnm{N.}\binits{N.}},
  \bauthor{\bsnm{Yilmaz},~\bfnm{Y.~A.}\binits{Y.~A.}} \AND
  \bauthor{\bsnm{Skarpaas},~\bfnm{O.}\binits{O.}}
(\byear{2025}).
\btitle{Inferring Parameters in a Complex Land Surface Model by Combining Data
  Assimilation and Machine Learning}.
\bjournal{Journal of Advances in Modeling Earth Systems}
\bvolume{17}
\bpages{e2024MS004542}.
\bnote{e2024MS004542 2024MS004542}.
\bdoi{https://doi.org/10.1029/2024MS004542}
\end{barticle}
\endbibitem

\bibitem[\protect\citeauthoryear{Kennedy and O'Hagan}{2001}]{KOH}
\begin{barticle}[author]
\bauthor{\bsnm{Kennedy},~\bfnm{Marc~C.}\binits{M.~C.}} \AND
  \bauthor{\bsnm{O'Hagan},~\bfnm{Anthony}\binits{A.}}
(\byear{2001}).
\btitle{Bayesian calibration of computer models}.
\bjournal{Journal of the Royal Statistical Society: Series B (Statistical
  Methodology)}
\bvolume{63}
\bpages{425-464}.
\bdoi{https://doi.org/10.1111/1467-9868.00294}
\end{barticle}
\endbibitem

\bibitem[\protect\citeauthoryear{Kim and
  Sanz-Alonso}{2024}]{gp_surrogates_random_exploration}
\begin{bmisc}[author]
\bauthor{\bsnm{Kim},~\bfnm{Hwanwoo}\binits{H.}} \AND
  \bauthor{\bsnm{Sanz-Alonso},~\bfnm{Daniel}\binits{D.}}
(\byear{2024}).
\btitle{Enhancing Gaussian Process Surrogates for Optimization and Posterior
  Approximation via Random Exploration}.
\end{bmisc}
\endbibitem

\bibitem[\protect\citeauthoryear{Kochenderfer
  et~al.}{2015}]{DecisionMakingUncertainty}
\begin{bbook}[author]
\bauthor{\bsnm{Kochenderfer},~\bfnm{Mykel~J.}\binits{M.~J.}},
  \bauthor{\bsnm{Amato},~\bfnm{Christopher}\binits{C.}},
  \bauthor{\bsnm{Chowdhary},~\bfnm{Girish}\binits{G.}},
  \bauthor{\bsnm{How},~\bfnm{Jonathan~P.}\binits{J.~P.}},
  \bauthor{\bsnm{Reynolds},~\bfnm{Hayley J.~Davison}\binits{H.~J.~D.}},
  \bauthor{\bsnm{Thornton},~\bfnm{Jason~R.}\binits{J.~R.}},
  \bauthor{\bsnm{Torres-Carrasquillo},~\bfnm{Pedro~A.}\binits{P.~A.}},
  \bauthor{\bsnm{{\"U}re},~\bfnm{N.~Kemal}\binits{N.~K.}} \AND
  \bauthor{\bsnm{Vian},~\bfnm{John}\binits{J.}}
(\byear{2015}).
\btitle{Decision Making Under Uncertainty: Theory and Application}.
\bpublisher{The MIT Press}, \baddress{Cambridge, MA}.
\bdoi{10.7551/mitpress/10187.001.0001}
\end{bbook}
\endbibitem

\bibitem[\protect\citeauthoryear{Koermer et~al.}{2024}]{Koermer2024}
\begin{barticle}[author]
\bauthor{\bsnm{Koermer},~\bfnm{Scott}\binits{S.}},
  \bauthor{\bsnm{Loda},~\bfnm{Justin}\binits{J.}},
  \bauthor{\bsnm{Noble},~\bfnm{Aaron}\binits{A.}} \AND
  \bauthor{\bsnm{and},~\bfnm{Robert B.~Gramacy}\binits{R.~B.~G.}}
(\byear{2024}).
\btitle{Augmenting a Simulation Campaign for Hybrid Computer Model and Field
  Data Experiments}.
\bjournal{Technometrics}
\bvolume{66}
\bpages{638--650}.
\bdoi{10.1080/00401706.2024.2345139}
\end{barticle}
\endbibitem

\bibitem[\protect\citeauthoryear{Krouglova et~al.}{2025}]{multifidelitySBI}
\begin{bmisc}[author]
\bauthor{\bsnm{Krouglova},~\bfnm{Anastasia~N.}\binits{A.~N.}},
  \bauthor{\bsnm{Johnson},~\bfnm{Hayden~R.}\binits{H.~R.}},
  \bauthor{\bsnm{Confavreux},~\bfnm{Basile}\binits{B.}},
  \bauthor{\bsnm{Deistler},~\bfnm{Michael}\binits{M.}} \AND
  \bauthor{\bsnm{Gonçalves},~\bfnm{Pedro~J.}\binits{P.~J.}}
(\byear{2025}).
\btitle{Multifidelity Simulation-based Inference for Computationally Expensive
  Simulators}.
\end{bmisc}
\endbibitem

\bibitem[\protect\citeauthoryear{Lakshminarayanan, Pritzel and
  Blundell}{2017}]{deepEnsembles}
\begin{binproceedings}[author]
\bauthor{\bsnm{Lakshminarayanan},~\bfnm{Balaji}\binits{B.}},
  \bauthor{\bsnm{Pritzel},~\bfnm{Alexander}\binits{A.}} \AND
  \bauthor{\bsnm{Blundell},~\bfnm{Charles}\binits{C.}}
(\byear{2017}).
\btitle{Simple and Scalable Predictive Uncertainty Estimation using Deep
  Ensembles}.
In \bbooktitle{Advances in Neural Information Processing Systems}
(\beditor{\bfnm{I.}\binits{I.}~\bsnm{Guyon}},
  \beditor{\bfnm{U.~Von}\binits{U.~V.}~\bsnm{Luxburg}},
  \beditor{\bfnm{S.}\binits{S.}~\bsnm{Bengio}},
  \beditor{\bfnm{H.}\binits{H.}~\bsnm{Wallach}},
  \beditor{\bfnm{R.}\binits{R.}~\bsnm{Fergus}},
  \beditor{\bfnm{S.}\binits{S.}~\bsnm{Vishwanathan}} \AND
  \beditor{\bfnm{R.}\binits{R.}~\bsnm{Garnett}}, eds.)
\bvolume{30}.
\bpublisher{Curran Associates, Inc.}
\bdoi{10.48550/arXiv.1612.01474}
\end{binproceedings}
\endbibitem

\bibitem[\protect\citeauthoryear{Lartaud, Humbert and
  Garnier}{2024}]{weightedIVAR}
\begin{bmisc}[author]
\bauthor{\bsnm{Lartaud},~\bfnm{Paul}\binits{P.}},
  \bauthor{\bsnm{Humbert},~\bfnm{Philippe}\binits{P.}} \AND
  \bauthor{\bsnm{Garnier},~\bfnm{Josselin}\binits{J.}}
(\byear{2024}).
\btitle{Sequential design for surrogate modeling in Bayesian inverse problems}.
\bdoi{10.48550/arXiv.2402.16520}
\end{bmisc}
\endbibitem

\bibitem[\protect\citeauthoryear{Lebel et~al.}{2019}]{trainDynamics}
\begin{barticle}[author]
\bauthor{\bsnm{Lebel},~\bfnm{D.}\binits{D.}},
  \bauthor{\bsnm{Soize},~\bfnm{C.}\binits{C.}},
  \bauthor{\bsnm{Fünfschilling},~\bfnm{C.}\binits{C.}} \AND
  \bauthor{\bsnm{Perrin},~\bfnm{G.}\binits{G.}}
(\byear{2019}).
\btitle{Statistical inverse identification for nonlinear train dynamics using a
  surrogate model in a Bayesian framework}.
\bjournal{Journal of Sound and Vibration}
\bvolume{458}
\bpages{158-176}.
\bdoi{https://doi.org/10.1016/j.jsv.2019.06.024}
\end{barticle}
\endbibitem

\bibitem[\protect\citeauthoryear{Li and Marzouk}{2014}]{Li_2014}
\begin{barticle}[author]
\bauthor{\bsnm{Li},~\bfnm{Jinglai}\binits{J.}} \AND
  \bauthor{\bsnm{Marzouk},~\bfnm{Youssef~M.}\binits{Y.~M.}}
(\byear{2014}).
\btitle{Adaptive Construction of Surrogates for the Bayesian Solution of
  Inverse Problems}.
\bjournal{SIAM Journal on Scientific Computing}
\bvolume{36}
\bpages{A1163–A1186}.
\bdoi{10.1137/130938189}
\end{barticle}
\endbibitem

\bibitem[\protect\citeauthoryear{Li, Rudner and Wilson}{2024}]{BayesOptNN}
\begin{binproceedings}[author]
\bauthor{\bsnm{Li},~\bfnm{Yucen}\binits{Y.}}, \bauthor{\bsnm{Rudner},~\bfnm{Tim
  G.~J.}\binits{T.~G.~J.}} \AND
  \bauthor{\bsnm{Wilson},~\bfnm{Andrew~Gordon}\binits{A.~G.}}
(\byear{2024}).
\btitle{A Study of Bayesian Neural Network Surrogates for Bayesian
  Optimization}.
In \bbooktitle{International Conference on Learning Representations}
(\beditor{\bfnm{B.}\binits{B.}~\bsnm{Kim}},
  \beditor{\bfnm{Y.}\binits{Y.}~\bsnm{Yue}},
  \beditor{\bfnm{S.}\binits{S.}~\bsnm{Chaudhuri}},
  \beditor{\bfnm{K.}\binits{K.}~\bsnm{Fragkiadaki}},
  \beditor{\bfnm{M.}\binits{M.}~\bsnm{Khan}} \AND
  \beditor{\bfnm{Y.}\binits{Y.}~\bsnm{Sun}}, eds.)
\bvolume{2024}
\bpages{47003--47041}.
\bdoi{10.48550/arXiv.2305.20028}
\end{binproceedings}
\endbibitem

\bibitem[\protect\citeauthoryear{Li et~al.}{2025}]{AmortizedBayesianWorkflow}
\begin{bmisc}[author]
\bauthor{\bsnm{Li},~\bfnm{Chengkun}\binits{C.}},
  \bauthor{\bsnm{Vehtari},~\bfnm{Aki}\binits{A.}},
  \bauthor{\bsnm{Bürkner},~\bfnm{Paul-Christian}\binits{P.-C.}},
  \bauthor{\bsnm{Radev},~\bfnm{Stefan~T.}\binits{S.~T.}},
  \bauthor{\bsnm{Acerbi},~\bfnm{Luigi}\binits{L.}} \AND
  \bauthor{\bsnm{Schmitt},~\bfnm{Marvin}\binits{M.}}
(\byear{2025}).
\btitle{Amortized Bayesian Workflow}.
\bdoi{10.48550/arXiv.2409.04332}
\end{bmisc}
\endbibitem

\bibitem[\protect\citeauthoryear{Lie, Sullivan and
  Teckentrup}{2018}]{random_fwd_models}
\begin{barticle}[author]
\bauthor{\bsnm{Lie},~\bfnm{H.~C.}\binits{H.~C.}},
  \bauthor{\bsnm{Sullivan},~\bfnm{T.~J.}\binits{T.~J.}} \AND
  \bauthor{\bsnm{Teckentrup},~\bfnm{A.~L.}\binits{A.~L.}}
(\byear{2018}).
\btitle{Random Forward Models and Log-Likelihoods in Bayesian Inverse
  Problems}.
\bjournal{SIAM/ASA Journal on Uncertainty Quantification}
\bvolume{6}
\bpages{1600–1629}.
\bdoi{10.1137/18m1166523}
\end{barticle}
\endbibitem

\bibitem[\protect\citeauthoryear{Liu, Bayarri and
  Berger}{2009}]{modularization}
\begin{barticle}[author]
\bauthor{\bsnm{Liu},~\bfnm{F.}\binits{F.}},
  \bauthor{\bsnm{Bayarri},~\bfnm{M.}\binits{M.}} \AND
  \bauthor{\bsnm{Berger},~\bfnm{J.}\binits{J.}}
(\byear{2009}).
\btitle{Modularization in Bayesian analysis, with emphasis on analysis of
  computer models}.
\bjournal{Bayesian Analysis}
\bvolume{4}.
\bdoi{10.1214/09-BA404}
\end{barticle}
\endbibitem

\bibitem[\protect\citeauthoryear{Liu and
  West}{2009}]{Liu_West_dynamic_emulation}
\begin{barticle}[author]
\bauthor{\bsnm{Liu},~\bfnm{Fei}\binits{F.}} \AND
  \bauthor{\bsnm{West},~\bfnm{Mike}\binits{M.}}
(\byear{2009}).
\btitle{{A dynamic modelling strategy for Bayesian computer model emulation}}.
\bjournal{Bayesian Analysis}
\bvolume{4}
\bpages{393 -- 411}.
\bdoi{10.1214/09-BA415}
\end{barticle}
\endbibitem

\bibitem[\protect\citeauthoryear{Llorente et~al.}{2025}]{noisyMCSurvey}
\begin{barticle}[author]
\bauthor{\bsnm{Llorente},~\bfnm{Fernando}\binits{F.}},
  \bauthor{\bsnm{Martino},~\bfnm{Luca}\binits{L.}},
  \bauthor{\bsnm{Read},~\bfnm{Jesse}\binits{J.}} \AND
  \bauthor{\bsnm{Delgado-Gómez},~\bfnm{David}\binits{D.}}
(\byear{2025}).
\btitle{A Survey of Monte Carlo Methods for Noisy and Costly Densities With
  Application to Reinforcement Learning and ABC}.
\bjournal{International Statistical Review}
\bvolume{93}
\bpages{18-61}.
\bdoi{https://doi.org/10.1111/insr.12573}
\end{barticle}
\endbibitem

\bibitem[\protect\citeauthoryear{Loeppky, Moore and
  Williams}{2010}]{LOEPPKY20101452}
\begin{barticle}[author]
\bauthor{\bsnm{Loeppky},~\bfnm{Jason~L.}\binits{J.~L.}},
  \bauthor{\bsnm{Moore},~\bfnm{Leslie~M.}\binits{L.~M.}} \AND
  \bauthor{\bsnm{Williams},~\bfnm{Brian~J.}\binits{B.~J.}}
(\byear{2010}).
\btitle{Batch sequential designs for computer experiments}.
\bjournal{Journal of Statistical Planning and Inference}
\bvolume{140}
\bpages{1452-1464}.
\bdoi{https://doi.org/10.1016/j.jspi.2009.12.004}
\end{barticle}
\endbibitem

\bibitem[\protect\citeauthoryear{Lu et~al.}{2015}]{PCEBIP}
\begin{barticle}[author]
\bauthor{\bsnm{Lu},~\bfnm{Fei}\binits{F.}},
  \bauthor{\bsnm{Morzfeld},~\bfnm{Matthias}\binits{M.}},
  \bauthor{\bsnm{Tu},~\bfnm{Xuemin}\binits{X.}} \AND
  \bauthor{\bsnm{Chorin},~\bfnm{Alexandre~J.}\binits{A.~J.}}
(\byear{2015}).
\btitle{Limitations of polynomial chaos expansions in the Bayesian solution of
  inverse problems}.
\bjournal{Journal of Computational Physics}
\bvolume{282}
\bpages{138-147}.
\bdoi{https://doi.org/10.1016/j.jcp.2014.11.010}
\end{barticle}
\endbibitem

\bibitem[\protect\citeauthoryear{Lueckmann et~al.}{2019}]{Lueckmann2019}
\begin{binproceedings}[author]
\bauthor{\bsnm{Lueckmann},~\bfnm{Jan-Matthis}\binits{J.-M.}},
  \bauthor{\bsnm{Bassetto},~\bfnm{Giacomo}\binits{G.}},
  \bauthor{\bsnm{Karaletsos},~\bfnm{Theofanis}\binits{T.}} \AND
  \bauthor{\bsnm{Macke},~\bfnm{Jakob~H.}\binits{J.~H.}}
(\byear{2019}).
\btitle{Likelihood-free inference with emulator networks}.
In \bbooktitle{Proceedings of The 1st Symposium on Advances in Approximate
  Bayesian Inference}
(\beditor{\bfnm{Francisco}\binits{F.}~\bsnm{Ruiz}},
  \beditor{\bfnm{Cheng}\binits{C.}~\bsnm{Zhang}},
  \beditor{\bfnm{Dawen}\binits{D.}~\bsnm{Liang}} \AND
  \beditor{\bfnm{Thang}\binits{T.}~\bsnm{Bui}}, eds.).
\bseries{Proceedings of Machine Learning Research}
\bvolume{96}
\bpages{32--53}.
\bpublisher{PMLR}.
\bdoi{doi.org/10.48550/arXiv.1805.09294}
\end{binproceedings}
\endbibitem

\bibitem[\protect\citeauthoryear{MacKay}{1992}]{MacKayLaplaceNN}
\begin{barticle}[author]
\bauthor{\bsnm{MacKay},~\bfnm{David J.~C.}\binits{D.~J.~C.}}
(\byear{1992}).
\btitle{A Practical Bayesian Framework for Backpropagation Networks}.
\bjournal{Neural Computation}
\bvolume{4}
\bpages{448-472}.
\bdoi{10.1162/neco.1992.4.3.448}
\end{barticle}
\endbibitem

\bibitem[\protect\citeauthoryear{Maddox et~al.}{2021}]{BOHighDimOutputs}
\begin{binproceedings}[author]
\bauthor{\bsnm{Maddox},~\bfnm{Wesley~J}\binits{W.~J.}},
  \bauthor{\bsnm{Balandat},~\bfnm{Maximilian}\binits{M.}},
  \bauthor{\bsnm{Wilson},~\bfnm{Andrew~G}\binits{A.~G.}} \AND
  \bauthor{\bsnm{Bakshy},~\bfnm{Eytan}\binits{E.}}
(\byear{2021}).
\btitle{Bayesian Optimization with High-Dimensional Outputs}.
In \bbooktitle{Advances in Neural Information Processing Systems}
(\beditor{\bfnm{M.}\binits{M.}~\bsnm{Ranzato}},
  \beditor{\bfnm{A.}\binits{A.}~\bsnm{Beygelzimer}},
  \beditor{\bfnm{Y.}\binits{Y.}~\bsnm{Dauphin}},
  \beditor{\bfnm{P.~S.}\binits{P.~S.}~\bsnm{Liang}} \AND
  \beditor{\bfnm{J.~Wortman}\binits{J.~W.}~\bsnm{Vaughan}}, eds.)
\bvolume{34}
\bpages{19274--19287}.
\bpublisher{Curran Associates, Inc.}
\end{binproceedings}
\endbibitem

\bibitem[\protect\citeauthoryear{Marinescu et~al.}{2023}]{PCEGPWind}
\begin{barticle}[author]
\bauthor{\bsnm{Marinescu},~\bfnm{Marius}\binits{M.}},
  \bauthor{\bsnm{Olivares},~\bfnm{Alberto}\binits{A.}},
  \bauthor{\bsnm{Staffetti},~\bfnm{Ernesto}\binits{E.}} \AND
  \bauthor{\bsnm{Sun},~\bfnm{Junzi}\binits{J.}}
(\byear{2023}).
\btitle{Polynomial Chaos Expansion-Based Enhanced Gaussian Process Regression
  for Wind Velocity Field Estimation from Aircraft-Derived Data}.
\bjournal{Mathematics}
\bvolume{11}.
\bdoi{10.3390/math11041018}
\end{barticle}
\endbibitem

\bibitem[\protect\citeauthoryear{Marzouk and Najm}{2009}]{dimRedPolyChaos}
\begin{barticle}[author]
\bauthor{\bsnm{Marzouk},~\bfnm{Youssef~M.}\binits{Y.~M.}} \AND
  \bauthor{\bsnm{Najm},~\bfnm{Habib~N.}\binits{H.~N.}}
(\byear{2009}).
\btitle{Dimensionality reduction and polynomial chaos acceleration of Bayesian
  inference in inverse problems}.
\bjournal{Journal of Computational Physics}
\bvolume{228}
\bpages{1862-1902}.
\bdoi{https://doi.org/10.1016/j.jcp.2008.11.024}
\end{barticle}
\endbibitem

\bibitem[\protect\citeauthoryear{McClarren}{2018}]{UQpredCompSci}
\begin{bbook}[author]
\bauthor{\bsnm{McClarren},~\bfnm{Ryan~G.}\binits{R.~G.}}
(\byear{2018}).
\btitle{Uncertainty Quantification and Predictive Computational Science: A
  Foundation for Physical Scientists and Engineers}.
\bpublisher{Springer Cham}
\bnote{\url{https://doi.org/10.1007/978-3-319-99525-0}}.
\end{bbook}
\endbibitem

\bibitem[\protect\citeauthoryear{Medina-Aguayo, Lee and
  Roberts}{2016}]{stabilityNoisyMH}
\begin{barticle}[author]
\bauthor{\bsnm{Medina-Aguayo},~\bfnm{Felipe~J.}\binits{F.~J.}},
  \bauthor{\bsnm{Lee},~\bfnm{Anthony}\binits{A.}} \AND
  \bauthor{\bsnm{Roberts},~\bfnm{Gareth~O.}\binits{G.~O.}}
(\byear{2016}).
\btitle{Stability of noisy {M}etropolis--{H}astings}.
\bjournal{Statistics and Computing}
\bvolume{26}
\bpages{1187--1211}.
\bdoi{10.1007/s11222-015-9604-3}
\end{barticle}
\endbibitem

\bibitem[\protect\citeauthoryear{Meeds and Welling}{2014}]{GPSABC}
\begin{binproceedings}[author]
\bauthor{\bsnm{Meeds},~\bfnm{Edward}\binits{E.}} \AND
  \bauthor{\bsnm{Welling},~\bfnm{Max}\binits{M.}}
(\byear{2014}).
\btitle{GPS-ABC: Gaussian process surrogate approximate Bayesian computation}.
In \bbooktitle{Proceedings of the Thirtieth Conference on Uncertainty in
  Artificial Intelligence}.
\bseries{UAI'14}
\bpages{593–602}.
\bpublisher{AUAI Press}, \baddress{Arlington, Virginia, USA}.
\end{binproceedings}
\endbibitem

\bibitem[\protect\citeauthoryear{Mena, Pujol and Vitri\`{a}}{2021}]{BNNSurvey}
\begin{barticle}[author]
\bauthor{\bsnm{Mena},~\bfnm{Jos\'{e}}\binits{J.}},
  \bauthor{\bsnm{Pujol},~\bfnm{Oriol}\binits{O.}} \AND
  \bauthor{\bsnm{Vitri\`{a}},~\bfnm{Jordi}\binits{J.}}
(\byear{2021}).
\btitle{A Survey on Uncertainty Estimation in Deep Learning Classification
  Systems from a Bayesian Perspective}.
\bjournal{ACM Comput. Surv.}
\bvolume{54}.
\bdoi{10.1145/3477140}
\end{barticle}
\endbibitem

\bibitem[\protect\citeauthoryear{Miller and Mak}{2025}]{MakTargetedVar}
\begin{barticle}[author]
\bauthor{\bsnm{Miller},~\bfnm{John~J.}\binits{J.~J.}} \AND
  \bauthor{\bsnm{Mak},~\bfnm{Simon}\binits{S.}}
(\byear{2025}).
\btitle{Targeted Variance Reduction: Effective Bayesian Optimization of
  Black-Box Simulators with Noise Parameters}.
\bjournal{Technometrics}
\bvolume{67}
\bpages{617--631}.
\bdoi{10.1080/00401706.2025.2495298}
\end{barticle}
\endbibitem

\bibitem[\protect\citeauthoryear{Mohammadi, Challenor and
  Goodfellow}{2019}]{dynamic_nonlinear_simulators_GP}
\begin{barticle}[author]
\bauthor{\bsnm{Mohammadi},~\bfnm{Hossein}\binits{H.}},
  \bauthor{\bsnm{Challenor},~\bfnm{Peter}\binits{P.}} \AND
  \bauthor{\bsnm{Goodfellow},~\bfnm{Marc}\binits{M.}}
(\byear{2019}).
\btitle{Emulating dynamic non-linear simulators using Gaussian processes}.
\bjournal{Computational Statistics \& Data Analysis}
\bvolume{139}
\bpages{178-196}.
\bdoi{https://doi.org/10.1016/j.csda.2019.05.006}
\end{barticle}
\endbibitem

\bibitem[\protect\citeauthoryear{Naesseth, Lindsten and
  Sch\"{o}n}{2019}]{ElementsSMC}
\begin{barticle}[author]
\bauthor{\bsnm{Naesseth},~\bfnm{Christian~A.}\binits{C.~A.}},
  \bauthor{\bsnm{Lindsten},~\bfnm{Fredrik}\binits{F.}} \AND
  \bauthor{\bsnm{Sch\"{o}n},~\bfnm{Thomas~B.}\binits{T.~B.}}
(\byear{2019}).
\btitle{Elements of Sequential Monte Carlo}.
\bjournal{Found. Trends Mach. Learn.}
\bvolume{12}
\bpages{307–392}.
\bdoi{10.1561/2200000074}
\end{barticle}
\endbibitem

\bibitem[\protect\citeauthoryear{Oakley and Youngman}{2017}]{OakleyllikEm}
\begin{barticle}[author]
\bauthor{\bsnm{Oakley},~\bfnm{Jeremy~E.}\binits{J.~E.}} \AND
  \bauthor{\bsnm{Youngman},~\bfnm{Benjamin~D.}\binits{B.~D.}}
(\byear{2017}).
\btitle{Calibration of Stochastic Computer Simulators Using Likelihood
  Emulation}.
\bjournal{Technometrics}
\bvolume{59}
\bpages{80--92}.
\bdoi{10.1080/00401706.2015.1125391}
\end{barticle}
\endbibitem

\bibitem[\protect\citeauthoryear{Osband et~al.}{2023a}]{epistemicNN}
\begin{binproceedings}[author]
\bauthor{\bsnm{Osband},~\bfnm{Ian}\binits{I.}},
  \bauthor{\bsnm{Wen},~\bfnm{Zheng}\binits{Z.}},
  \bauthor{\bsnm{Asghari},~\bfnm{Seyed~Mohammad}\binits{S.~M.}},
  \bauthor{\bsnm{Dwaracherla},~\bfnm{Vikranth}\binits{V.}},
  \bauthor{\bsnm{IBRAHIMI},~\bfnm{MORTEZA}\binits{M.}},
  \bauthor{\bsnm{Lu},~\bfnm{Xiuyuan}\binits{X.}} \AND
  \bauthor{\bsnm{Van~Roy},~\bfnm{Benjamin}\binits{B.}}
(\byear{2023}a).
\btitle{Epistemic Neural Networks}.
In \bbooktitle{Advances in Neural Information Processing Systems}
(\beditor{\bfnm{A.}\binits{A.}~\bsnm{Oh}},
  \beditor{\bfnm{T.}\binits{T.}~\bsnm{Naumann}},
  \beditor{\bfnm{A.}\binits{A.}~\bsnm{Globerson}},
  \beditor{\bfnm{K.}\binits{K.}~\bsnm{Saenko}},
  \beditor{\bfnm{M.}\binits{M.}~\bsnm{Hardt}} \AND
  \beditor{\bfnm{S.}\binits{S.}~\bsnm{Levine}}, eds.)
\bvolume{36}
\bpages{2795--2823}.
\bpublisher{Curran Associates, Inc.}
\bdoi{10.48550/arXiv.2107.08924}
\end{binproceedings}
\endbibitem

\bibitem[\protect\citeauthoryear{Osband et~al.}{2023b}]{BayesOptEpistemicNN}
\begin{binproceedings}[author]
\bauthor{\bsnm{Osband},~\bfnm{Ian}\binits{I.}},
  \bauthor{\bsnm{Wen},~\bfnm{Zheng}\binits{Z.}},
  \bauthor{\bsnm{Asghari},~\bfnm{Seyed~Mohammad}\binits{S.~M.}},
  \bauthor{\bsnm{Dwaracherla},~\bfnm{Vikranth}\binits{V.}},
  \bauthor{\bsnm{Ibrahimi},~\bfnm{Morteza}\binits{M.}},
  \bauthor{\bsnm{Lu},~\bfnm{Xiuyuan}\binits{X.}} \AND
  \bauthor{\bsnm{Van~Roy},~\bfnm{Benjamin}\binits{B.}}
(\byear{2023}b).
\btitle{Approximate {T}hompson Sampling via Epistemic Neural Networks}.
In \bbooktitle{Proceedings of the Thirty-Ninth Conference on Uncertainty in
  Artificial Intelligence}
(\beditor{\bfnm{Robin~J.}\binits{R.~J.}~\bsnm{Evans}} \AND
  \beditor{\bfnm{Ilya}\binits{I.}~\bsnm{Shpitser}}, eds.).
\bseries{Proceedings of Machine Learning Research}
\bvolume{216}
\bpages{1586--1595}.
\bpublisher{PMLR}.
\bdoi{10.48550/arXiv.2302.09205}
\end{binproceedings}
\endbibitem

\bibitem[\protect\citeauthoryear{Owen}{2013}]{OwenMCBook}
\begin{bbook}[author]
\bauthor{\bsnm{Owen},~\bfnm{Art~B.}\binits{A.~B.}}
(\byear{2013}).
\btitle{Monte Carlo theory, methods and examples}.
\bpublisher{\url{https://artowen.su.domains/mc/}}.
\end{bbook}
\endbibitem

\bibitem[\protect\citeauthoryear{Pan et~al.}{2025}]{ESMReview}
\begin{barticle}[author]
\bauthor{\bsnm{Pan},~\bfnm{Xiaoduo}\binits{X.}},
  \bauthor{\bsnm{Chen},~\bfnm{Deliang}\binits{D.}},
  \bauthor{\bsnm{Pan},~\bfnm{Baoxiang}\binits{B.}},
  \bauthor{\bsnm{Huang},~\bfnm{Xiaozhong}\binits{X.}},
  \bauthor{\bsnm{Yang},~\bfnm{Kun}\binits{K.}},
  \bauthor{\bsnm{Piao},~\bfnm{Shilong}\binits{S.}},
  \bauthor{\bsnm{Zhou},~\bfnm{Tianjun}\binits{T.}},
  \bauthor{\bsnm{Dai},~\bfnm{Yongjiu}\binits{Y.}},
  \bauthor{\bsnm{Chen},~\bfnm{Fahu}\binits{F.}} \AND
  \bauthor{\bsnm{Li},~\bfnm{Xin}\binits{X.}}
(\byear{2025}).
\btitle{Evolution and prospects of Earth system models: Challenges and
  opportunities}.
\bjournal{Earth-Science Reviews}
\bvolume{260}
\bpages{104986}.
\bdoi{https://doi.org/10.1016/j.earscirev.2024.104986}
\end{barticle}
\endbibitem

\bibitem[\protect\citeauthoryear{Papamakarios and Murray}{2016}]{murrayNPE}
\begin{binproceedings}[author]
\bauthor{\bsnm{Papamakarios},~\bfnm{George}\binits{G.}} \AND
  \bauthor{\bsnm{Murray},~\bfnm{Iain}\binits{I.}}
(\byear{2016}).
\btitle{Fast epsilon-free Inference of Simulation Models with Bayesian
  Conditional Density Estimation}.
In \bbooktitle{Advances in Neural Information Processing Systems}
(\beditor{\bfnm{D.}\binits{D.}~\bsnm{Lee}},
  \beditor{\bfnm{M.}\binits{M.}~\bsnm{Sugiyama}},
  \beditor{\bfnm{U.}\binits{U.}~\bsnm{Luxburg}},
  \beditor{\bfnm{I.}\binits{I.}~\bsnm{Guyon}} \AND
  \beditor{\bfnm{R.}\binits{R.}~\bsnm{Garnett}}, eds.)
\bvolume{29}.
\bpublisher{Curran Associates, Inc.}
\bdoi{10.48550/arXiv.1605.06376}
\end{binproceedings}
\endbibitem

\bibitem[\protect\citeauthoryear{Papamakarios, Sterratt and
  Murray}{2019}]{SNLE}
\begin{binproceedings}[author]
\bauthor{\bsnm{Papamakarios},~\bfnm{George}\binits{G.}},
  \bauthor{\bsnm{Sterratt},~\bfnm{David}\binits{D.}} \AND
  \bauthor{\bsnm{Murray},~\bfnm{Iain}\binits{I.}}
(\byear{2019}).
\btitle{Sequential Neural Likelihood: Fast Likelihood-free Inference with
  Autoregressive Flows}.
In \bbooktitle{Proceedings of the Twenty-Second International Conference on
  Artificial Intelligence and Statistics}
(\beditor{\bfnm{Kamalika}\binits{K.}~\bsnm{Chaudhuri}} \AND
  \beditor{\bfnm{Masashi}\binits{M.}~\bsnm{Sugiyama}}, eds.).
\bseries{Proceedings of Machine Learning Research}
\bvolume{89}
\bpages{837--848}.
\bpublisher{PMLR}.
\end{binproceedings}
\endbibitem

\bibitem[\protect\citeauthoryear{Paun and Husmeier}{2021}]{MCMC_GP_proposal}
\begin{barticle}[author]
\bauthor{\bsnm{Paun},~\bfnm{L.~Mihaela}\binits{L.~M.}} \AND
  \bauthor{\bsnm{Husmeier},~\bfnm{Dirk}\binits{D.}}
(\byear{2021}).
\btitle{Markov chain Monte Carlo with Gaussian processes for fast parameter
  estimation and uncertainty quantification in a 1D fluid-dynamics model of the
  pulmonary circulation}.
\bjournal{International Journal for Numerical Methods in Biomedical
  Engineering}
\bvolume{37}
\bpages{e3421}.
\bdoi{https://doi.org/10.1002/cnm.3421}
\end{barticle}
\endbibitem

\bibitem[\protect\citeauthoryear{Picchini, Simola and
  Corander}{2023}]{PicchiniABCEKI}
\begin{barticle}[author]
\bauthor{\bsnm{Picchini},~\bfnm{Umberto}\binits{U.}},
  \bauthor{\bsnm{Simola},~\bfnm{Umberto}\binits{U.}} \AND
  \bauthor{\bsnm{Corander},~\bfnm{Jukka}\binits{J.}}
(\byear{2023}).
\btitle{{Sequentially Guided MCMC Proposals for Synthetic Likelihoods and
  Correlated Synthetic Likelihoods}}.
\bjournal{Bayesian Analysis}
\bvolume{18}
\bpages{1099 -- 1129}.
\bdoi{10.1214/22-BA1305}
\end{barticle}
\endbibitem

\bibitem[\protect\citeauthoryear{Plummer}{2015}]{PlummerCut}
\begin{barticle}[author]
\bauthor{\bsnm{Plummer},~\bfnm{Martyn}\binits{M.}}
(\byear{2015}).
\btitle{Cuts in Bayesian graphical models}.
\bvolume{25}
\bpages{37–43}.
\bdoi{10.1007/s11222-014-9503-z}
\end{barticle}
\endbibitem

\bibitem[\protect\citeauthoryear{Price et~al.}{2018}]{PriceSL}
\begin{barticle}[author]
\bauthor{\bsnm{Price},~\bfnm{L.~F.}\binits{L.~F.}},
  \bauthor{\bsnm{Drovandi},~\bfnm{C.~C.}\binits{C.~C.}},
  \bauthor{\bsnm{Lee},~\bfnm{A.}\binits{A.}} \AND
  \bauthor{\bsnm{Nott},~\bfnm{D.~J.}\binits{D.~J.}}
(\byear{2018}).
\btitle{Bayesian Synthetic Likelihood}.
\bjournal{Journal of Computational and Graphical Statistics}
\bvolume{27}
\bpages{1--11}.
\bdoi{10.1080/10618600.2017.1302882}
\end{barticle}
\endbibitem

\bibitem[\protect\citeauthoryear{Pritam~Ranjan and
  Michailidis}{2008}]{contourEstimation}
\begin{barticle}[author]
\bauthor{\bsnm{Pritam~Ranjan},~\bfnm{Derek~Bingham}\binits{D.~B.}} \AND
  \bauthor{\bsnm{Michailidis},~\bfnm{George}\binits{G.}}
(\byear{2008}).
\btitle{Sequential Experiment Design for Contour Estimation From Complex
  Computer Codes}.
\bjournal{Technometrics}
\bvolume{50}
\bpages{527--541}.
\bdoi{10.1198/004017008000000541}
\end{barticle}
\endbibitem

\bibitem[\protect\citeauthoryear{Radev et~al.}{2020}]{bayesflow_2020_original}
\begin{barticle}[author]
\bauthor{\bsnm{Radev},~\bfnm{Stefan~T.}\binits{S.~T.}},
  \bauthor{\bsnm{Mertens},~\bfnm{Ulf~K.}\binits{U.~K.}},
  \bauthor{\bsnm{Voss},~\bfnm{Andreas}\binits{A.}},
  \bauthor{\bsnm{Ardizzone},~\bfnm{Lynton}\binits{L.}} \AND
  \bauthor{\bsnm{K{\"o}the},~\bfnm{Ullrich}\binits{U.}}
(\byear{2020}).
\btitle{{BayesFlow}: Learning complex stochastic models with invertible neural
  networks}.
\bjournal{IEEE transactions on neural networks and learning systems}
\bvolume{33}
\bpages{1452--1466}.
\end{barticle}
\endbibitem

\bibitem[\protect\citeauthoryear{Ranjan et~al.}{2016}]{scalarization}
\begin{bmisc}[author]
\bauthor{\bsnm{Ranjan},~\bfnm{Pritam}\binits{P.}},
  \bauthor{\bsnm{Thomas},~\bfnm{Mark}\binits{M.}},
  \bauthor{\bsnm{Teismann},~\bfnm{Holger}\binits{H.}} \AND
  \bauthor{\bsnm{Mukhoti},~\bfnm{Sujay}\binits{S.}}
(\byear{2016}).
\btitle{Inverse problem for time-series valued computer model via
  scalarization}.
\bdoi{10.48550/arXiv.1605.09503}
\end{bmisc}
\endbibitem

\bibitem[\protect\citeauthoryear{Raoult et~al.}{2025}]{paramLSM}
\begin{barticle}[author]
\bauthor{\bsnm{Raoult},~\bfnm{Nina}\binits{N.}},
  \bauthor{\bsnm{Douglas},~\bfnm{Natalie}\binits{N.}},
  \bauthor{\bsnm{MacBean},~\bfnm{Natasha}\binits{N.}},
  \bauthor{\bsnm{Kolassa},~\bfnm{Jana}\binits{J.}},
  \bauthor{\bsnm{Quaife},~\bfnm{Tristan}\binits{T.}},
  \bauthor{\bsnm{Roberts},~\bfnm{Andrew~G.}\binits{A.~G.}},
  \bauthor{\bsnm{Fisher},~\bfnm{Rosie}\binits{R.}},
  \bauthor{\bsnm{Fer},~\bfnm{Istem}\binits{I.}},
  \bauthor{\bsnm{Bacour},~\bfnm{Cédric}\binits{C.}},
  \bauthor{\bsnm{Dagon},~\bfnm{Katherine}\binits{K.}},
  \bauthor{\bsnm{Hawkins},~\bfnm{Linnia}\binits{L.}},
  \bauthor{\bsnm{Carvalhais},~\bfnm{Nuno}\binits{N.}},
  \bauthor{\bsnm{Cooper},~\bfnm{Elizabeth}\binits{E.}},
  \bauthor{\bsnm{Dietze},~\bfnm{Michael~C.}\binits{M.~C.}},
  \bauthor{\bsnm{Gentine},~\bfnm{Pierre}\binits{P.}},
  \bauthor{\bsnm{Kaminski},~\bfnm{Thomas}\binits{T.}},
  \bauthor{\bsnm{Kennedy},~\bfnm{Daniel}\binits{D.}},
  \bauthor{\bsnm{Liddy},~\bfnm{Hannah~M.}\binits{H.~M.}},
  \bauthor{\bsnm{Moore},~\bfnm{David J.~P.}\binits{D.~J.~P.}},
  \bauthor{\bsnm{Peylin},~\bfnm{Philippe}\binits{P.}},
  \bauthor{\bsnm{Pinnington},~\bfnm{Ewan}\binits{E.}},
  \bauthor{\bsnm{Sanderson},~\bfnm{Benjamin}\binits{B.}},
  \bauthor{\bsnm{Scholze},~\bfnm{Marko}\binits{M.}},
  \bauthor{\bsnm{Seiler},~\bfnm{Christian}\binits{C.}},
  \bauthor{\bsnm{Smallman},~\bfnm{T.~Luke}\binits{T.~L.}},
  \bauthor{\bsnm{Vergopolan},~\bfnm{Noemi}\binits{N.}},
  \bauthor{\bsnm{Viskari},~\bfnm{Toni}\binits{T.}},
  \bauthor{\bsnm{Williams},~\bfnm{Mathew}\binits{M.}} \AND
  \bauthor{\bsnm{Zobitz},~\bfnm{John}\binits{J.}}
(\byear{2025}).
\btitle{Parameter Estimation in Land Surface Models: Challenges and
  Opportunities With Data Assimilation and Machine Learning}.
\bjournal{Journal of Advances in Modeling Earth Systems}
\bvolume{17}
\bpages{e2024MS004733}.
\bnote{e2024MS004733 2024MS004733}.
\bdoi{https://doi.org/10.1029/2024MS004733}
\end{barticle}
\endbibitem

\bibitem[\protect\citeauthoryear{Rasmussen}{2003}]{RasmussenGPHMC}
\begin{bincollection}[author]
\bauthor{\bsnm{Rasmussen},~\bfnm{Carl~Edward}\binits{C.~E.}}
(\byear{2003}).
\btitle{Gaussian Processes to Speed up Hybrid {M}onte {C}arlo for Expensive
  {B}ayesian Integrals}.
In \bbooktitle{Bayesian Statistics 7: Proceedings of the Seventh Valencia
  International Meeting}
(\beditor{\bfnm{J.~M.}\binits{J.~M.}~\bsnm{Bernardo}},
  \beditor{\bfnm{M.~J.}\binits{M.~J.}~\bsnm{Bayarri}},
  \beditor{\bfnm{J.~O.}\binits{J.~O.}~\bsnm{Berger}},
  \beditor{\bfnm{A.~P.}\binits{A.~P.}~\bsnm{Dawid}},
  \beditor{\bfnm{D.}\binits{D.}~\bsnm{Heckerman}},
  \beditor{\bfnm{A.~F.~M.}\binits{A.~F.~M.}~\bsnm{Smith}} \AND
  \beditor{\bfnm{M.}\binits{M.}~\bsnm{West}}, eds.)
\bpages{651--659}.
\bpublisher{Oxford University Press}, \baddress{Oxford, UK}.
\bdoi{10.1093/oso/9780198526155.003.0045}
\end{bincollection}
\endbibitem

\bibitem[\protect\citeauthoryear{Rasmussen}{2004}]{gpML}
\begin{binbook}[author]
\bauthor{\bsnm{Rasmussen},~\bfnm{Carl~Edward}\binits{C.~E.}}
(\byear{2004}).
\btitle{Gaussian Processes in Machine Learning}
In \bbooktitle{Advanced Lectures on Machine Learning: ML Summer Schools 2003,
  Canberra, Australia, February 2 - 14, 2003, T{\"u}bingen, Germany, August 4 -
  16, 2003, Revised Lectures}
\bpages{63--71}.
\bpublisher{Springer Berlin Heidelberg}, \baddress{Berlin, Heidelberg}.
\bdoi{10.1007/978-3-540-28650-9_4}
\end{binbook}
\endbibitem

\bibitem[\protect\citeauthoryear{Ray et~al.}{2015}]{CLMSurrogates}
\begin{barticle}[author]
\bauthor{\bsnm{Ray},~\bfnm{J.}\binits{J.}},
  \bauthor{\bsnm{Hou},~\bfnm{Z.}\binits{Z.}},
  \bauthor{\bsnm{Huang},~\bfnm{M.}\binits{M.}},
  \bauthor{\bsnm{Sargsyan},~\bfnm{K.}\binits{K.}} \AND
  \bauthor{\bsnm{Swiler},~\bfnm{L.}\binits{L.}}
(\byear{2015}).
\btitle{Bayesian Calibration of the Community Land Model Using Surrogates}.
\bjournal{SIAM/ASA Journal on Uncertainty Quantification}
\bvolume{3}
\bpages{199-233}.
\bdoi{10.1137/140957998}
\end{barticle}
\endbibitem

\bibitem[\protect\citeauthoryear{Reiser et~al.}{2025}]{BurknerSurrogate}
\begin{barticle}[author]
\bauthor{\bsnm{Reiser},~\bfnm{Philipp}\binits{P.}},
  \bauthor{\bsnm{Aguilar},~\bfnm{Javier~Enrique}\binits{J.~E.}},
  \bauthor{\bsnm{Guthke},~\bfnm{Anneli}\binits{A.}} \AND
  \bauthor{\bsnm{Bürkner},~\bfnm{Paul-Christian}\binits{P.-C.}}
(\byear{2025}).
\btitle{Uncertainty quantification and propagation in surrogate-based Bayesian
  inference}.
\bjournal{Statistics and Computing}
\bvolume{35}.
\bdoi{10.1007/s11222-025-10597-8}
\end{barticle}
\endbibitem

\bibitem[\protect\citeauthoryear{Riccius et~al.}{2026}]{ActiveLearningMCMC}
\begin{barticle}[author]
\bauthor{\bsnm{Riccius},~\bfnm{Leon}\binits{L.}},
  \bauthor{\bsnm{Rocha},~\bfnm{Iuri B. C.~M.}\binits{I.~B. C.~M.}},
  \bauthor{\bsnm{Bierkens},~\bfnm{Joris}\binits{J.}},
  \bauthor{\bsnm{Kekkonen},~\bfnm{Hanne}\binits{H.}} \AND \bauthor{\bsnm{{van
  der Meer}},~\bfnm{Frans~P.}\binits{F.~P.}}
(\byear{2026}).
\btitle{Integration of active learning and MCMC sampling for efficient Bayesian
  calibration of mechanical properties}.
\bjournal{European Journal of Mechanics - A/Solids}
\bvolume{117}
\bpages{106015}.
\bdoi{https://doi.org/10.1016/j.euromechsol.2026.106015}
\end{barticle}
\endbibitem

\bibitem[\protect\citeauthoryear{Riis et~al.}{2022}]{fullyBayesianGPs}
\begin{binproceedings}[author]
\bauthor{\bsnm{Riis},~\bfnm{Christoffer}\binits{C.}},
  \bauthor{\bsnm{Antunes},~\bfnm{Francisco}\binits{F.}},
  \bauthor{\bsnm{H\"{u}ttel},~\bfnm{Frederik~Boe}\binits{F.~B.}},
  \bauthor{\bsnm{Azevedo},~\bfnm{Carlos~Lima}\binits{C.~L.}} \AND
  \bauthor{\bsnm{Pereira},~\bfnm{Francisco~C\^{a}mara}\binits{F.~C.}}
(\byear{2022}).
\btitle{Bayesian active learning with fully Bayesian Gaussian processes}.
In \bbooktitle{Advances in Neural Information Processing Systems}.
\bseries{NeurIPS '22}
\bvolume{35}.
\bpublisher{Curran Associates Inc.}, \baddress{Red Hook, NY, USA}.
\end{binproceedings}
\endbibitem

\bibitem[\protect\citeauthoryear{Roberts, Dietze and
  Huggins}{2026}]{RobertsUncProp}
\begin{bmisc}[author]
\bauthor{\bsnm{Roberts},~\bfnm{Andrew~Gerard}\binits{A.~G.}},
  \bauthor{\bsnm{Dietze},~\bfnm{Michael}\binits{M.}} \AND
  \bauthor{\bsnm{Huggins},~\bfnm{Jonathan~H.}\binits{J.~H.}}
(\byear{2026}).
\btitle{Propagating Surrogate Uncertainty in Bayesian Inverse Problems}.
\bdoi{10.48550/arXiv.2601.03532}
\end{bmisc}
\endbibitem

\bibitem[\protect\citeauthoryear{Sacks
  et~al.}{1989}]{design_analysis_computer_experiments}
\begin{barticle}[author]
\bauthor{\bsnm{Sacks},~\bfnm{Jerome}\binits{J.}},
  \bauthor{\bsnm{Welch},~\bfnm{William~J.}\binits{W.~J.}},
  \bauthor{\bsnm{Mitchell},~\bfnm{Toby~J.}\binits{T.~J.}} \AND
  \bauthor{\bsnm{Wynn},~\bfnm{Henry~P.}\binits{H.~P.}}
(\byear{1989}).
\btitle{{Design and Analysis of Computer Experiments}}.
\bjournal{Statistical Science}
\bvolume{4}
\bpages{409 -- 423}.
\bdoi{10.1214/ss/1177012413}
\end{barticle}
\endbibitem

\bibitem[\protect\citeauthoryear{Satner, Williams and
  Notz}{2018}]{SanterCompExp}
\begin{barticle}[author]
\bauthor{\bsnm{Satner},~\bfnm{Thomas~J.}\binits{T.~J.}},
  \bauthor{\bsnm{Williams},~\bfnm{Brian~J.}\binits{B.~J.}} \AND
  \bauthor{\bsnm{Notz},~\bfnm{William~I.}\binits{W.~I.}}
(\byear{2018}).
\btitle{{The Design and Analysis of Computer Experiments}}.
\bnote{\url{https://doi.org/10.1007/978-1-4939-8847-1}}.
\end{barticle}
\endbibitem

\bibitem[\protect\citeauthoryear{Sauer, Cooper and
  Gramacy}{2023}]{deepGPVecchia}
\begin{barticle}[author]
\bauthor{\bsnm{Sauer},~\bfnm{Annie}\binits{A.}},
  \bauthor{\bsnm{Cooper},~\bfnm{Andrew}\binits{A.}} \AND
  \bauthor{\bsnm{Gramacy},~\bfnm{Robert~B.}\binits{R.~B.}}
(\byear{2023}).
\btitle{Vecchia-Approximated Deep Gaussian Processes for Computer Experiments}.
\bjournal{Journal of Computational and Graphical Statistics}
\bvolume{32}
\bpages{824--837}.
\bdoi{10.1080/10618600.2022.2129662}
\end{barticle}
\endbibitem

\bibitem[\protect\citeauthoryear{Sauer, Gramacy and Higdon}{2023}]{deepGPAL}
\begin{barticle}[author]
\bauthor{\bsnm{Sauer},~\bfnm{Annie}\binits{A.}},
  \bauthor{\bsnm{Gramacy},~\bfnm{Robert~B.}\binits{R.~B.}} \AND
  \bauthor{\bsnm{Higdon},~\bfnm{David}\binits{D.}}
(\byear{2023}).
\btitle{Active Learning for Deep Gaussian Process Surrogates}.
\bjournal{Technometrics}
\bvolume{65}
\bpages{4--18}.
\bdoi{10.1080/00401706.2021.2008505}
\end{barticle}
\endbibitem

\bibitem[\protect\citeauthoryear{Scheurer
  et~al.}{2026}]{BurknerAmortizedSurrogate}
\begin{barticle}[author]
\bauthor{\bsnm{Scheurer},~\bfnm{Stefania}\binits{S.}},
  \bauthor{\bsnm{Reiser},~\bfnm{Philipp}\binits{P.}},
  \bauthor{\bsnm{Br{\"u}nnette},~\bfnm{Tim}\binits{T.}},
  \bauthor{\bsnm{Nowak},~\bfnm{Wolfgang}\binits{W.}},
  \bauthor{\bsnm{Guthke},~\bfnm{Anneli}\binits{A.}} \AND
  \bauthor{\bsnm{B{\"u}rkner},~\bfnm{Paul-Christian}\binits{P.-C.}}
(\byear{2026}).
\btitle{Uncertainty-Aware Surrogate-based Amortized Bayesian Inference for
  Computationally Expensive Models}.
\bjournal{Transactions on Machine Learning Research}.
\bdoi{10.48550/arXiv.2505.0868}
\end{barticle}
\endbibitem

\bibitem[\protect\citeauthoryear{Schneider et~al.}{2017}]{ESM_modeling_2pt0}
\begin{barticle}[author]
\bauthor{\bsnm{Schneider},~\bfnm{Tapio}\binits{T.}},
  \bauthor{\bsnm{Lan},~\bfnm{Shiwei}\binits{S.}},
  \bauthor{\bsnm{Stuart},~\bfnm{Andrew}\binits{A.}} \AND
  \bauthor{\bsnm{Teixeira},~\bfnm{João}\binits{J.}}
(\byear{2017}).
\btitle{Earth System Modeling 2.0: A Blueprint for Models That Learn From
  Observations and Targeted High-Resolution Simulations}.
\bjournal{Geophysical Research Letters}
\bvolume{44}
\bpages{12,396-12,417}.
\bdoi{https://doi.org/10.1002/2017GL076101}
\end{barticle}
\endbibitem

\bibitem[\protect\citeauthoryear{Shahriari et~al.}{2016}]{reviewBayesOpt}
\begin{barticle}[author]
\bauthor{\bsnm{Shahriari},~\bfnm{Bobak}\binits{B.}},
  \bauthor{\bsnm{Swersky},~\bfnm{Kevin}\binits{K.}},
  \bauthor{\bsnm{Wang},~\bfnm{Ziyu}\binits{Z.}},
  \bauthor{\bsnm{Adams},~\bfnm{Ryan~P.}\binits{R.~P.}} \AND
  \bauthor{\bparticle{de} \bsnm{Freitas},~\bfnm{Nando}\binits{N.}}
(\byear{2016}).
\btitle{Taking the Human Out of the Loop: A Review of Bayesian Optimization}.
\bjournal{Proceedings of the IEEE}
\bvolume{104}
\bpages{148-175}.
\bdoi{10.1109/JPROC.2015.2494218}
\end{barticle}
\endbibitem

\bibitem[\protect\citeauthoryear{Sharma et~al.}{2023}]{partialBNN}
\begin{binproceedings}[author]
\bauthor{\bsnm{Sharma},~\bfnm{Mrinank}\binits{M.}},
  \bauthor{\bsnm{Farquhar},~\bfnm{Sebastian}\binits{S.}},
  \bauthor{\bsnm{Nalisnick},~\bfnm{Eric}\binits{E.}} \AND
  \bauthor{\bsnm{Rainforth},~\bfnm{Tom}\binits{T.}}
(\byear{2023}).
\btitle{Do Bayesian Neural Networks Need To Be Fully Stochastic?}
In \bbooktitle{Proceedings of The 26th International Conference on Artificial
  Intelligence and Statistics}
(\beditor{\bfnm{Francisco}\binits{F.}~\bsnm{Ruiz}},
  \beditor{\bfnm{Jennifer}\binits{J.}~\bsnm{Dy}} \AND
  \beditor{\bfnm{Jan-Willem}\binits{J.-W.}~\bparticle{van~de} \bsnm{Meent}},
  eds.).
\bseries{Proceedings of Machine Learning Research}
\bvolume{206}
\bpages{7694--7722}.
\bpublisher{PMLR}.
\end{binproceedings}
\endbibitem

\bibitem[\protect\citeauthoryear{Sinsbeck and Nowak}{2017}]{SinsbeckNowak}
\begin{barticle}[author]
\bauthor{\bsnm{Sinsbeck},~\bfnm{Michael}\binits{M.}} \AND
  \bauthor{\bsnm{Nowak},~\bfnm{Wolfgang}\binits{W.}}
(\byear{2017}).
\btitle{Sequential Design of Computer Experiments for the Solution of Bayesian
  Inverse Problems}.
\bjournal{SIAM/ASA Journal on Uncertainty Quantification}
\bvolume{5}
\bpages{640-664}.
\bdoi{10.1137/15M1047659}
\end{barticle}
\endbibitem

\bibitem[\protect\citeauthoryear{Stuart}{2010}]{Stuart_BIP}
\begin{barticle}[author]
\bauthor{\bsnm{Stuart},~\bfnm{A.~M.}\binits{A.~M.}}
(\byear{2010}).
\btitle{Inverse problems: A Bayesian perspective}.
\bjournal{Acta Numerica}
\bvolume{19}
\bpages{451-559}.
\bdoi{10.1017/S0962492910000061}
\end{barticle}
\endbibitem

\bibitem[\protect\citeauthoryear{Stuart and Teckentrup}{2018}]{StuartTeck1}
\begin{barticle}[author]
\bauthor{\bsnm{Stuart},~\bfnm{Andrew~M.}\binits{A.~M.}} \AND
  \bauthor{\bsnm{Teckentrup},~\bfnm{Aretha~L.}\binits{A.~L.}}
(\byear{2018}).
\btitle{Posterior consistency for Gaussian process approximations of Bayesian
  posterior distributions}.
\bjournal{Mathematics of Computation}
\bvolume{87}
\bpages{721--753}.
\bdoi{10.1090/mcom/3244}
\end{barticle}
\endbibitem

\bibitem[\protect\citeauthoryear{Sung and
  Tuo}{2024}]{computerModelCalibrationReview}
\begin{barticle}[author]
\bauthor{\bsnm{Sung},~\bfnm{Chih-Li}\binits{C.-L.}} \AND
  \bauthor{\bsnm{Tuo},~\bfnm{Rui}\binits{R.}}
(\byear{2024}).
\btitle{A review on computer model calibration}.
\bjournal{WIREs Computational Statistics}
\bvolume{16}
\bpages{e1645}.
\bdoi{https://doi.org/10.1002/wics.1645}
\end{barticle}
\endbibitem

\bibitem[\protect\citeauthoryear{Sunn{\aa}ker et~al.}{2013}]{ABCPrimer}
\begin{barticle}[author]
\bauthor{\bsnm{Sunn{\aa}ker},~\bfnm{Mikael}\binits{M.}},
  \bauthor{\bsnm{Busetto},~\bfnm{Alberto~Giovanni}\binits{A.~G.}},
  \bauthor{\bsnm{Numminen},~\bfnm{Elina}\binits{E.}},
  \bauthor{\bsnm{Corander},~\bfnm{Jukka}\binits{J.}},
  \bauthor{\bsnm{Foll},~\bfnm{Matthieu}\binits{M.}} \AND
  \bauthor{\bsnm{Dessimoz},~\bfnm{Christophe}\binits{C.}}
(\byear{2013}).
\btitle{Approximate {B}ayesian computation}.
\bjournal{PLOS Computational Biology}
\bvolume{9}
\bpages{e1002803}.
\bdoi{10.1371/journal.pcbi.1002803}
\end{barticle}
\endbibitem

\bibitem[\protect\citeauthoryear{Takhtaganov and
  M{\"u}ller}{2018}]{Takhtaganov2018}
\begin{barticle}[author]
\bauthor{\bsnm{Takhtaganov},~\bfnm{Timur}\binits{T.}} \AND
  \bauthor{\bsnm{M{\"u}ller},~\bfnm{Juliane}\binits{J.}}
(\byear{2018}).
\btitle{Adaptive Gaussian process surrogates for Bayesian inference}.
\bjournal{ArXiv}
\bvolume{abs/1809.10784}.
\bdoi{10.48550/arXiv.1809.10784}
\end{barticle}
\endbibitem

\bibitem[\protect\citeauthoryear{Teckentrup}{2020}]{TeckHyperpar}
\begin{barticle}[author]
\bauthor{\bsnm{Teckentrup},~\bfnm{Aretha~L.}\binits{A.~L.}}
(\byear{2020}).
\btitle{Convergence of Gaussian Process Regression with Estimated
  Hyper-Parameters and Applications in Bayesian Inverse Problems}.
\bjournal{SIAM/ASA Journal on Uncertainty Quantification}
\bvolume{8}
\bpages{1310-1337}.
\bdoi{10.1137/19M1284816}
\end{barticle}
\endbibitem

\bibitem[\protect\citeauthoryear{Tran et~al.}{2019}]{BayesLastLayer}
\begin{binproceedings}[author]
\bauthor{\bsnm{Tran},~\bfnm{Dustin}\binits{D.}},
  \bauthor{\bsnm{Dusenberry},~\bfnm{Mike}\binits{M.}},
  \bauthor{\bparticle{van~der} \bsnm{Wilk},~\bfnm{Mark}\binits{M.}} \AND
  \bauthor{\bsnm{Hafner},~\bfnm{Danijar}\binits{D.}}
(\byear{2019}).
\btitle{Bayesian Layers: A Module for Neural Network Uncertainty}.
In \bbooktitle{Advances in Neural Information Processing Systems}
(\beditor{\bfnm{H.}\binits{H.}~\bsnm{Wallach}},
  \beditor{\bfnm{H.}\binits{H.}~\bsnm{Larochelle}},
  \beditor{\bfnm{A.}\binits{A.}~\bsnm{Beygelzimer}},
  \beditor{\bfnm{F.}\binits{F.}~\bparticle{d\textquotesingle}
  \bsnm{Alch\'{e}-Buc}}, \beditor{\bfnm{E.}\binits{E.}~\bsnm{Fox}} \AND
  \beditor{\bfnm{R.}\binits{R.}~\bsnm{Garnett}}, eds.)
\bvolume{32}.
\bpublisher{Curran Associates, Inc.}
\bdoi{10.48550/arXiv.1812.03973}
\end{binproceedings}
\endbibitem

\bibitem[\protect\citeauthoryear{Varma et~al.}{2019}]{VarmaBlackHole2019}
\begin{barticle}[author]
\bauthor{\bsnm{Varma},~\bfnm{Vijay}\binits{V.}},
  \bauthor{\bsnm{Field},~\bfnm{Scott~E.}\binits{S.~E.}},
  \bauthor{\bsnm{Scheel},~\bfnm{Mark~A.}\binits{M.~A.}},
  \bauthor{\bsnm{Blackman},~\bfnm{Jonathan}\binits{J.}},
  \bauthor{\bsnm{Gerosa},~\bfnm{Davide}\binits{D.}},
  \bauthor{\bsnm{Stein},~\bfnm{Leo~C.}\binits{L.~C.}},
  \bauthor{\bsnm{Kidder},~\bfnm{Lawrence~E.}\binits{L.~E.}} \AND
  \bauthor{\bsnm{Pfeiffer},~\bfnm{Harald~P.}\binits{H.~P.}}
(\byear{2019}).
\btitle{Surrogate models for precessing binary black hole simulations with
  unequal masses}.
\bjournal{Physical Review Research}
\bvolume{1}.
\bdoi{10.1103/physrevresearch.1.033015}
\end{barticle}
\endbibitem

\bibitem[\protect\citeauthoryear{Villani, Unger and
  Weiser}{2024}]{VillaniAdaptiveGP}
\begin{bmisc}[author]
\bauthor{\bsnm{Villani},~\bfnm{Paolo}\binits{P.}},
  \bauthor{\bsnm{Unger},~\bfnm{Jörg}\binits{J.}} \AND
  \bauthor{\bsnm{Weiser},~\bfnm{Martin}\binits{M.}}
(\byear{2024}).
\btitle{Adaptive Gaussian Process Regression for Bayesian inverse problems}.
\bdoi{10.48550/arXiv.2404.19459}
\end{bmisc}
\endbibitem

\bibitem[\protect\citeauthoryear{Wang, Bui-Thanh and Ghattas}{2018}]{randMAP}
\begin{barticle}[author]
\bauthor{\bsnm{Wang},~\bfnm{Kainan}\binits{K.}},
  \bauthor{\bsnm{Bui-Thanh},~\bfnm{Tan}\binits{T.}} \AND
  \bauthor{\bsnm{Ghattas},~\bfnm{Omar}\binits{O.}}
(\byear{2018}).
\btitle{A Randomized Maximum A Posteriori Method for Posterior Sampling of High
  Dimensional Nonlinear Bayesian Inverse Problems}.
\bjournal{SIAM Journal on Scientific Computing}
\bvolume{40}
\bpages{A142-A171}.
\bdoi{10.1137/16M1060625}
\end{barticle}
\endbibitem

\bibitem[\protect\citeauthoryear{Wang and Li}{2018}]{wang2018adaptive}
\begin{barticle}[author]
\bauthor{\bsnm{Wang},~\bfnm{Hongqiao}\binits{H.}} \AND
  \bauthor{\bsnm{Li},~\bfnm{Jinglai}\binits{J.}}
(\byear{2018}).
\btitle{Adaptive gaussian process approximation for bayesian inference with
  expensive likelihood functions}.
\bjournal{Neural Comput.}
\bvolume{30}
\bpages{3072–3094}.
\bdoi{10.1162/neco_a_01127}
\end{barticle}
\endbibitem

\bibitem[\protect\citeauthoryear{Wilkinson}{2014}]{WilkinsonABCGP}
\begin{binproceedings}[author]
\bauthor{\bsnm{Wilkinson},~\bfnm{Richard}\binits{R.}}
(\byear{2014}).
\btitle{{Accelerating ABC methods using Gaussian processes}}.
In \bbooktitle{Proceedings of the Seventeenth International Conference on
  Artificial Intelligence and Statistics}
(\beditor{\bfnm{Samuel}\binits{S.}~\bsnm{Kaski}} \AND
  \beditor{\bfnm{Jukka}\binits{J.}~\bsnm{Corander}}, eds.).
\bseries{Proceedings of Machine Learning Research}
\bvolume{33}
\bpages{1015--1023}.
\bpublisher{PMLR}, \baddress{Reykjavik, Iceland}.
\end{binproceedings}
\endbibitem

\bibitem[\protect\citeauthoryear{Wilson, Hutter and Deisenroth}{2018}]{maxAcq}
\begin{binproceedings}[author]
\bauthor{\bsnm{Wilson},~\bfnm{James}\binits{J.}},
  \bauthor{\bsnm{Hutter},~\bfnm{Frank}\binits{F.}} \AND
  \bauthor{\bsnm{Deisenroth},~\bfnm{Marc}\binits{M.}}
(\byear{2018}).
\btitle{Maximizing acquisition functions for Bayesian optimization}.
In \bbooktitle{Advances in Neural Information Processing Systems}
(\beditor{\bfnm{S.}\binits{S.}~\bsnm{Bengio}},
  \beditor{\bfnm{H.}\binits{H.}~\bsnm{Wallach}},
  \beditor{\bfnm{H.}\binits{H.}~\bsnm{Larochelle}},
  \beditor{\bfnm{K.}\binits{K.}~\bsnm{Grauman}},
  \beditor{\bfnm{N.}\binits{N.}~\bsnm{Cesa-Bianchi}} \AND
  \beditor{\bfnm{R.}\binits{R.}~\bsnm{Garnett}}, eds.)
\bvolume{31}.
\bpublisher{Curran Associates, Inc.}
\end{binproceedings}
\endbibitem

\bibitem[\protect\citeauthoryear{Wilson et~al.}{2020}]{EfficientSampGPPost}
\begin{binproceedings}[author]
\bauthor{\bsnm{Wilson},~\bfnm{James~T.}\binits{J.~T.}},
  \bauthor{\bsnm{Borovitskiy},~\bfnm{Viacheslav}\binits{V.}},
  \bauthor{\bsnm{Terenin},~\bfnm{Alexander}\binits{A.}},
  \bauthor{\bsnm{Mostowsky},~\bfnm{Peter}\binits{P.}} \AND
  \bauthor{\bsnm{Deisenroth},~\bfnm{Marc~Peter}\binits{M.~P.}}
(\byear{2020}).
\btitle{Efficiently sampling functions from Gaussian process posteriors}.
In \bbooktitle{Proceedings of the 37th International Conference on Machine
  Learning}.
\bseries{ICML'20}.
\bpublisher{JMLR.org}.
\end{binproceedings}
\endbibitem

\bibitem[\protect\citeauthoryear{Wilson et~al.}{2021}]{pathwiseCond}
\begin{barticle}[author]
\bauthor{\bsnm{Wilson},~\bfnm{James~T.}\binits{J.~T.}},
  \bauthor{\bsnm{Borovitskiy},~\bfnm{Viacheslav}\binits{V.}},
  \bauthor{\bsnm{Terenin},~\bfnm{Alexander}\binits{A.}},
  \bauthor{\bsnm{Mostowsky},~\bfnm{Peter}\binits{P.}} \AND
  \bauthor{\bsnm{Deisenroth},~\bfnm{Marc~Peter}\binits{M.~P.}}
(\byear{2021}).
\btitle{Pathwise Conditioning of Gaussian Processes}.
\bjournal{Journal of Machine Learning Research}
\bvolume{22}
\bpages{1--47}.
\end{barticle}
\endbibitem

\bibitem[\protect\citeauthoryear{Wood}{2010}]{WoodSL}
\begin{barticle}[author]
\bauthor{\bsnm{Wood},~\bfnm{Simon~N.}\binits{S.~N.}}
(\byear{2010}).
\btitle{Statistical inference for noisy nonlinear ecological dynamic systems}.
\bjournal{Nature}
\bvolume{466}
\bpages{1102--1104}.
\bdoi{10.1038/nature09319}
\end{barticle}
\endbibitem

\bibitem[\protect\citeauthoryear{Wynn}{1970}]{WynnDiscreteExchange}
\begin{barticle}[author]
\bauthor{\bsnm{Wynn},~\bfnm{Henry~P.}\binits{H.~P.}}
(\byear{1970}).
\btitle{The Sequential Generation of D-Optimum Experimental Designs}.
\bjournal{The Annals of Mathematical Statistics}
\bvolume{41}
\bpages{1655--1664}.
\end{barticle}
\endbibitem

\bibitem[\protect\citeauthoryear{Xu et~al.}{2024}]{adaptiveMultimodal}
\begin{bmisc}[author]
\bauthor{\bsnm{Xu},~\bfnm{Zhihang}\binits{Z.}},
  \bauthor{\bsnm{Zhu},~\bfnm{Xiaoyu}\binits{X.}},
  \bauthor{\bsnm{Li},~\bfnm{Daoji}\binits{D.}} \AND
  \bauthor{\bsnm{Liao},~\bfnm{Qifeng}\binits{Q.}}
(\byear{2024}).
\btitle{An adaptive Gaussian process method for multi-modal Bayesian inverse
  problems}.
\end{bmisc}
\endbibitem

\bibitem[\protect\citeauthoryear{Yeh et~al.}{2017}]{PODemulation}
\begin{bmisc}[author]
\bauthor{\bsnm{Yeh},~\bfnm{Shiang-Ting}\binits{S.-T.}},
  \bauthor{\bsnm{Wang},~\bfnm{Xingjian}\binits{X.}},
  \bauthor{\bsnm{Sung},~\bfnm{Chih-Li}\binits{C.-L.}},
  \bauthor{\bsnm{Mak},~\bfnm{Simon}\binits{S.}},
  \bauthor{\bsnm{Chang},~\bfnm{Yu-Hung}\binits{Y.-H.}},
  \bauthor{\bsnm{Zhang},~\bfnm{Liwei}\binits{L.}},
  \bauthor{\bsnm{Wu},~\bfnm{C.~F.~Jeff}\binits{C.~F.~J.}} \AND
  \bauthor{\bsnm{Yang},~\bfnm{Vigor}\binits{V.}}
(\byear{2017}).
\btitle{Data-Driven Analysis and Common Proper Orthogonal Decomposition
  (CPOD)-Based Spatio-Temporal Emulator for Design Exploration}.
\bdoi{10.48550/arXiv.1709.07841}
\end{bmisc}
\endbibitem

\bibitem[\protect\citeauthoryear{Zammit-Mangion, Sainsbury-Dale and
  Huser}{2025}]{NeuralAmortizedReview}
\begin{barticle}[author]
\bauthor{\bsnm{Zammit-Mangion},~\bfnm{Andrew}\binits{A.}},
  \bauthor{\bsnm{Sainsbury-Dale},~\bfnm{Matthew}\binits{M.}} \AND
  \bauthor{\bsnm{Huser},~\bfnm{Raphaël}\binits{R.}}
(\byear{2025}).
\btitle{Neural Methods for Amortized Inference}.
\bjournal{Annual Review of Statistics and Its Application}
\bvolume{12}
\bpages{311-335}.
\bdoi{https://doi.org/10.1146/annurev-statistics-112723-034123}
\end{barticle}
\endbibitem

\bibitem[\protect\citeauthoryear{Zeng
  et~al.}{2022}]{turbulenceModelAdaptiveLHS}
\begin{barticle}[author]
\bauthor{\bsnm{Zeng},~\bfnm{Fanzhi}\binits{F.}},
  \bauthor{\bsnm{Zhang},~\bfnm{Wei}\binits{W.}},
  \bauthor{\bsnm{Li},~\bfnm{Jinping}\binits{J.}},
  \bauthor{\bsnm{Zhang},~\bfnm{Tianxin}\binits{T.}} \AND
  \bauthor{\bsnm{Yan},~\bfnm{Chao}\binits{C.}}
(\byear{2022}).
\btitle{Adaptive Model Refinement Approach for Bayesian Uncertainty
  Quantification in Turbulence Model}.
\bjournal{AIAA Journal}
\bvolume{60}
\bpages{3502-3516}.
\bdoi{10.2514/1.J060889}
\end{barticle}
\endbibitem

\bibitem[\protect\citeauthoryear{Zhang et~al.}{2016}]{hydrologicalModel2}
\begin{barticle}[author]
\bauthor{\bsnm{Zhang},~\bfnm{Jiangjiang}\binits{J.}},
  \bauthor{\bsnm{Li},~\bfnm{Weixuan}\binits{W.}},
  \bauthor{\bsnm{Zeng},~\bfnm{Lingzao}\binits{L.}} \AND
  \bauthor{\bsnm{Wu},~\bfnm{Laosheng}\binits{L.}}
(\byear{2016}).
\btitle{An adaptive Gaussian process-based method for efficient Bayesian
  experimental design in groundwater contaminant source identification
  problems}.
\bjournal{Water Resources Research}
\bvolume{52}
\bpages{5971-5984}.
\bdoi{https://doi.org/10.1002/2016WR018598}
\end{barticle}
\endbibitem

\bibitem[\protect\citeauthoryear{Zhang et~al.}{2020}]{hydrologicalModel}
\begin{barticle}[author]
\bauthor{\bsnm{Zhang},~\bfnm{Jiangjiang}\binits{J.}},
  \bauthor{\bsnm{Zheng},~\bfnm{Qiang}\binits{Q.}},
  \bauthor{\bsnm{Chen},~\bfnm{Dingjiang}\binits{D.}},
  \bauthor{\bsnm{Wu},~\bfnm{Laosheng}\binits{L.}} \AND
  \bauthor{\bsnm{Zeng},~\bfnm{Lingzao}\binits{L.}}
(\byear{2020}).
\btitle{Surrogate-Based Bayesian Inverse Modeling of the Hydrological System:
  An Adaptive Approach Considering Surrogate Approximation Error}.
\bjournal{Water Resources Research}
\bvolume{56}
\bpages{e2019WR025721}.
\bnote{e2019WR025721 2019WR025721}.
\bdoi{https://doi.org/10.1029/2019WR025721}
\end{barticle}
\endbibitem

\bibitem[\protect\citeauthoryear{Özge Sürer}{2026}]{SurerBatchStochastic}
\begin{barticle}[author]
\bauthor{\bparticle{Özge} \bsnm{Sürer}}
(\byear{2026}).
\btitle{Batch Sequential Experimental Design for Calibration of Stochastic
  Simulation Models}.
\bjournal{Technometrics}
\bvolume{68}
\bpages{1--13}.
\bdoi{10.1080/00401706.2025.2520860}
\end{barticle}
\endbibitem

\bibitem[\protect\citeauthoryear{Özge Sürer, Plumlee and
  Wild}{2024}]{Surer2023sequential}
\begin{barticle}[author]
\bauthor{\bparticle{Özge} \bsnm{Sürer}},
  \bauthor{\bsnm{Plumlee},~\bfnm{Matthew}\binits{M.}} \AND
  \bauthor{\bsnm{Wild},~\bfnm{Stefan~M.}\binits{S.~M.}}
(\byear{2024}).
\btitle{Sequential Bayesian Experimental Design for Calibration of Expensive
  Simulation Models}.
\bjournal{Technometrics}
\bvolume{66}
\bpages{157--171}.
\bdoi{10.1080/00401706.2023.2246157}
\end{barticle}
\endbibitem

\end{thebibliography}

\end{document}